\newcommand\numberthis{\addtocounter{equation}{1}\tag{\theequation}}
\theoremstyle{plain}
\newtheorem{theorem}{Theorem}[section]
\newtheorem{lemma}[theorem]{Lemma}
\theoremstyle{definition}
\newtheorem{definition}{Definition}[section]
\theoremstyle{remark}
\title{\textbf{Hybrid Partial Least Squares Regression \\
with Multiple Functional and Scalar Predictors}}
\author{
  \textbf{Jongmin Mun} \\
  Department of Data Sciences and Operations, Marshall School of Business \\
  University of Southern California \\
  \texttt{jongmin.mun@marshall.usc.edu}
  \and
  \textbf{Jeong Hoon Jang} \\
  Department of Biostatistics and Data Science \\
  University of Texas Medical Branch \\
  \texttt{jejang@utmb.edu}
}
\date{\today}
\begin{document}

\maketitle

\begin{abstract}
Motivated by renal imaging studies that combine renogram curves with pharmacokinetic and demographic covariates, we propose  Hybrid partial least squares (Hybrid PLS)   for simultaneous supervised dimension reduction and regression in the presence of cross-modality correlations.
The proposed approach embeds   multiple functional and scalar predictors into a unified hybrid Hilbert space and  rigorously extends the nonlinear iterative PLS (NIPALS) algorithm. This theoretical development is complemented by a sample-level algorithm that incorporates roughness penalties to control smoothness. By exploiting the rank-one structure of the resulting optimization problem, the algorithm admits a computationally efficient closed-form solution that requires solving only linear systems at each iteration.
We establish fundamental geometric properties of the proposed framework, including orthogonality of the latent scores and PLS directions. Extensive numerical studies on synthetic data, together with an application to a renal imaging study, validate these theoretical results and demonstrate the method's ability to recover predictive structure under intermodal multicollinearity, yielding parsimonious low-dimensional representations.
    
    \medskip
    \noindent \textbf{Keywords:} Dimension reduction,
functional data analysis,
multiple data modalities,
multivariate data analysis,
multivariate functional data,
partial least square.
\end{abstract}
	\section{Introduction} \label{sec: Introduction}
In biomedical studies, responses are often influenced by both scalar covariates and multiple functional trajectories.
Such settings motivate the study of partially functional linear models (PFLM; \citealp{Ramsay2005}) with $K$ functional predictors and $p$ scalar predictors \citep{xu_estimation_2020}.
For notational simplicity, we assume throughout that the response and predictors are centered.
Let us denote  $\langle f, g \rangle_{\mathbb{L}_2} := \int f(t)g(t)\,dt$.
The PFLM links the response $Y_i \in \mathbb{R}$ to the predictors via
\begin{equation*}\label{PFLM}
Y_i = \mathbf{Z}_i^\top \boldsymbol{\beta} + \sum_{k=1}^K \langle X_{ik}, \beta_k \rangle_{\mathbb{L}_2} + \epsilon_i, \quad i = 1, \ldots, n,
\end{equation*}
 where $\mathbf{Z}_i \in \mathbb{R}^p$ denotes scalar covariates with regression coefficients $\boldsymbol{\beta}$, $X_{ik}$ is the $k$-th functional predictor with coefficient function $\beta_k$, and $\epsilon_i$ is an additive random error.
\begin{figure}[t!]
    \centering
    \includegraphics[width=0.99\linewidth]{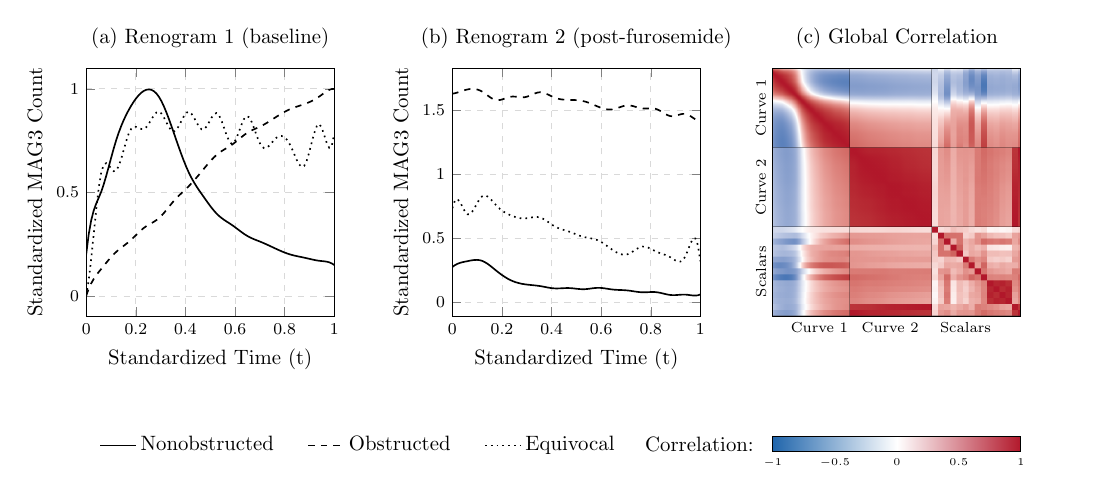}
   \caption{Typical renogram profiles from the Emory University renal study. For each kidney, two functional curves are recorded: (a) the baseline renogram curve and (b) the post-furosemide renogram curve. Lines represent three example cases: nonobstructed (solid), obstructed (dashed), and equivocal (dotted). (c) Heatmap of the correlation matrix for all predictors across 216 subjects, highlighting the strong dependence between the two functional modalities and the fourteen scalar covariates.}

    \label{fig:EDA}
\end{figure}
Analyzing this hybrid model poses two fundamental challenges: achieving parsimony and mitigating multicollinearity.
Parsimonious representations are essential for interpretability, particularly when clinical scores and decision rules are of interest.
Multicollinearity is especially challenging in this setting, as it arises at multiple levels:
(i) within-function correlations, (ii) between-function correlations, (iii) correlations among scalar predictors,
 and (iv) intermodal correlations between functional and scalar predictors.
Ignoring these dependencies can substantially inflate estimation variance, leading to overfitting and 
 degraded predictive performance at deployment.

Our motivating example is the Emory University renal study \citep{changBayesianLatentClass2020, jangPrincipalComponentAnalysis2021}, which collects two renogram curves (functional data) per kidney for the purpose of detecting kidney obstruction. Following intravenous injection of the tracer $^{99\mathrm{m}}$Tc-MAG3, the kidneys filter the tracer and transport it toward the bladder; renogram curves are obtained by tracking the tracer photon (MAG3) count within the kidneys over time. Owing to this tracer-based recording process, previous analyses have typically relied on scalar pharmacokinetic variables---summary measures characterizing tracer uptake, transport, and clearance, such as maximum intensity, time to minimum velocity, and area under the curve---to train predictive models. However, this strategy risks discarding subtle structural information contained in the full functional data, as illustrated in Figures~\ref{fig:EDA}-(a) and \ref{fig:EDA}-(b). A more comprehensive approach is to jointly incorporate the complete functional curves and their scalar summaries within a PFLM, thereby fully leveraging their predictive potential. This joint modeling induces strong correlations between the two modalities, as shown in Figure~\ref{fig:EDA}-(c), making this dataset an ideal test case for the setting considered here. Additional details on the dataset are provided in Section~\ref{section:real_data}.

Most existing methods are designed for homogeneous data structures.
High-dimensional Euclidean regression typically relies on structured regularization, such as lasso \citep{tibshiraniRegressionShrinkageSelection1996},
SCAD \citep{fanVariableSelectionNonconcave2001}, or 
elastic net \citep{zouRegularizationVariableSelection2005}.
Another class of methods relies on dimension reduction techniques. Principal component regression (PCR; \citealp{agarwalRobustnessPrincipalComponent2021}) performs unsupervised dimension reduction. However, in many applications, principal components explaining only a small fraction of the total predictor variance can nonetheless be critical for capturing association with the response. Such cases arise in chemical engineering \citep{smith_critique_1980}, meteorology \citep{kung_regression_1980}, and economics \citep{hill_component_1977}; see \cite{jolliffeNoteUsePrincipal1982} for a detailed discussion. To address this limitation, partial least squares (PLS) regression \citep{woldPathModelsLatent1975} employs supervised dimension reduction.
Functional regression methods, by contrast, focus on processing functional predictors using basis expansions with roughness penalties, functional PCA \citep{reissFunctionalPrincipalComponent2007, aguileraUsingBasisExpansions2010, aguileraPenalizedVersionsFunctional2016}, or functional PLS \citep{predaPLSRegressionStochastic2005}.

PFLM
\citep{shin_prediction_2012, kong_partially_2016, lv_kernelbased_2023}
 integrate functional and scalar predictors through a two-step procedure: functional predictors are first converted into finite-dimensional Euclidean representations, followed by an integrated regression with scalar covariates. However, these approaches overlook the complex joint dependencies between predictors, and the functional component is processed in a response-agnostic manner, as the primary emphasis is on regression coefficient estimation rather than prediction.

To address these limitations, we propose a \textit{hybrid PLS regression} framework. By constructing a unified Hilbert space that jointly represents functional and Euclidean predictors, our approach enables a hybrid extension of the nonlinear iterative partial least squares (NIPALS; \citealp{woldPathModelsLatent1975}) algorithm. The proposed method alternates between (i) extracting joint latent directions that maximize covariance between the predictors and the response and (ii) residualizing both predictors and response. This framework naturally captures all four layers of correlation, preserves predictive power, and yields parsimonious representations. Moreover, by employing a rich basis expansion together with smoothness regularization of the hybrid PLS directions, the method borrows strength across predictors and accommodates dense or irregular functional data. The algorithm remains computationally efficient, relying solely on linear system solutions. The resulting hybrid PLS scores are mutually uncorrelated and provide a clinically interpretable representation of the data.

\subsection{Related literature}
The challenges of achieving parsimony and mitigating multicollinearity have been extensively studied in high-dimensional regression with Euclidean predictors. We briefly review the main methodological developments relevant to our work.

A prominent class of approaches relies on structured regularization, which assumes that only a small, structured subset of predictors drives the signal. Representative methods include the lasso \citep{tibshiraniRegressionShrinkageSelection1996}, SCAD \citep{fanVariableSelectionNonconcave2001}, elastic net \citep{zouRegularizationVariableSelection2005}, adaptive lasso \citep{zouAdaptiveLassoIts2006}, group lasso \citep{yuanModelSelectionEstimation2006}, the Dantzig selector \citep{candesDantzigSelectorStatistical2007}, and sure independence screening \citep{fanSureIndependenceScreening2008}. The fused lasso \citep{tibshiraniSparsitySmoothnessFused2005, tibshiraniSolutionPathGeneralized2011} is particularly relevant for functional settings, as it enforces piecewise-constant structure and smoothness in regression coefficients.

Another major line of research focuses on dimension reduction. Principal component analysis (PCA) regression \citep{agarwalRobustnessPrincipalComponent2021} performs unsupervised reduction based solely on predictor variation, whereas partial least squares (PLS) regression \citep{woldPathModelsLatent1975, dejongSIMPLSAlternativeApproach1993, hellandComparisonPredictionMethods1994} carries out supervised reduction by explicitly incorporating the response. PLS has been widely adopted in high-dimensional genomics \citep{boulesteixPartialLeastSquares2007, chunExpressionQuantitativeTrait2009, chunSparsePartialLeast2010} and  chemometrics  \citep{woldPLSregressionBasicTool2001, broPLSWorks2009, mehmoodReviewVariableSelection2012, mehmoodDiversityApplicationsPartial2016}. Its success has further motivated extensions to multi-way and tensor data, including applications to electrocorticogram-derived movement trajectories \citep{zhaoHigherOrderPartial2013} and single-cell 3D genome organization \citep{parkSparseHigherOrder2024}.

In functional data settings, high dimensionality naturally arises from dense measurements over time or space, making regression problems ill-posed. This is typically addressed by leveraging the smoothness of functional predictors. Standard scalar-on-function linear regression methods combine basis expansions with roughness penalties \citep{cardotSplineEstimatorsFunctional2003, aguileraPenalizedVersionsFunctional2016, caiPredictionFunctionalLinear2006, zhaoWaveletbasedLASSOFunctional2012}.
A related problem, functional data denoising, has been addressed through trend filtering, which imposes piecewise polynomial structure via $\ell_1$ regularization on successive differences \citep{kimell_1TrendFiltering2009, tibshiraniAdaptivePiecewisePolynomial2014, sadhanalaAdditiveModelsTrend2019, wakayamaTrendFilteringFunctional2023}.

Data-driven basis approaches introduce parsimony by representing functional predictors using functional principal components (FPC) or functional partial least squares (FPLS) directions. FPC regression \citep{hallMethodologyConvergenceRates2007, reissFunctionalPrincipalComponent2007, febrero-bandeFunctionalPrincipalComponent2017} provides a low-dimensional representation but may fail to capture the true regression relationship, as components are not guided by the response. In contrast, FPLS \citep{predaPLSRegressionStochastic2005, reissFunctionalPrincipalComponent2007, aguileraUsingBasisExpansions2010, aguileraPenalizedVersionsFunctional2016} constructs orthogonal latent components that explicitly maximize covariance with the response, improving predictive performance. Both theoretical \citep{delaigleMethodologyTheoryPartial2012} and computational \citep{saricamPartialLeastsquaresEstimation2022} developments have further advanced FPLS; for a comprehensive review, see \cite{febrero-bandeFunctionalPrincipalComponent2017}.
To account for correlations among multiple functional predictors, methods such as multiple FPC regression \citep{happ_multivariate_2018} and multiple FPLS \citep{beyaztasRobustFunctionalPartial2022} treat the set of functional predictors as a unified object and derive joint components. However, these approaches do not accommodate mixed predictor types that include both functional and scalar variables.

  Integration of functional and scalar predictors has primarily been studied in the PFLM framework \citep{Ramsay2005}, which focuses on estimating regression coefficients rather than predictive accuracy. Most PFLM approaches follow a two-stage procedure: (i) an unsupervised stage in which each functional predictor is represented in a finite-dimensional basis (e.g., splines or functional principal components) without reference to the response, followed by (ii) (penalized) maximum likelihood estimation combining the vectorized functional predictors with scalar covariates (Table \ref{tab:pflm_summary}).

These two-stage methods have two key limitations: they ignore correlations between functional and scalar predictors, and the functional representations are response-agnostic. While cross-modality correlations have been explored in unsupervised settings such as graphical models \citep{kolar_graph_2014} and precision matrix estimation \citep{gengJointNonparametricPrecision2020}, these approaches do not explicitly handle settings that involve both functional and scalar predictors.

% Define a ragged-right version of X column
\newcolumntype{Y}{>{\raggedright\arraybackslash}X}

\begin{table}[b!]
\centering
\renewcommand{\arraystretch}{1.2} % adds vertical spacing for readability
\caption{Summary of two-stage methods for integrating functional and scalar predictors in PFLM framework}
\vspace{1em}
\label{tab:pflm_summary}
\begin{tabularx}{\textwidth}{l c Y Y} % l = left, c = center, Y = ragged-right auto width
\hline
Reference & Functional predictor & Stage I (Functional processing) & Stage II (Regression) \\
\hline
\cite{zhang_twostage_2007} 
& Single  
& Smoothing splines 
& Linear mixed-effects model \\

\cite{shin_partial_2009} 
& Single  
& Functional PCA 
& Ordinary least squares \\

\cite{shin_prediction_2012} 
& Single  
& Functional PCA 
& Tikhonov regularization \\

\cite{lv_kernelbased_2023}
& Single
& RKHS kernel-based basis (via representer theorem) 
& $\ell_2$ regularization (functional) \& $\ell_1$ regularization (scalar) \\

\cite{kong_partially_2016} 
& Multiple  
& Functional PCA 
& $\ell_1$ regularization \\

\cite{cai_highdimensional_2025} 
& Multiple  
& Functional PCA 
& Weighted least squares \\

\cite{goldsmithPenalizedFunctionalRegression2011} 
& Multiple 
& Functional PCA 
& Regularized linear mixed-effects model \\
\hline
\end{tabularx}
\end{table}

\subsection{Notations}
We use boldface to denote Euclidean vectors. Matrices are represented by bold uppercase Latin or Greek letters, or by blackboard bold characters (e.g., $\mathbb{B}$). In contrast, objects residing in non-Euclidean Hilbert spaces are denoted using non-bold characters.

\subsection{Outline of the Paper}
The remainder of this paper is organized as follows. Section~\ref{section:scalar_pls} reviews Euclidean PLS. Section~\ref{section:hybrid_hilbert} introduces the hybrid Hilbert space framework and the population-level algorithm, while Section~\ref{section:main:our_algorithm} details the sample-level implementation and computational scheme. The theoretical properties of the algorithm are established in Section~\ref{section:sub:geom}. In Section~\ref{section:experiments}, we present simulation studies and a real data analysis. Finally, we conclude with a discussion in Section~\ref{discussion}. All proofs are provided in the \hyperref[appendix]{appendix}.

\section{Background on Partial Least Squares Regression}\label{section:scalar_pls}
This section provides a brief introduction to Nonlinear Iterative Partial Least Squares (NIPALS; \citealp{woldPathModelsLatent1975}), the standard algorithm for fitting PLS regression in Euclidean spaces. Consider the high-dimensional regression model:
\begin{equation*}
Y_i = \boldsymbol{\beta}^\top \mathbf{Z}_i + \epsilon_i, \quad i = 1, \ldots, n,
\end{equation*}
where $\mathbf{Z}_i \in \mathbb{R}^p$ represents the predictor vector, typically in the regime where the dimension $p$ far exceeds the sample size $n$ ($p \gg n$).
To address high multicollinearity in $\mathbf{Z}_i$, a common approach is to approximate $\mathbf{Z}_i$ using a low-dimensional score vector $\hat{\boldsymbol{\rho}}_i := (\widehat{\rho}_{i}^{[1]}, \ldots, \widehat{\rho}_i^{[L]})^\top \in \mathbb{R}^L$ with mutually uncorrelated entries, which serve as coefficients with respect to a chosen set of basis vectors (or directions). Ordinary least squares (OLS) is then performed on these scores.
While principal component analysis (PCA) constructs these directions to capture only the variation within the predictors, 
partial least squares (PLS) constructs them to explicitly maximize both predictor variation and the correlation with the response. 
 Let $\mathbf{Z} \in \mathbb{R}^{n \times p}$ denote the data matrix. 
 Assuming the data has been centered,
 the $l$-th PLS direction $\hat{\boldsymbol{\xi}}^{[l]}$ is defined as the solution to the constrained squared covariance maximization problem:
\begin{align*}
\max_{ \mathbf{ h } \in \mathbb{R}^p }&
\biggl( 
\frac{1}{n } \sum_{i=1}^n \langle \mathbf{h}, \mathbf{Z}_i   \rangle Y_i  
\biggr)^2
\numberthis \label{scalar_PLS_sample}
\\
\text{s.t.}&~\| \mathbf{ h } \|_2 = 1,~
\mathbf{h}^\top (\mathbf{Z}^\top \mathbf{Z}) \hat{\boldsymbol{\xi}}^{[j]} = 0, 
\quad j = 1, \ldots, l-1.
\end{align*}
The NIPALS algorithm,
summarized in Algorithm~\ref{alg:scalar_pls},
solves this by iteratively alternating between covariance maximization and residualization. 	\begin{algorithm}[b!] 
		\caption{Scalar partial least squares regression}\label{alg:scalar_pls}
		\begin{algorithmic}[1]
			\State \textbf{Input:} Standardized $\mathbf{Z}_1, \ldots, \mathbf{Z}_n$ and $Y_1, \ldots, Y_n$. 
			\For{$l = 1, 2, \ldots, L$}
			\\
			\textbf{PLS direction and score estimation:}
			\State $\widehat{\boldsymbol{\xi}}^{[l]} \leftarrow
			\arg \max_{  \boldsymbol{\alpha}   } 
            \biggl( 
\frac{1}{n } \sum_{i=1}^n \langle \boldsymbol{\alpha}, \mathbf{Z}^{[l]}_i   \rangle Y_i^{[l]}  
\biggr)^2
		~\text{s.t.}~\| \boldsymbol{\alpha} \|_2 = 1
			$
			\Comment{PLS direction}\label{alg:step:scalar_pls_direction}
			\State 
			$
			\widehat{\rho}_i^{[l]} \leftarrow \langle \hat{\boldsymbol{\xi}}^{[l]},   \mathbf{Z}^{[l]}_i \rangle, 
			~i=1,  \ldots, n
			$ \Comment{PLS score}
			\\
			\textbf{Residualization:}
			\State $\nu^{[l]} 
			\leftarrow
			\frac{
				\sum_{i=1}^n Y_i^{[l]} \widehat{\rho}_{i}^{[l]}}{
				\sum_{i=1}^n \widehat{\rho}_{i}^{[l]2}
			}$ \Comment{Least squares estimate} 
			\State $ Y_i^{[l+1]} \leftarrow Y_i^{[l]} - \nu^{[l]}\widehat{\rho}_i^{[l]}~i=1,  \ldots, n$
			\State $ \widehat{\boldsymbol{\delta}}^{[l]}  \leftarrow \frac{1}{\sum_{i=1}^n \widehat{\rho}_{i}^{[l]2}}\sum_{i=1}^n  \widehat{\rho}_{i}^{[l]}\mathbf{Z}_i^{[l]} $ \Comment{Least squares estimate} 
			\State $\mathbf{Z}_i^{[l+1]}  \leftarrow \mathbf{Z}_i^{[l]} -   \widehat{\rho}_{i}^{[l]}  \widehat{\boldsymbol{\delta}}^{[l]} $
			\EndFor
			\textbf{Regression coefficient estimation:}
			\State
			$\widehat{\boldsymbol{\iota}}^{[1]} \leftarrow \widehat{\boldsymbol{\xi}}^{[1]}$
			\For{$l =  2, \ldots, L$}
			\State $\widehat{\boldsymbol{\iota}}^{[l]} \leftarrow \widehat{ \boldsymbol{\xi} }^{[l]} - \sum_{u=1}^{l-1} \langle \widehat{\boldsymbol{\delta}}^{[u]}, \widehat{\boldsymbol{\xi}}^{[l]} \rangle 
			\widehat{\boldsymbol{\iota}}^{[u]}$  
			\EndFor
			$	\widehat{\boldsymbol{\beta}} \leftarrow \sum \limits_{l=1}^L\widehat{\nu}^{[l]} \widehat{\boldsymbol{\iota}}^{[l]}$ \Comment{regression coefficient estimate}
			\State \textbf{Output:} the  regression coefficient estimate
		\end{algorithmic}
	\end{algorithm}
%%%
Extending Algorithm \ref{alg:scalar_pls} to the hybrid model \eqref{PFLM} requires jointly handling correlated functional and scalar components. A naive pointwise approach, which concatenates functional evaluations and scalar predictors into a single Euclidean vector, is computationally prohibitive, incompatible with irregular data, and prone to overfitting.
 The extension to the hybrid setting faces three primary challenges :\begin{enumerate}\item Independent variables consist of multiple highly structured functional images and scalar predictors.\item Sample sizes are small relative to the high dimensionality of the combined predictors.\item Existing PLS methods typically accommodate only functional predictors \citep{predaPLSRegressionStochastic2005, delaigleMethodologyTheoryPartial2012} or a single functional predictor with scalar covariates \citep{Wang2018}, lacking a general framework for complex hybrid data.\end{enumerate}The next section proposes a new framework to address these limitations.

\section{Population-Level Hybrid PLS}\label{section:hybrid_hilbert}We now formulate the population-level PLS algorithm by defining a unified Hilbert space for hybrid predictors. This framework allows us to extend the classical NIPALS algorithm, which relies solely on inner product operations, to the hybrid setting.
\subsection{The Hybrid Hilbert Space}
To treat functional and scalar predictors within a single mathematical structure, we introduce the product space $\mathbb{H}$.
\begin{definition}[Hybrid Space]
\label{def:hilbert_space}
Let $\mathbb{H} := (\mathbb{L}^2[0,1])^K \times \mathbb{R}^p$.
An element $h \in \mathbb{H}$ is an ordered tuple $h = (f_1, \dots, f_K, \mathbf{u})$.
For any $h_1, h_2 \in \mathbb{H}$, we define the inner product:
\begin{equation}
\label{eq: hybrid inner product}
\langle h_1, h_2 \rangle_{\mathbb{H}} := \sum_{k=1}^K \int_0^1 f_{1k}(t) f_{2k}(t) \, dt 
+ 
\omega \mathbf{u}_1^\top \mathbf{u}_2,\end{equation}
where $\omega > 0$ is a weight parameter controlling the relative importance of the scalar component. This inner product induces the norm $\Vert h \Vert_{\mathbb{H}} := \langle h, h \rangle_{\mathbb{H}}^{1/2}$.
\end{definition}
The weight $\omega$ addresses the heterogeneity between functional and scalar parts, particularly regarding measurement scales (discussed further in Section \ref{subsec: Data Preprocessing}). 
For theoretical clarity, we assume $\omega=1$ in this section, noting that all results generalize to any $\omega > 0$.
Under this inner product, $\mathbb{H}$ is a separable Hilbert space (see Appendix \ref{section:proof:lemma:hybrid_hilbert_space} for details). Our method naturally generalizes to settings where each functional predictor resides in a unique Hilbert space, potentially defined over distinct compact domains in $\mathbb{R}^d$ with varying observation points. However, for notational simplicity, we assume a shared Hilbert space over $[0,1]$ for all functional predictors.

We define the hybrid predictor $W = (X_1, \ldots, X_K, \mathbf{Z})$ as a random element taking values in $\mathbb{H}$, measurable with respect to the Borel $\sigma$-field $\mathbbm{B}(\mathbb{H})$. The joint regression model \eqref{PFLM} can then be written concisely as:\begin{equation}\label{eq: Hybrid functional model}Y = \langle \beta, W \rangle_{\mathbb{H}} + \epsilon, \quad \text{where } \beta \in \mathbb{H}.\end{equation}
Throughout this section, we assume centered variables, $\mathbb{E}[Y] = 0$ and $\mathbb{E}[W] = 0$.

\subsection{Covariance Operators and PLS Directions}\label{section:operators}
The core of partial least squares (PLS) involves maximizing the squared covariance between the predictor and the response. 
To characterize this at the population level, we first define the covariance operator $\Sigma_W$ such that, for $u,v \in \mathbb{H}$, 
\begin{align*}
\Sigma_W u &= \mathbb{E}\big[ \langle W, u \rangle_{\mathbb{H}} \, W \big], \quad
\langle \Sigma_W u, v \rangle_{\mathbb{H}} = \mathbb{E}\big[ \langle W, u \rangle_{\mathbb{H}} \, \langle W, v \rangle_{\mathbb{H}} \big].
\end{align*}
Next, we define the cross-covariance vector 
\[
\Sigma_{YW} := \mathbb{E}[\, Y W \,] \in \mathbb{H},
\] 
which is composed of functional components  and scalar components 
\[
\sigma_{YX} := \big( \mathbb{E}[Y X_1], \ldots, \mathbb{E}[Y X_K] \big)
\quad
\sigma_{YZ} := \big( \mathbb{E}[Y Z_1], \ldots, \mathbb{E}[Y Z_p] \big)^\top.
\] 
This, in turn, induces the cross-covariance operator 
$
\mathcal{C}_{YW} : \mathbb{H} \to \mathbb{R}$, defined as
$\mathcal{C}_{YW} h := \langle \Sigma_{YW}, h \rangle_\mathbb{H}.
$
Finally, we define the composite operator $\mathcal{U}: \mathbb{H} \to \mathbb{H}$ as $\mathcal{U} := \Sigma_{YW} \otimes \Sigma_{YW}$, which acts on any $h \in \mathbb{H}$ via:
		\begin{equation*}
			\mathcal{U} h 
			= 
			\mathcal{C}_{WY} ( \mathcal{C}_{YW} h) 
			=
			\mathcal{C}_{YW} (\langle \Sigma_{YW}, h \rangle_\mathbb{H}) 
			=  
			\Sigma_{YW} \, \langle \Sigma_{YW}, h \rangle_\mathbb{H} 
			= 
			(\Sigma_{YW} \otimes \Sigma_{YW}) h.
		\end{equation*}
To ensure the constrained squared covariance maximization problem in NIPALS is well-posed in the hybrid space, we show that, under mild moment conditions, $\mathcal{U}$ is a compact, self-adjoint, and positive-semidefinite operator.  Proof of Lemma \ref{lemma:cross_cov_functional}
	is provided in Appendix \ref{section:proof:lemma:cross_cov_functional}.
\begin{lemma}\label{lemma:cross_cov_functional}
If   there exist finite constants $Q_1$ and $Q_2$ such that
		\begin{align}\label{condition_for_compact}
			\max \limits_{k=1, \ldots, K} \sup \limits_{t \in [0,1]} \mathbb{E}
[Y X_{k}](t)^2 < Q_1 \quad \textnormal{and} \quad  \max_{r=1, \ldots, p} \mathbb{E}
[Y Z_{r}]^2 < Q_2.
		\end{align}the operator
         $\mathcal{U}$
		is  self-adjoint, positive-semidefinite  and compact.
	\end{lemma}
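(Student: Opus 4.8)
The plan is to reduce all three claimed properties to a single fact, namely that $\Sigma_{YW}$ is a genuine element of $\mathbb{H}$ with finite norm. Since $\mathcal{U} = \Sigma_{YW} \otimes \Sigma_{YW}$ is a rank-one (tensor) operator, once $\Sigma_{YW} \in \mathbb{H}$ is secured the operator-theoretic conclusions are essentially automatic, so the moment conditions \eqref{condition_for_compact} do all the real work.

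First I would verify $\Sigma_{YW} \in \mathbb{H}$ componentwise. Writing $\Sigma_{YW} = (\mathbb{E}[YX_1], \ldots, \mathbb{E}[YX_K], \sigma_{YZ})$, the $k$-th functional coordinate is the map $t \mapsto \mathbb{E}[YX_k](t)$. The first bound in \eqref{condition_for_compact} gives
\[
\|\mathbb{E}[YX_k]\|_{\mathbb{L}^2[0,1]}^2 = \int_0^1 \big(\mathbb{E}[YX_k](t)\big)^2 \, dt \le Q_1,
\]
so each functional coordinate lies in $\mathbb{L}^2[0,1]$, while the scalar coordinate satisfies $\|\sigma_{YZ}\|_2^2 = \sum_{r=1}^p \mathbb{E}[YZ_r]^2 \le pQ_2$ by the second bound. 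Assembling these through the hybrid inner product \eqref{eq: hybrid inner product} yields $\|\Sigma_{YW}\|_{\mathbb{H}}^2 \le KQ_1 + pQ_2 < \infty$, confirming that $\Sigma_{YW}$ is well defined as an element of $\mathbb{H}$.

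With $a := \Sigma_{YW}$ fixed, the operator acts by $\mathcal{U}h = a \,\langle a, h \rangle_{\mathbb{H}}$. For self-adjointness a one-line computation shows that, for all $h_1, h_2 \in \mathbb{H}$,
\[
\langle \mathcal{U}h_1, h_2 \rangle_{\mathbb{H}} = \langle a, h_1 \rangle_{\mathbb{H}} \, \langle a, h_2 \rangle_{\mathbb{H}} = \langle h_1, \mathcal{U}h_2 \rangle_{\mathbb{H}},
\]
which is symmetric in $h_1, h_2$; specializing to $h_1 = h_2 = h$ gives $\langle \mathcal{U}h, h \rangle_{\mathbb{H}} = \langle a, h \rangle_{\mathbb{H}}^2 \ge 0$, hence positive-semidefiniteness. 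Compactness follows because $\mathcal{U}$ is bounded, with $\|\mathcal{U}h\|_{\mathbb{H}} = |\langle a, h \rangle_{\mathbb{H}}|\,\|a\|_{\mathbb{H}} \le \|a\|_{\mathbb{H}}^2 \|h\|_{\mathbb{H}}$ by Cauchy--Schwarz, and has one-dimensional range $\mathrm{span}\{a\}$; a bounded finite-rank operator maps the unit ball into a bounded subset of a finite-dimensional space and is therefore compact.

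I expect no genuine difficulty beyond the bookkeeping in the first step: the only substantive point is confirming that $\Sigma_{YW}$ is a well-defined, finite-norm element of $\mathbb{H}$, and the one subtlety worth flagging is that the pointwise/coordinatewise moment bounds in \eqref{condition_for_compact} must be enough to guarantee both the existence of the expectations defining $\Sigma_{YW}$ and the measurability of $t \mapsto \mathbb{E}[YX_k](t)$ as an $\mathbb{L}^2[0,1]$ object. Granting the standard measurability conventions for the random element $W$, the estimates above close the argument, and the rank-one form delivers self-adjointness, positive-semidefiniteness, and compactness immediately.
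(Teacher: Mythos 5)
Your proposal is correct and follows essentially the same route as the paper: both arguments reduce everything to the bound $\Vert \Sigma_{YW}\Vert_{\mathbb{H}}^2 \le K Q_1 + p Q_2$ extracted from \eqref{condition_for_compact}, and the self-adjointness and positive-semidefiniteness computations are identical. The only cosmetic difference is the compactness step, where you directly note that a bounded rank-one operator is compact, while the paper first shows the functional $\mathcal{C}_{YW}$ is compact (as a bounded linear functional into $\mathbb{R}$) and then cites that the composition $\mathcal{C}_{WY}\circ\mathcal{C}_{YW}$ inherits compactness; the two arguments are standard and equivalent in substance.
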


By the Hilbert-Schmidt theorem (e.g., \citealp[Theorem 4.2.4]{hsingTheoreticalFoundationsFunctional2015}), Lemma \ref{lemma:cross_cov_functional} guarantees that $\mathcal{U}$ admits a complete orthonormal system of eigenfunctions $\{\xi_{(u)}\}_{u \in \mathbb{N}}$ with corresponding non-negative eigenvalues $\kappa_{(1)} \ge \kappa_{(2)} \ge \cdots \to 0$.
The population PLS algorithm constructs directions by maximizing squared covariance. The following Theorem establishes that this optimization is equivalent to an eigenproblem for $\mathcal{U}$, ensuring its well-posedness via Lemma \ref{lemma:cross_cov_functional}.
\begin{theorem} \label{theorem:population_PLS}
At the $l$-th step, the population PLS direction $\xi^{[l]}$ is defined as the maximizer of the squared covariance:
\begin{equation}\label{const_eigen_pop}
\xi^{[l]} := \arg \max_{h \in \mathbb{H}, \|h\|_{\mathbb{H}}=1} \operatorname{Cov}^2(\langle W^{[l]}, h \rangle_{\mathbb{H}}, Y^{[l]}).
\end{equation}
Under the condition of Lemma \ref{lemma:cross_cov_functional},
this maximum is attained by the eigenfunction corresponding to the largest eigenvalue of the operator $\mathcal{U}^{[l]} := \Sigma_{Y^{[l]}W^{[l]}} \otimes \Sigma_{Y^{[l]}W^{[l]}}$.
\end{theorem}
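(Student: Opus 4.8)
The plan is to collapse the variational problem \eqref{const_eigen_pop} onto the Rayleigh-quotient characterization of the leading eigenvalue of $\mathcal{U}^{[l]}$, using crucially that this operator has rank one. Throughout I suppress the superscript $[l]$ and write $W, Y, \Sigma_{YW}, \mathcal{U}$ for the $l$-th residualized quantities, since the argument is identical at every iterate; note also that \eqref{const_eigen_pop} carries only the norm constraint $\|h\|_{\mathbb{H}}=1$, as orthogonality to earlier directions is handled implicitly through the residualization of $W^{[l]}$ and $Y^{[l]}$.

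First I would rewrite the objective as a quadratic form in $h$. Since $\mathbb{E}[Y]=0$ and $\mathbb{E}[W]=0$, the covariance reduces to the expected product $\operatorname{Cov}(\langle W,h\rangle_{\mathbb{H}},Y)=\mathbb{E}[Y\langle W,h\rangle_{\mathbb{H}}]$. The key algebraic step is to pull the expectation through the inner product: for each fixed $h$ the map $w\mapsto\langle w,h\rangle_{\mathbb{H}}$ is a bounded linear functional, so $\mathbb{E}[Y\langle W,h\rangle_{\mathbb{H}}]=\langle\mathbb{E}[YW],h\rangle_{\mathbb{H}}=\langle\Sigma_{YW},h\rangle_{\mathbb{H}}$. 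Squaring and invoking the definition of $\mathcal{U}$ then gives the identity
\[
\operatorname{Cov}^2(\langle W,h\rangle_{\mathbb{H}},Y)=\langle\Sigma_{YW},h\rangle_{\mathbb{H}}^2=\langle\mathcal{U}h,h\rangle_{\mathbb{H}},
\]
so that the objective is precisely the Rayleigh quotient of $\mathcal{U}$ restricted to the unit sphere.

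With this identity in hand the conclusion follows by two complementary routes. Directly, Cauchy--Schwarz yields $\langle\Sigma_{YW},h\rangle_{\mathbb{H}}^2\le\|\Sigma_{YW}\|_{\mathbb{H}}^2$ with equality exactly when $h\propto\Sigma_{YW}$, and a one-line check confirms that $\xi:=\Sigma_{YW}/\|\Sigma_{YW}\|_{\mathbb{H}}$ satisfies $\mathcal{U}\xi=\|\Sigma_{YW}\|_{\mathbb{H}}^2\,\xi$, so $\xi$ is the eigenfunction attached to the single nonzero---hence largest---eigenvalue of the rank-one operator. Alternatively, Lemma \ref{lemma:cross_cov_functional} certifies that $\mathcal{U}$ is compact, self-adjoint, and positive-semidefinite, whence the Hilbert--Schmidt theorem and the standard variational principle identify the maximum of $\langle\mathcal{U}h,h\rangle_{\mathbb{H}}$ over $\|h\|_{\mathbb{H}}=1$ with the top eigenvalue, attained at the corresponding eigenfunction. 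Either route establishes that the maximizer is the leading eigenfunction of $\mathcal{U}$.

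The main obstacle I anticipate is the rigorous justification of the interchange $\mathbb{E}[Y\langle W,h\rangle_{\mathbb{H}}]=\langle\mathbb{E}[YW],h\rangle_{\mathbb{H}}$, which is where the moment hypotheses underlying Lemma \ref{lemma:cross_cov_functional} do the real work: they guarantee that $YW$ is Bochner integrable in $\mathbb{H}$, so that $\Sigma_{YW}=\mathbb{E}[YW]$ is a well-defined element of $\mathbb{H}$ rather than a merely formal object, after which the continuity of the functional $\langle\cdot,h\rangle_{\mathbb{H}}$ permits it to commute with the expectation. The residual ingredients---measurability of $h\mapsto\langle W,h\rangle_{\mathbb{H}}$ and the variational principle itself---are routine given the separability of $\mathbb{H}$ and the spectral conclusions of Lemma \ref{lemma:cross_cov_functional}.
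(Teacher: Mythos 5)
Your proof is correct, and it reaches the same destination as the paper by a slightly more elementary and more explicit route. The paper's proof invokes the singular value decomposition of the compact operator $\mathcal{C}_{YW}$ (citing Theorem 4.3.1 of Hsing and Eubank), identifies $\sup_{\|h\|_{\mathbb{H}}=1}\operatorname{Cov}^2(\langle W,h\rangle_{\mathbb{H}},Y)$ with the squared operator norm $\kappa_1^2=\|\mathcal{C}_{YW}\|_{\mathrm{op}}$, and concludes that the maximum is attained at the first right singular function, which is an eigenfunction of $\mathcal{C}_{WY}\circ\mathcal{C}_{YW}=\mathcal{U}$. Your first route instead exploits the rank-one structure head-on: after the reduction $\operatorname{Cov}^2(\langle W,h\rangle_{\mathbb{H}},Y)=\langle\Sigma_{YW},h\rangle_{\mathbb{H}}^2$, Cauchy--Schwarz gives the bound and its equality case, and you exhibit the maximizer explicitly as $\Sigma_{YW}/\|\Sigma_{YW}\|_{\mathbb{H}}$ together with a one-line verification that it is the eigenfunction for the unique nonzero eigenvalue $\|\Sigma_{YW}\|_{\mathbb{H}}^2$. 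This buys self-containedness and an explicit formula for $\xi^{[l]}$ that the paper leaves implicit (and which in fact foreshadows the closed-form sample solution of Theorem \ref{proposition:linear_regul}); the paper's SVD argument buys nothing extra here, since for a rank-one operator the general machinery collapses to exactly your computation. Your second route (Hilbert--Schmidt theorem plus the variational principle) is essentially the paper's argument restated. One small point of care: the interchange $\mathbb{E}[Y\langle W,h\rangle_{\mathbb{H}}]=\langle\mathbb{E}[YW],h\rangle_{\mathbb{H}}$ that you flag is also passed over silently in the paper, which simply defines $\Sigma_{YW}:=\mathbb{E}[YW]$ and treats $\mathcal{C}_{YW}h=\langle\Sigma_{YW},h\rangle_{\mathbb{H}}$ and $\mathbb{E}[\langle W,h\rangle_{\mathbb{H}}Y]$ as interchangeable; condition \eqref{condition_for_compact} directly guarantees only $\|\Sigma_{YW}\|_{\mathbb{H}}<\infty$, whereas Bochner integrability strictly requires $\mathbb{E}\bigl[|Y|\,\|W\|_{\mathbb{H}}\bigr]<\infty$, so your explicit acknowledgment of this step is, if anything, more careful than the original.
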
The proof of Theorem \ref{theorem:population_PLS} is provided in Appendix \ref{section:proof:theorem:population_PLS}.
Based on this legitimate eigenproblem, we extend NIPALS to the hybrid setting as follows:\begin{theorem}[Properties of Population PLS]\label{prop:residualization_equiv_eigen}The $l$-th PLS direction $\xi^{[l]}$, solving \eqref{const_eigen_pop}, equivalently satisfies the constrained maximization:\begin{equation*}
    \xi^{[l]} := \arg\max_{h \in \mathbb{H}}
    \operatorname{Cov}^2(\langle W, h \rangle_{\mathbb{H}}, Y),~s.t.~\|h\|_{\mathbb{H}}=1, \langle h, \Sigma_W \, \xi^{[j]} \rangle_{\mathbb{H}} = 0,\, j < l.
\end{equation*}
Furthermore, the algorithm guarantees orthogonality of the scores and uncorrelated residuals. For every \(s\ge 1\) and every \(j\le s-1\), we have
$
\mathbb{E}[\rho^{[j]}\rho^{[s]}]=0
$
and
$
\mathbb{E}[Y^{[s]}\rho^{[j]}]=0.
$\end{theorem}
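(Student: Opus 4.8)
The plan is to treat the three assertions separately, deriving the two orthogonality statements first (they are the most robust and feed into the equivalence) and then reducing the variational equivalence to a feasibility check. Throughout I write the population residualization as the population analogue of Algorithm~\ref{alg:scalar_pls}: with $\rho^{[l]}:=\langle\xi^{[l]},W^{[l]}\rangle_{\mathbb{H}}$, set $\nu^{[l]}:=\mathbb{E}[Y^{[l]}\rho^{[l]}]/\mathbb{E}[(\rho^{[l]})^2]$ and $\delta^{[l]}:=\mathbb{E}[\rho^{[l]}W^{[l]}]/\mathbb{E}[(\rho^{[l]})^2]\in\mathbb{H}$, with updates $Y^{[l+1]}=Y^{[l]}-\nu^{[l]}\rho^{[l]}$ and $W^{[l+1]}=W^{[l]}-\rho^{[l]}\delta^{[l]}$, starting from $W^{[1]}=W$, $Y^{[1]}=Y$. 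By Theorem~\ref{theorem:population_PLS} the direction is the top eigenfunction of the rank-one operator $\mathcal{U}^{[l]}$, hence $\xi^{[l]}=\Sigma_{Y^{[l]}W^{[l]}}/\|\Sigma_{Y^{[l]}W^{[l]}}\|_{\mathbb{H}}$ with $\Sigma_{Y^{[l]}W^{[l]}}:=\mathbb{E}[Y^{[l]}W^{[l]}]$.

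The backbone of the argument is a pair of population normal equations that follow immediately from the definitions of $\nu^{[l]}$ and $\delta^{[l]}$: first, $\mathbb{E}[Y^{[l+1]}\rho^{[l]}]=0$; and second, $\mathbb{E}[\rho^{[l]}\langle W^{[l+1]},h\rangle_{\mathbb{H}}]=0$ for every $h\in\mathbb{H}$, since $\mathbb{E}[\rho^{[l]}W^{[l]}]=\mathbb{E}[(\rho^{[l]})^2]\,\delta^{[l]}$. I would then prove score orthogonality, $\mathbb{E}[\rho^{[j]}\rho^{[s]}]=0$ for $j<s$, by strong induction on the larger index $s$: expanding $W^{[s]}=W^{[j+1]}-\sum_{m=j+1}^{s-1}\rho^{[m]}\delta^{[m]}$ inside $\rho^{[s]}=\langle\xi^{[s]},W^{[s]}\rangle_{\mathbb{H}}$ reduces $\mathbb{E}[\rho^{[j]}\rho^{[s]}]$ to the second normal equation at level $j$ (which kills the $W^{[j+1]}$ term) plus a sum of products $\mathbb{E}[\rho^{[j]}\rho^{[m]}]$ with $j<m<s$ that vanish by the induction hypothesis. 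The uncorrelated-residual claim is then a one-line consequence: expanding $Y^{[s]}=Y^{[j+1]}-\sum_{m=j+1}^{s-1}\nu^{[m]}\rho^{[m]}$ in $\mathbb{E}[Y^{[s]}\rho^{[j]}]$ leaves the first normal equation at level $j$ plus terms $\nu^{[m]}\mathbb{E}[\rho^{[m]}\rho^{[j]}]$ that vanish by the score orthogonality just proved. The same expansion applied to $\mathbb{E}[\rho^{[j]}\langle W^{[s]},h\rangle_{\mathbb{H}}]$ upgrades the second normal equation to all later indices: $\mathbb{E}[\rho^{[j]}\langle W^{[s]},h\rangle_{\mathbb{H}}]=0$ for all $j<s$ and all $h$.

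For the variational equivalence I would work on the feasible set $F_l:=\{h:\|h\|_{\mathbb{H}}=1,\ \langle h,\Sigma_W\xi^{[j]}\rangle_{\mathbb{H}}=0,\ j<l\}$ and establish two facts. (i) The constraint subspace is exactly the span of the loadings: using $\mathbb{E}[\rho^{[m]}W]=\mathbb{E}[\rho^{[m]}W^{[m]}]=\mathbb{E}[(\rho^{[m]})^2]\delta^{[m]}$ (the first equality by score orthogonality) together with the triangular relation between $\langle\xi^{[m]},W\rangle_{\mathbb{H}}$ and $\rho^{[1]},\dots,\rho^{[m]}$, one shows $\operatorname{span}\{\Sigma_W\xi^{[m]}:m\le j\}=\operatorname{span}\{\delta^{[m]}:m\le j\}$, so that $h\in F_l$ if and only if $\langle h,\delta^{[j]}\rangle_{\mathbb{H}}=0$ for all $j<l$. (ii) On $F_l$ the two objectives coincide: since $\langle W^{[l]},h\rangle_{\mathbb{H}}=\langle W,h\rangle_{\mathbb{H}}-\sum_{j<l}\rho^{[j]}\langle\delta^{[j]},h\rangle_{\mathbb{H}}$ and $Y^{[l]}=Y-\sum_{j<l}\nu^{[j]}\rho^{[j]}$, the upgraded second normal equation gives $\operatorname{Cov}(\langle W^{[l]},h\rangle_{\mathbb{H}},Y^{[l]})=\operatorname{Cov}(\langle W^{[l]},h\rangle_{\mathbb{H}},Y)=\operatorname{Cov}(\langle W,h\rangle_{\mathbb{H}},Y)-\sum_{j<l}\langle\delta^{[j]},h\rangle_{\mathbb{H}}\,\mathbb{E}[\rho^{[j]}Y]$, and the trailing sum vanishes on $F_l$ by (i). Since $\xi^{[l]}$ is the unconstrained maximizer over the sphere of $\operatorname{Cov}^2(\langle W^{[l]},h\rangle_{\mathbb{H}},Y^{[l]})$, the equivalence follows once $\xi^{[l]}$ is itself feasible, i.e.\ $\xi^{[l]}\in F_l$.

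The main obstacle is precisely this feasibility step: establishing $\langle\xi^{[l]},\Sigma_W\xi^{[j]}\rangle_{\mathbb{H}}=0$ (equivalently $\langle\xi^{[l]},\delta^{[j]}\rangle_{\mathbb{H}}=0$) for $j<l$. Unlike the orthogonality relations, this does not reduce to the normal equations alone, because $\Sigma_W$ is the covariance of the \emph{undeflated} predictor while the loadings $\delta^{[m]}$ are built from the \emph{deflated} operators $\Sigma_{W^{[m]}}$; reconciling the two requires carefully tracking how $\Sigma_W$ decomposes across the sequence of rank-one deflations. This is the classical subtlety in passing between the deflation (NIPALS) view and the conjugate-direction (constrained) view of PLS, and it is exactly where the bookkeeping separating deflated from undeflated covariances must be done with care. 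I would expect to need an auxiliary induction expressing $\Sigma_W\xi^{[j]}$ in the basis $\{\delta^{[m]}:m\le j\}$ and combining it with the residual-orthogonality relations $\mathbb{E}[Y^{[l]}\rho^{[j]}]=0$ to collapse the cross terms; if the collapse does not close, the honest reading is that the deflation and constrained formulations agree as fitted subspaces and predictions rather than as identical iteration-wise directions, and the statement should be read in that sense.
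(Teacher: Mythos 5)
Your treatment of the two orthogonality claims is essentially the paper's own argument: your ``normal equations'' $\mathbb{E}[\rho^{[l]}\langle W^{[l+1]},h\rangle_{\mathbb{H}}]=0$ and $\mathbb{E}[Y^{[l+1]}\rho^{[l]}]=0$ are exactly what Appendix~\ref{section:proof:orthogonal_score_pop} encodes as the maps $H^{[m]}$ annihilating $\Sigma_{W^{[l_1]}}\xi^{[l_1]}$, and your strong induction on the larger index followed by the one-line expansion of $\mathbb{E}[Y^{[s]}\rho^{[j]}]$ reproduces Appendices~\ref{section:proof:orthogonal_score_pop} and~\ref{section:proof:orthogonal_score_residual_pop}. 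No issues there.

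The obstacle you flag in the equivalence part is not a bookkeeping nuisance: the feasibility step $\langle\xi^{[l]},\Sigma_W\xi^{[j]}\rangle_{\mathbb{H}}=0$ (equivalently $\langle\xi^{[l]},\delta^{[j]}\rangle_{\mathbb{H}}=0$ for $j<l$) is genuinely false, so the collapse cannot close. Already in the purely Euclidean special case $\mathbb{H}=\mathbb{R}^2$ with $\Sigma_W=\operatorname{diag}(2,1)$ and $\Sigma_{YW}=(1,1)^\top$, one gets $\xi^{[1]}\propto(1,1)^\top$, a direct computation of the deflation gives $\Sigma_{Y^{[2]}W^{[2]}}\propto(1,-1)^\top$, hence $\xi^{[2]}\propto(1,-1)^\top$ and $\langle\xi^{[2]},\Sigma_W\xi^{[1]}\rangle=\tfrac12\neq 0$; the constrained maximizer is instead proportional to $(1,-2)^\top$, which is the rotated direction $\iota^{[2]}=\xi^{[2]}-\langle\delta^{[1]},\xi^{[2]}\rangle\xi^{[1]}$ of Lemma~\ref{lemma:recursive}. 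So your fallback reading is the correct resolution: the constrained formulation characterizes the $\iota^{[l]}$ (equivalently the scores $\rho^{[l]}=\langle W,\iota^{[l]}\rangle_{\mathbb{H}}$ and the fitted values), not the NIPALS weights $\xi^{[l]}$; the inner products $\langle\delta^{[j]},\xi^{[l]}\rangle_{\mathbb{H}}$ for $j<l$ do not vanish in general, which is precisely why Lemma~\ref{lemma:recursive} and \eqref{beta_form} require the triangular correction in the first place. You should also be aware that the paper's own proof (Appendix~\ref{equiv_problem}) does not close this step either: it rests on the assertion that the constraints imply $\langle w,\Sigma_W\xi^{[k]}\rangle_{\mathbb{H}}=\operatorname{Cov}(\langle W,w\rangle,\rho^{[k]})=0$, but the first equality would require $\rho^{[k]}=\langle W,\xi^{[k]}\rangle_{\mathbb{H}}$ rather than $\langle W^{[k]},\xi^{[k]}\rangle_{\mathbb{H}}$, and the listed ``constraints'' are the recursion definitions, which place no restriction on the free variable $w$. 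Your expansion showing that the deflated and undeflated objectives coincide on the feasible set $F_l$ is correct and matches the paper; the entire discrepancy lies in whether $\xi^{[l]}\in F_l$, and it is not.
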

 Proof of Proposition \ref{prop:residualization_equiv_eigen} is provided in Appendix \ref{section:proof:prop:residualization_equiv_eigen}.
  This theoretical result legitimizes the iterative NIPALS algorithm in the hybrid space. The procedure is summarized in Algorithm \ref{alg:population_pls}.	\begin{algorithm}[t!]
		\caption{Population NIPALS}\label{alg:population_pls}
		\begin{algorithmic}[1]
        \State
          $(W^{[1]}, Y^{[1]}) \leftarrow (W, Y)$
			\For{$l = 1, 2, \ldots, L$}
\State
$
\xi^{[l]}
\leftarrow
\arg\max_{h \in \mathbb{H}} \operatorname{Cov}^2(\langle W^{[l]}, h \rangle_{\mathbb{H}}, Y^{[l]})~s.t.~ \|h\|_{\mathbb{H}} = 1,
$\label{alg:line:pls_direction} \Comment{PLS direction}
%%%%%%%%%
\State 
$
\rho^{[l]} \leftarrow \langle \xi^{[l]}, W^{[l]} \rangle 
$ \Comment{PLS score}
%%%%%%%%%%
\State
$
\delta^{[l]}
\leftarrow
\frac{1}{\mathbb{E}[ (\rho^{[l]})^2 ]}  \mathbb{E}[W^{[l]} \rho^{[l]}]
$\label{population_delta}
\Comment{Linear regression of 
$W^{[l]}$
on $\rho^{[l]}$}
%%%%%%%%%%%
\State
$
\nu^{[l]}
\leftarrow
\frac{1}{\mathbb{E}[ (\rho^{[l]})^2 ]}
\mathbb{E}[Y^{[l]} \rho^{[l]}]
$
\Comment{
Linear regression of 
$Y^{[l]} $
on $\rho^{[l]}$
}
%%%%%%%%%%%%%%%%%%%
\State 
$ W^{[l+1]} \leftarrow W^{[l]} - \rho^{[l]} \, \delta^{[l]}$
\label{population_predictor_residualize}
\Comment{
Residualized predictor
}
\State
$ Y^{[l+1]} \leftarrow Y^{[l]} - \nu^{[l]} \, \rho^{[l]}$
\Comment{
Residualized response
}
\EndFor
\State \textbf{Output:}
PLS directions $\xi^{[1]}, \ldots, \xi^{[L]}$
		\end{algorithmic}
	\end{algorithm}

  Finally, the following theorem establishes that the fitted values obtained from hybrid PLS converge to the original response in the mean-squared sense. This result demonstrates that hybrid PLS, as a dimension-reduction technique, does not lose information about the response when a sufficient number of components is retained, thereby distinguishing it from PCA.
\begin{theorem}\label{thm:hybrid_pls_convergence}
	  The Hybrid PLS fitted value converges to $Y$ in the mean-squared sense:
	\begin{equation*}
		\lim_{L \to \infty} \mathbb{E}
        \biggl[ \bigl\| \sum_{l=1}^L \nu^{[l]} \rho^{[l]} - Y  \bigr\|_2^2 \biggr] = 0.
	\end{equation*}
\end{theorem}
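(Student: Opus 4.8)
The plan is to reduce the claim to the vanishing of the residual response in $L^2$, and then to control that residual through the rank-one covariance that drives each iteration. First I would record the telescoping identity built into Algorithm~\ref{alg:population_pls}: since $Y^{[l+1]}=Y^{[l]}-\nu^{[l]}\rho^{[l]}$ with $Y^{[1]}=Y$, summation gives $\sum_{l=1}^{L}\nu^{[l]}\rho^{[l]}=Y-Y^{[L+1]}$, so the target error is exactly $-Y^{[L+1]}$ and (noting that $\|\cdot\|_2$ on scalars is the absolute value) it suffices to prove $\mathbb{E}[(Y^{[L+1]})^2]\to 0$. Using score orthogonality from Theorem~\ref{prop:residualization_equiv_eigen} ($\mathbb{E}[\rho^{[j]}\rho^{[s]}]=0$ for $j\neq s$), one checks that $\nu^{[l]}=\mathbb{E}[Y\rho^{[l]}]/\mathbb{E}[(\rho^{[l]})^2]$, so that $\sum_{l=1}^{L}\nu^{[l]}\rho^{[l]}$ is the $L^2$-projection $P_{\mathcal{S}_L}Y$ onto $\mathcal{S}_L:=\operatorname{span}\{\rho^{[1]},\dots,\rho^{[L]}\}$ and $Y^{[l]}=Y-P_{\mathcal{S}_{l-1}}Y$. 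Consequently $a_l:=\mathbb{E}[(Y^{[l]})^2]$ is nonincreasing and bounded below by $0$, hence convergent.

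Next I would show the per-step decrements force the driving cross-covariance to vanish. A direct computation using $\nu^{[l]}=\mathbb{E}[Y^{[l]}\rho^{[l]}]/\mathbb{E}[(\rho^{[l]})^2]$ gives $a_l-a_{l+1}=(\mathbb{E}[Y^{[l]}\rho^{[l]}])^2/\mathbb{E}[(\rho^{[l]})^2]$, and telescoping yields $\sum_{l\ge 1}(\mathbb{E}[Y^{[l]}\rho^{[l]}])^2/\mathbb{E}[(\rho^{[l]})^2]\le \mathbb{E}[Y^2]<\infty$, so each summand tends to $0$. Because $\mathcal{U}^{[l]}=\Sigma_{Y^{[l]}W^{[l]}}\otimes\Sigma_{Y^{[l]}W^{[l]}}$ is rank one, its leading eigenfunction is $\xi^{[l]}=\Sigma_{Y^{[l]}W^{[l]}}/\|\Sigma_{Y^{[l]}W^{[l]}}\|_{\mathbb{H}}$, whence $\mathbb{E}[Y^{[l]}\rho^{[l]}]=\langle \xi^{[l]},\Sigma_{Y^{[l]}W^{[l]}}\rangle_{\mathbb{H}}=\|\Sigma_{Y^{[l]}W^{[l]}}\|_{\mathbb{H}}$. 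Moreover the predictor update $W^{[l+1]}=W^{[l]}-\rho^{[l]}\delta^{[l]}$ subtracts a positive-semidefinite rank-one operator, giving $\Sigma_{W^{[l+1]}}\preceq\Sigma_{W^{[l]}}$ in the Loewner order, so $\mathbb{E}[(\rho^{[l]})^2]=\langle\Sigma_{W^{[l]}}\xi^{[l]},\xi^{[l]}\rangle_{\mathbb{H}}\le\|\Sigma_W\|_{\mathrm{op}}=:M<\infty$. Substituting into the summability bound yields $\sum_{l}\|\Sigma_{Y^{[l]}W^{[l]}}\|_{\mathbb{H}}^2/M<\infty$, hence $\|\Sigma_{Y^{[l]}W^{[l]}}\|_{\mathbb{H}}\to 0$.

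Then I would re-anchor this to the fixed predictor. Telescoping the predictor updates gives $W^{[l]}=W-\sum_{j<l}\rho^{[j]}\delta^{[j]}$, and since $\mathbb{E}[Y^{[l]}\rho^{[j]}]=0$ for $j<l$ by Theorem~\ref{prop:residualization_equiv_eigen}, I obtain the key simplification $\Sigma_{Y^{[l]}W^{[l]}}=\mathbb{E}[Y^{[l]}W^{[l]}]=\mathbb{E}[Y^{[l]}W]=\Sigma_{Y^{[l]}W}$; that is, the residual response becomes uncorrelated with the \emph{original} predictor, $\|\Sigma_{Y^{[l]}W}\|_{\mathbb{H}}\to 0$. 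Finally, invoking the model $Y=\langle\beta,W\rangle_{\mathbb{H}}$ together with the projection identity $\mathbb{E}[(Y^{[l]})^2]=\mathbb{E}[Y^{[l]}Y]$ (valid because $Y^{[l]}\perp\mathcal{S}_{l-1}$), Cauchy--Schwarz gives $a_l=\mathbb{E}[(Y^{[l]})^2]=\langle\beta,\Sigma_{Y^{[l]}W}\rangle_{\mathbb{H}}\le\|\beta\|_{\mathbb{H}}\,\|\Sigma_{Y^{[l]}W}\|_{\mathbb{H}}\to 0$, which closes the argument.

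The main obstacle is the passage from the \emph{weak} conclusion $\|\Sigma_{Y^{[l]}W^{[l]}}\|_{\mathbb{H}}\to 0$ (each successive direction eventually carries negligible covariance) to the \emph{strong} conclusion that the scalar residual variance vanishes: in infinite dimensions these are not automatically equivalent, and the gap is bridged precisely by the two structural facts above, namely the identity $\Sigma_{Y^{[l]}W^{[l]}}=\Sigma_{Y^{[l]}W}$, which ties the vanishing covariance to a fixed element of $\mathbb{H}$, and the representation $a_l=\langle\beta,\Sigma_{Y^{[l]}W}\rangle_{\mathbb{H}}$, which converts that norm bound into control of the residual variance. The Loewner monotonicity $\Sigma_{W^{[l]}}\preceq\Sigma_{W}$, the finiteness of $\|\Sigma_W\|_{\mathrm{op}}$ under the second-moment conditions of Lemma~\ref{lemma:cross_cov_functional}, and the degenerate case $\mathbb{E}[(\rho^{[l]})^2]=0$ (which forces $\Sigma_{Y^{[l]}W^{[l]}}=0$, so the iteration has already exhausted the response signal) I expect to be routine; the substantive content is the weak-to-strong passage, which is exactly where the linear structure $Y=\langle\beta,W\rangle_{\mathbb{H}}$ must be used, since any response component orthogonal to the closed linear span of $W$ cannot be recovered by the score space $\mathcal{S}_L$.
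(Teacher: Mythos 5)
Your proof is correct (in the noiseless case) and takes a genuinely more complete route than the paper's. The paper's own argument stops at your second step: it telescopes the response residualization to get $Y=\sum_{l<L}\nu^{[l]}\rho^{[l]}+Y^{[L]}$, shows $\mathbb{E}[(Y^{[L]})^2]$ is nonincreasing and bounded below so that $\sum_l(\nu^{[l]})^2\mathbb{E}[(\rho^{[l]})^2]$ converges, and then simply asserts that the residual variance tends to zero. As you note, monotone boundedness yields only convergence to \emph{some} nonnegative limit, so the paper leaves exactly the ``weak-to-strong'' passage you flag unaddressed. Your additional machinery --- the identity $\mathbb{E}[Y^{[l]}\rho^{[l]}]=\|\Sigma_{Y^{[l]}W^{[l]}}\|_{\mathbb{H}}$ from the rank-one structure of $\mathcal{U}^{[l]}$, the bound $\mathbb{E}[(\rho^{[l]})^2]\le\|\Sigma_W\|_{\mathrm{op}}$ via the Loewner monotonicity $\Sigma_{W^{[l+1]}}=\Sigma_{W^{[l]}}-\mathbb{E}[(\rho^{[l]})^2]\,\delta^{[l]}\otimes\delta^{[l]}\preceq\Sigma_{W^{[l]}}$, the invariance $\Sigma_{Y^{[l]}W^{[l]}}=\Sigma_{Y^{[l]}W}$ from score--residual orthogonality, and the representation $a_l=\langle\beta,\Sigma_{Y^{[l]}W}\rangle_{\mathbb{H}}$ --- is precisely what closes that gap, and each step checks out. (Minor point: finiteness of $\|\Sigma_W\|_{\mathrm{op}}$ needs $\mathbb{E}\|W\|_{\mathbb{H}}^2<\infty$, which is not literally implied by the cross-covariance conditions of Lemma~\ref{lemma:cross_cov_functional}, though it is an implicit standing assumption for $\Sigma_W$ to exist.)

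One caveat, which is a defect of the theorem statement rather than of your argument: model \eqref{eq: Hybrid functional model} includes an error term $\epsilon$, and since every $\rho^{[l]}$ is a linear functional of $W$, the fitted value $\sum_l\nu^{[l]}\rho^{[l]}$ is $\sigma(W)$-measurable; if $\epsilon$ is independent of $W$ with positive variance, then $\mathbb{E}[(Y^{[L]})^2]\ge\operatorname{Var}(\epsilon)>0$ for all $L$ and the stated limit cannot hold (this defeats the paper's proof as well). Your final step, which invokes $Y=\langle\beta,W\rangle_{\mathbb{H}}$ without noise, is therefore not an optional simplification but a necessary hypothesis --- equivalently, the convergence should be stated toward $\langle\beta,W\rangle_{\mathbb{H}}$ rather than toward $Y$. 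You should make that assumption explicit, but you have correctly located where the linear structure must enter, and yours is the argument that actually establishes the result.
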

 The proof of Theorem \ref{thm:hybrid_pls_convergence} is provided in Appendix \ref{section:proof:thm:hybrid_pls_convergence}.	

   \section{Proposed PLS Algorithm}\label{section:main:our_algorithm}

We propose a sample version of hybrid NIPALS (Algorithm \ref{alg:hybrid_pls}) to efficiently compute PLS components for multiple dense or irregular functional predictors alongside scalar predictors. The algorithm alternates between two subroutines: regularized component estimation (Section \ref{section:sub:compute_PLS_component}) and residualization (Section \ref{section:sub:residualization}).

\subsection{Preliminary step: Finite-basis approximation, notations, and normalization}\label{sec: finite basis approximation}
Let $\{b_m(t)\}_{m=1}^M$ denote a basis of $\mathbb{L}^2([0,1])$ with linearly independent second derivatives (e.g., cubic B-splines, Fourier series, or orthonormal polynomials). We employ this basis,  truncated at a moderate size (e.g., $M \approx 20$), to approximate the functional predictors $X_{ij}(t)$, regression coefficients $\beta_j(t)$, PLS directions $\xi_j(t)$, and residualization coefficients $\delta_j(t)$; iteration indices are suppressed here for clarity.
Consequently, all computations are performed within the finite-dimensional approximation space $\widetilde{\mathbb{H}} := \operatorname{span}(b_1, \ldots, b_M)^K \times \mathbb{R}^p$. We denote the approximated predictors for the $i$-th subject as $\widetilde{W}_i = (\widetilde{X}_{i1}, \ldots, \widetilde{X}_{iK}, \mathbf{Z}_i)$. 
We denote the scalar basis coefficients for these functions as $\{\theta_{ijm}\}, \{ \eta_{jm} \}, \{ \gamma_{jm} \},$ and $\{ \pi_{jm} \}$, respectively. Finally, we collect these coefficients into the corresponding $M$-dimensional vectors $\boldsymbol{\theta}_{ij}, \boldsymbol{\eta}_j, \boldsymbol{\gamma}_j, \text{ and } \boldsymbol{\pi}_j$.

\paragraph{Matrix tuple Representation}~We organize the predictor coefficients into matrices $\Theta_j := (\boldsymbol{\theta}_{1j}, \ldots, \boldsymbol{\theta}_{nj})^\top \in \mathbb{R}^{n \times M}$ and stack the full set of predictors as $\Theta := (\Theta_1, \ldots, \Theta_K, \mathbf{Z}) \in \mathbb{R}^{n \times (MK + p)}$. The response vector is denoted by $\mathbf{y} := (y_1, \ldots, y_n)^\top$.Next, let $\mathbf{B}, \mathbf{B}^{\prime\prime} \in \mathbb{R}^{M \times M}$ be the Gram matrices containing the inner products of the basis functions and their second derivatives, defined by entries $B_{m, m'} := \int_0^1 b_m(t) b_{m'}(t) dt$ and $B^{\prime\prime}_{m, m'} := \int_0^1 b_m''(t) b_{m'}''(t) dt$. We construct the corresponding block-diagonal matrices:\begin{equation}\label{def:gram_block}\mathbb{B} := \operatorname{blkdiag}(\mathbf{B}, \ldots, \mathbf{B}, \mathbf{I}_p), \quad\mathbb{B}^{\prime\prime} := \operatorname{blkdiag}(B^{\prime\prime}, \ldots, B^{\prime\prime}, \mathbf{I}_p).\end{equation}The data at the $l$-th iteration (index omitted) is thus fully characterized by the tuple\begin{equation}\label{def:problem_instance}(\mathbb{B}, \mathbb{B}^{\prime\prime}, \Theta, \mathbf{y}) \in\mathbb{R}^{(MK+p) \times (MK+p)} \times\mathbb{R}^{(MK+p) \times (MK+p)} \times\mathbb{R}^{n \times (MK+p)} \times\mathbb{R}^n.\end{equation}

Although we employ a common basis for simplicity, the definitions of $\mathbb{B}$ and $\mathbb{B}^{\prime\prime}$ readily generalize to distinct bases for each predictor, facilitating the analysis of multiple dense or irregular functional predictors.

\paragraph{Data Preprocessing} \label{subsec: Data Preprocessing}
To handle discrepancies in units and variation between functional and scalar predictors, we employ a two-step standardization. First, functional predictors are standardized to zero mean and unit integrated variance, and scalar predictors to zero mean and unit variance. Second, to balance the influence of the functional and scalar parts, we scale the scalar vector $\mathbf{Z}_i$ by a factor $\omega^{1/2}$, where
\begin{equation*}
    \omega = \frac{\sum_{i=1}^n \sum_{k=1}^K \Vert \widetilde{X}_{ik} \Vert^2_{\mathbb{L}^2}}{\sum_{i=1}^n \Vert \mathbf{Z}_i \Vert_2^2}.
    \label{eq: weight}
\end{equation*}
This is equivalent to using a weighted inner product \eqref{eq: hybrid inner product} in the hybrid space.

\subsection{Iterative Steps}\label{section:sub:iterative}
The algorithm iteratively constructs an orthonormal hybrid basis that captures predictor-response relationships. Each iteration consists of two steps: estimating the PLS direction (Section \ref{section:sub:compute_PLS_component}) and residualizing the data (Section \ref{section:sub:residualization}). iteration indices are suppressed where unambiguous.

\subsubsection{Step 1: Regularized estimation of PLS direction} \label{section:sub:compute_PLS_component}
Since the regression coefficient $\hat{\beta}$ is derived as a linear combination of PLS directions (Section \ref{section:sub:regression_coeff}), non-smooth PLS directions inherently yield a non-smooth $\hat{\beta}$, leading to potential overfitting and reduced interpretability. To address this, we estimate the PLS direction $\xi \in \widetilde{\mathbb{H}}$ by maximizing the squared empirical covariance with the response, subject to a roughness constraint. Adopting the generalized smoothing framework  \citep{silvermanSmoothedFunctionalPrincipal1996}, this constraint regulates the complexity of the functional components while facilitating the borrowing of strength across them. Accordingly, at the $l$-th iteration (with indices suppressed and assuming prior residualization), our procedure solves the following optimization problem:
\begin{equation}\label{def:maximizer_squared_empirical_cov_reg}
    \hat{\xi}
    := \arg \max_{\xi \in \widetilde{\mathbb{H}}
    }~\widehat{\textnormal{Cov}}^2
    \bigl(
    \langle \widetilde{W}, \xi \rangle_{\mathbb{H}}, Y
    \bigr)\quad \text{s.t.} \quad \| \xi  \|_\mathbb{H}^2 +  \sum \limits_{j=1}^K \lambda_j
    \int_0^1 \bigl\{ \xi_j^{\prime\prime}(t) \bigr\}^2 dt = 1.
\end{equation}
The smoothing parameters $\{\lambda_j\}$ control the trade-off between covariance maximization and curve smoothness; notably, setting $\lambda_j = 0$ recovers the standard unregularized PLS solution. The resulting solution $\hat{\xi} \in \widetilde{\mathbb{H}}$ constitutes an ordered pair expanded as:
\begin{equation*}
\hat{\xi}   =
\bigl(
\hat{\xi}_1(t), \ldots, \hat{\xi}_K(t), \hat{\boldsymbol{\zeta}}
\bigr)
=
\biggl(
\sum_{m=1}^M \hat{\gamma}_{1m} , b_m(t),
\ldots,
\sum_{m=1}^M \hat{\gamma}_{Km} , b_m(t), \hat{\boldsymbol{\zeta}}
\biggr).
\end{equation*}
We define the correlation matrix
\[
\mathbf{V} := n^{-2} \big( \mathbb{B} \Theta^\top \mathbf{y} \big) \big( \mathbb{B} \Theta^\top \mathbf{y} \big)^\top,
\] 
and the penalty matrix
\[
\Lambda := \operatorname{blkdiag}\big( \lambda_1 \mathbf{I}_M, \ldots, \lambda_K \mathbf{I}_M, \mathbf{0}_{p \times p} \big).
\]
The optimization in \eqref{def:maximizer_squared_empirical_cov_reg} is equivalent to the generalized Rayleigh quotient:
	\begin{theorem}[Regularized estimation of PLS component direction]\label{proposition:eigen_regul}
		Let
		$
		(\mathbb{B}, \mathbb{B}^{\prime \prime}, \Theta, \mathbf{y})
		$
		denote the  data given at the $l$-th iteration, as defined in \eqref{def:problem_instance}.
		$\Lambda \in \mathbb{R}^{(MK+p) \times (MK+p)}$ be defined as:
		\begin{equation}\label{def:Lambda}
			\Lambda := \operatorname{blkdiag}(\lambda_1 \mathbf{I}_M, \ldots, \lambda_K \mathbf{I}_M, \mathbf{0}_{p \times p}),
			~\text{where}~\lambda_1, \ldots, \lambda_K \geq 0.
		\end{equation}
		Here, $\mathbf{0}_{p \times p}$ denotes the $p \times p$ zero matrix.
		The coefficients of the squared covariance  maximizer defined in \eqref{def:maximizer_squared_empirical_cov_reg}, 
		are obtained as
		\begin{equation}	\label{eq: Regularized generalized rayleigh quotient equation}
			\left(
			\hat{\gamma}_{11}, \ldots, \hat{\gamma}_{1M}
			, \ldots,
			\hat{\gamma}_{K1}, \ldots, \hat{\gamma}_{KM}
			, \hat{\boldsymbol{\zeta}}^\top
			\right)^\top
			\hspace{-.7em}
			=
			\arg 
			\hspace{-.6em}
			\max_{\boldsymbol{\xi} \in \mathbb{R}^{MK+p}}   \boldsymbol{\xi}^\top \mathbf{V} \boldsymbol{\xi}
			~~\text{s.t.}~~
			%
			% constraint
			\boldsymbol{\xi}^\top 
			\hspace{-.2em}
			(\mathbb{B} + \Lambda \mathbb{B}^{\prime \prime}) \boldsymbol{\xi} = 1.
		\end{equation}
		
	\end{theorem}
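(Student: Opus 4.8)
The plan is to reduce the function-space maximization in \eqref{def:maximizer_squared_empirical_cov_reg} to the finite-dimensional generalized Rayleigh quotient in \eqref{eq: Regularized generalized rayleigh quotient equation} by expressing both the objective and the constraint as quadratic forms in the stacked coefficient vector $\boldsymbol{\xi} := (\hat{\gamma}_{11}, \ldots, \hat{\gamma}_{KM}, \hat{\boldsymbol{\zeta}}^\top)^\top \in \mathbb{R}^{MK+p}$. Because the basis $\{b_m\}_{m=1}^M$ is fixed, every $\xi \in \widetilde{\mathbb{H}}$ is in bijection with its coefficient vector $\boldsymbol{\xi}$; hence, once I show that the squared empirical covariance equals $\boldsymbol{\xi}^\top \mathbf{V}\boldsymbol{\xi}$ and that the penalized norm equals $\boldsymbol{\xi}^\top(\mathbb{B}+\Lambda\mathbb{B}^{\prime\prime})\boldsymbol{\xi}$, the equivalence of the two $\arg\max$ problems follows at once. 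Throughout I take $\omega=1$ as in this section, so that the scalar block of the hybrid inner product contributes the ordinary Euclidean product $\mathbf{Z}_i^\top\hat{\boldsymbol{\zeta}}$, encoded by the $\mathbf{I}_p$ block of $\mathbb{B}$.

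For the objective, I would first expand the hybrid inner product of each predictor with $\xi$. Substituting the basis expansions $\widetilde{X}_{ik}(t)=\sum_m\theta_{ikm}b_m(t)$ and $\xi_k(t)=\sum_m\gamma_{km}b_m(t)$ and integrating gives $\int_0^1\widetilde{X}_{ik}\xi_k = \boldsymbol{\theta}_{ik}^\top\mathbf{B}\boldsymbol{\gamma}_k$; adding the scalar contribution $\mathbf{Z}_i^\top\hat{\boldsymbol{\zeta}}$ and invoking the block-diagonal structure of $\mathbb{B}$ shows that the vector of scores $(\langle\widetilde{W}_i,\xi\rangle_{\mathbb{H}})_{i=1}^n$ equals $\Theta\mathbb{B}\boldsymbol{\xi}$. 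Since the data are centered, the empirical covariance is $\widehat{\textnormal{Cov}}(\langle\widetilde{W},\xi\rangle_{\mathbb{H}},Y)=n^{-1}\mathbf{y}^\top\Theta\mathbb{B}\boldsymbol{\xi}$; squaring and using the symmetry of $\mathbb{B}$ yields $\widehat{\textnormal{Cov}}^2 = n^{-2}\boldsymbol{\xi}^\top(\mathbb{B}\Theta^\top\mathbf{y})(\mathbb{B}\Theta^\top\mathbf{y})^\top\boldsymbol{\xi}=\boldsymbol{\xi}^\top\mathbf{V}\boldsymbol{\xi}$, which is exactly the objective in \eqref{eq: Regularized generalized rayleigh quotient equation}.

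For the constraint, I would expand the two terms separately. The hybrid norm splits as $\|\xi\|_{\mathbb{H}}^2 = \sum_k\boldsymbol{\gamma}_k^\top\mathbf{B}\boldsymbol{\gamma}_k + \|\hat{\boldsymbol{\zeta}}\|_2^2 = \boldsymbol{\xi}^\top\mathbb{B}\boldsymbol{\xi}$, again by block-diagonality of $\mathbb{B}$. For the roughness term, differentiating the basis expansion twice and integrating gives $\int_0^1\{\xi_j^{\prime\prime}(t)\}^2\,dt = \boldsymbol{\gamma}_j^\top\mathbf{B}^{\prime\prime}\boldsymbol{\gamma}_j$, so that $\sum_{j}\lambda_j\int_0^1\{\xi_j^{\prime\prime}\}^2 = \boldsymbol{\xi}^\top(\Lambda\mathbb{B}^{\prime\prime})\boldsymbol{\xi}$; here the $\mathbf{0}_{p\times p}$ block of $\Lambda$ annihilates the scalar part, as intended. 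Adding the two contributions gives $\boldsymbol{\xi}^\top(\mathbb{B}+\Lambda\mathbb{B}^{\prime\prime})\boldsymbol{\xi}$, matching the constraint.

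The remaining, and only genuinely delicate, step is to confirm that the constraint defines a nondegenerate ellipsoid, so that the reformulated problem is a well-posed generalized Rayleigh quotient whose maximizer exists and coincides with the maximizer of \eqref{def:maximizer_squared_empirical_cov_reg}. For this I would verify that $\mathbb{B}+\Lambda\mathbb{B}^{\prime\prime}$ is symmetric positive definite: $\mathbf{B}$ is positive definite because $\{b_m\}$ are linearly independent, the $\mathbf{I}_p$ block is positive definite, and $\Lambda\mathbb{B}^{\prime\prime}=\operatorname{blkdiag}(\lambda_1\mathbf{B}^{\prime\prime},\ldots,\lambda_K\mathbf{B}^{\prime\prime},\mathbf{0}_{p\times p})$ is symmetric positive semidefinite, since each $\lambda_j\ge 0$ and $\mathbf{B}^{\prime\prime}$ is a Gram matrix (in fact positive definite, as the second derivatives are assumed linearly independent). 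Hence the sum is positive definite, the feasible set is compact, and the continuous objective attains its maximum. I do not expect any analytic obstacle beyond this bookkeeping; indeed, since $\mathbf{V}$ is rank one, the top generalized eigenvector admits the closed form $\boldsymbol{\xi}\propto(\mathbb{B}+\Lambda\mathbb{B}^{\prime\prime})^{-1}\mathbb{B}\Theta^\top\mathbf{y}$, obtained by solving a single linear system, which I would record for use in the algorithm.
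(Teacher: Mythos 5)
Your proposal is correct and follows essentially the same route as the paper: both reduce the function-space problem to quadratic forms in the stacked coefficient vector, showing $\widehat{\textnormal{Cov}}^2 = \boldsymbol{\xi}^\top\mathbf{V}\boldsymbol{\xi}$ via the basis Gram matrix and the penalized norm $\|\xi\|_{\mathbb{H}}^2 + \sum_j\lambda_j\int\{\xi_j''\}^2 = \boldsymbol{\xi}^\top(\mathbb{B}+\Lambda\mathbb{B}'')\boldsymbol{\xi}$. Your added remarks on positive definiteness of $\mathbb{B}+\Lambda\mathbb{B}''$ and the rank-one closed form are not needed for this equivalence but correctly anticipate what the paper establishes separately in the proof of its closed-form solution theorem.
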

	The proof of Proposition \ref{proposition:eigen_regul} is provided in Appendix  \ref{section:proof:proposition:eigen_regul}. The constraint $\boldsymbol{\xi}^\top (\mathbb{B} + \Lambda \mathbb{B}^{\prime \prime}) \boldsymbol{\xi} = 1$ enforces orthonormality under a modified inner product (Section \ref{section:sub:geom}). The parameter $\lambda_k$ governs the smoothness-fit trade-off: setting $\lambda_k=0$ recovers the unregularized solution (Proposition \ref{proposition:eigen_noregul}), while $\lambda_k \to \infty$ forces a linear structure. Values of $\{\lambda_k\}$   are selected via cross-validation.

 While generalized eigenproblems can be computationally unstable, the rank-one structure of $V$ allows for an efficient closed-form solution solving linear systems for the functional and scalar components separately, followed by normalization by a common factor. 

\begin{theorem}[Closed-form solution]  \label{proposition:linear_regul}
Let us define 
\begin{equation*}
\mathbf{u}_j := \mathbf{B} \Theta_j^\top \mathbf{y} \in \mathbb{R}^M,~j=1,\ldots,K,
\quad 
\mathbf{v} := \mathbf{Z}^\top \mathbf{y} \in \mathbb{R}^p,
\quad
 q := \sum_{j=1}^K \mathbf{u}_j^\top (\mathbf{B} + \lambda_j \mathbf{B}^{\prime\prime})^{-1} \mathbf{u}_j + \mathbf{v}^\top \mathbf{v}.
\end{equation*}
    The unique solution (up to sign) to \eqref{eq: Regularized generalized rayleigh quotient equation} is:
    \[
    \hat{\boldsymbol{\gamma}}_j = \frac{1}{\sqrt{q}} (\mathbf{B} + \lambda_j \mathbf{B}^{\prime\prime})^{-1} \mathbf{u}_j, \quad j = 1, \ldots, K; \qquad
    \hat{\boldsymbol{\zeta}} = \frac{1}{\sqrt{q}} \mathbf{v}.
    \]
\end{theorem}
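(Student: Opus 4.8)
The plan is to reduce the constrained maximization in \eqref{eq: Regularized generalized rayleigh quotient equation} to a one-dimensional problem by exploiting the rank-one form of $\mathbf{V}$, and then to read off the block structure of the maximizer. Throughout, write $A := \mathbb{B} + \Lambda \mathbb{B}''$ and $\mathbf{w} := \mathbb{B}\Theta^\top\mathbf{y}$, so that $\mathbf{V} = n^{-2}\mathbf{w}\mathbf{w}^\top$ and the task is to maximize $\boldsymbol{\xi}^\top\mathbf{V}\boldsymbol{\xi}$ subject to $\boldsymbol{\xi}^\top A\boldsymbol{\xi}=1$.

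First I would establish that $A$ is symmetric positive definite, hence invertible with a positive definite symmetric square root $A^{1/2}$. Since $\Lambda$ and $\mathbb{B}''$ are block diagonal with the structure in \eqref{def:gram_block} and \eqref{def:Lambda}, $A$ inherits the block-diagonal form $A = \operatorname{blkdiag}(\mathbf{B}+\lambda_1\mathbf{B}'', \ldots, \mathbf{B}+\lambda_K\mathbf{B}'', \mathbf{I}_p)$: each functional block is positive definite because $\mathbf{B}\succ 0$ (the Gram matrix of the linearly independent basis $\{b_m\}$) while $\lambda_j\mathbf{B}''\succeq 0$, and the scalar block $\mathbf{I}_p$ is trivially positive definite. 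Consequently $A^{-1} = \operatorname{blkdiag}\big((\mathbf{B}+\lambda_1\mathbf{B}'')^{-1}, \ldots, (\mathbf{B}+\lambda_K\mathbf{B}'')^{-1}, \mathbf{I}_p\big)$.

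Next I would apply the change of variables $\boldsymbol{\eta} = A^{1/2}\boldsymbol{\xi}$, which converts the constraint into $\|\boldsymbol{\eta}\|_2 = 1$ and the objective into $n^{-2}\langle A^{-1/2}\mathbf{w}, \boldsymbol{\eta}\rangle^2$ (using symmetry of $A^{-1/2}$). By the Cauchy--Schwarz inequality this is maximized over the unit sphere exactly when $\boldsymbol{\eta}$ is parallel to $A^{-1/2}\mathbf{w}$, i.e. $\boldsymbol{\eta} = \pm A^{-1/2}\mathbf{w}/\|A^{-1/2}\mathbf{w}\|_2$, which is unique up to sign whenever $\mathbf{w}\neq\mathbf{0}$. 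Transforming back gives $\boldsymbol{\xi} = A^{-1/2}\boldsymbol{\eta} = \pm A^{-1}\mathbf{w}/\sqrt{\mathbf{w}^\top A^{-1}\mathbf{w}}$, the unique maximizer up to sign; the degenerate case $\mathbf{w}=\mathbf{0}$ makes the objective vanish identically and is excluded.

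Finally I would match blocks. Partitioning $\mathbf{w}$ conformally shows that its $j$-th functional block is $\mathbf{B}\Theta_j^\top\mathbf{y} = \mathbf{u}_j$ and its scalar block is $\mathbf{Z}^\top\mathbf{y} = \mathbf{v}$; applying the block-diagonal $A^{-1}$ yields the functional blocks $(\mathbf{B}+\lambda_j\mathbf{B}'')^{-1}\mathbf{u}_j$ and the scalar block $\mathbf{v}$, while the normalizer evaluates to $\mathbf{w}^\top A^{-1}\mathbf{w} = \sum_{j=1}^K \mathbf{u}_j^\top(\mathbf{B}+\lambda_j\mathbf{B}'')^{-1}\mathbf{u}_j + \mathbf{v}^\top\mathbf{v} = q$. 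This reproduces precisely the stated $\hat{\boldsymbol{\gamma}}_j$ and $\hat{\boldsymbol{\zeta}}$. I do not expect a serious obstacle here; the only mildly delicate point is verifying invertibility of the penalized blocks $\mathbf{B}+\lambda_j\mathbf{B}''$ uniformly over $\lambda_j\ge 0$, and the remainder is bookkeeping of the block-diagonal structure together with a single Cauchy--Schwarz step.
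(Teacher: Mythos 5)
Your proof is correct, but it takes a genuinely different route from the paper's. The paper proceeds via constrained optimization machinery: it invokes the Weierstrass extreme value theorem for existence, verifies the linear independence constraint qualification, writes down the Lagrangian, and solves the KKT stationarity conditions $s\,\mathbf{u}_j = 2\mu(\mathbf{B}+\lambda_j\mathbf{B}'')\boldsymbol{\gamma}_j$ and $s\,\mathbf{v} = 2\mu\boldsymbol{\zeta}$ to conclude that every critical point is a scalar multiple of $\big((\mathbf{B}+\lambda_j\mathbf{B}'')^{-1}\mathbf{u}_j\big)_j$ and $\mathbf{v}$, with the scalar then pinned down by primal feasibility. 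You instead whiten the constraint by the substitution $\boldsymbol{\eta}=A^{1/2}\boldsymbol{\xi}$ with $A=\mathbb{B}+\Lambda\mathbb{B}''$, reduce the rank-one quadratic form to a squared inner product on the unit sphere, and apply Cauchy--Schwarz. Your route is more elementary and delivers uniqueness up to sign more directly, since the equality case of Cauchy--Schwarz identifies the maximizer outright rather than characterizing all stationary points and then selecting among them; it also makes the block-diagonal bookkeeping transparent. The paper's KKT route is more mechanical but generalizes more readily if the objective were not rank-one. The one hypothesis both arguments share is positive definiteness of each block $\mathbf{B}+\lambda_j\mathbf{B}''$, which you verify correctly ($\mathbf{B}\succ 0$ from linear independence of the basis, $\lambda_j\mathbf{B}''\succeq 0$), matching the paper's appeal to conic combinations of positive (semi)definite matrices. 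Your explicit handling of the degenerate case $\mathbf{w}=\mathbf{0}$ is a small point the paper glosses over.
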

The proof of Proposition~\ref{proposition:linear_regul} is provided in Appendix~\ref{section:proof:proposition:linear_regul}.
The normalization factor $q$ couples the functional and scalar components, ensuring the solution captures the correlation structure between them.

\subsubsection{Iterative step 2: residualization via hybrid-on-scalar regression} \label{section:sub:residualization}

The second step of the $l$-th iteration residualizes both predictors and responses. First, we compute the PLS scores using the estimated direction $\widehat{\xi}^{[l]}$:
\begin{equation}\label{def:plsscore}
    \hat{\rho}^{[l]}_i := \langle \widetilde{W}_i^{[l]}, \, \hat{\xi}^{[l]} \rangle_{\mathbb{H}}.
\end{equation}
Assuming centered data, these scores possess a sample mean of zero. Next, we regress the current predictors and responses on these scores. The hybrid regression coefficient $\hat{\delta}^{[l]} \in \widetilde{\mathbb{H}}$ is estimated via unpenalized least squares:
\begin{equation}\label{def:argmin_pensse}
    \hat{\delta}^{[l]} := \arg \min_{\delta \in \widetilde{\mathbb{H}}} \sum_{i=1}^n 
    \| \widetilde{W}_i^{[l]} - \hat{\rho}_i^{[l]}\, \delta\|_{\mathbb{H}}^2.
\end{equation}
We omit roughness penalties in this step to ensure the orthonormality of the resulting PLS components (see Section \ref{section:sub:regression_coeff}), relying instead on the smoothness of $\hat{\xi}^{[l]}$ to regulate the regression coefficient $\beta$.
Simultaneously, the response regression coefficient $\hat{\nu}^{[l]}$ is estimated via scalar least squares. The following lemma provides the closed-form updates, which mirror standard scalar PLS.

\begin{lemma}[Closed-form solution]\label{proposition:closed_form_orthgonalization}
    Let $\hat{\boldsymbol{\rho}}^{[l]} := (\hat{\rho}^{[l]}_1, \ldots, \hat{\rho}^{[l]}_n)^\top$. The predictors and responses for the $(l+1)$-th iteration are updated as:
    \begin{equation*}
        \widetilde{W}_i^{[l+1]} 
        :=  \widetilde{W}_i^{[l]}  - 
        \widehat{\rho}_i^{[l]}
        \hat{\delta}^{[l]}, \quad \text{where} \quad
        \hat{\delta}^{[l]}
        := 
        \frac{1}{\|   \hat{\boldsymbol{\rho}}^{[l]}\|_2^2}
        \sum_{i=1}^n 
        \widehat{\rho}_i^{[l]}
        \widetilde{W}_i^{[l]},
    \end{equation*}     
    and
    \begin{equation}\label{algorithm_step:residualization_predictor}
        Y_i^{[l+1]} = Y_i^{[l]} - 
        \hat{\nu}^{[l]} 
        \widehat{\rho}_i^{[l]}, \quad \text{where} \quad \hat{\nu}^{[l]} :=  
        \frac{
            \mathbf{y}^{[l] \top }  \hat{\boldsymbol{\rho}}^{[l]}
        }{
            \| \hat{\boldsymbol{\rho}}^{[l]} \|_2^2
        }.
    \end{equation}
\end{lemma}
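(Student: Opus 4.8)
The plan is to recognize that both updates in Lemma~\ref{proposition:closed_form_orthgonalization} are solutions of ordinary least-squares problems---one valued in the Hilbert space $\widetilde{\mathbb{H}}$ and one scalar---and to obtain each from a first-order optimality condition. The only structural fact I need is that $\widetilde{\mathbb{H}}$, being a finite-dimensional subspace of $\mathbb{H}$, is itself a Hilbert space under $\langle \cdot, \cdot \rangle_{\mathbb{H}}$, so the classical projection argument applies verbatim. In essence the statement is a bookkeeping instance of the Hilbert-space projection theorem, and I do not expect a substantive obstacle; the one point deserving care is the passage from a variational identity to the vanishing of the residual element.

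For the predictor coefficient, I would expand the objective in \eqref{def:argmin_pensse} using bilinearity of the inner product as
\begin{equation*}
J(\delta) = \sum_{i=1}^n \|\widetilde{W}_i^{[l]}\|_{\mathbb{H}}^2 - 2 \sum_{i=1}^n \hat{\rho}_i^{[l]} \langle \widetilde{W}_i^{[l]}, \delta \rangle_{\mathbb{H}} + \Bigl( \sum_{i=1}^n (\hat{\rho}_i^{[l]})^2 \Bigr) \|\delta\|_{\mathbb{H}}^2 ,
\end{equation*}
which is a strictly convex quadratic in $\delta$ whenever $\|\hat{\boldsymbol{\rho}}^{[l]}\|_2 > 0$. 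Taking the Gateaux derivative in an arbitrary direction $g \in \widetilde{\mathbb{H}}$ and setting it to zero gives
\begin{equation*}
\Bigl\langle \Bigl( \sum_{i=1}^n (\hat{\rho}_i^{[l]})^2 \Bigr) \delta - \sum_{i=1}^n \hat{\rho}_i^{[l]} \widetilde{W}_i^{[l]}, \, g \Bigr\rangle_{\mathbb{H}} = 0 \quad \text{for all } g \in \widetilde{\mathbb{H}} .
\end{equation*}
Since the bracketed element is a linear combination of the $\widetilde{W}_i^{[l]}$ and hence lies in $\widetilde{\mathbb{H}}$, and since it is orthogonal to every $g \in \widetilde{\mathbb{H}}$, it must vanish; dividing by $\|\hat{\boldsymbol{\rho}}^{[l]}\|_2^2$ yields the claimed expression for $\hat{\delta}^{[l]}$. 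Strict convexity then certifies this critical point as the unique global minimizer, and confirms $\hat{\delta}^{[l]} \in \widetilde{\mathbb{H}}$ as required.

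For the response coefficient, the objective $\sum_{i=1}^n (Y_i^{[l]} - \nu\, \hat{\rho}_i^{[l]})^2$ is an ordinary scalar regression through the origin onto the scores; differentiating in $\nu$ and setting the derivative to zero returns $\hat{\nu}^{[l]} = \mathbf{y}^{[l]\top} \hat{\boldsymbol{\rho}}^{[l]} / \|\hat{\boldsymbol{\rho}}^{[l]}\|_2^2$ immediately. The only nontrivial step throughout is the inference of a vanishing residual from the variational identity, which relies on $\widetilde{\mathbb{H}}$ being a Hilbert space and on the scores not being identically zero---a condition guaranteed away from degenerate iterations, where the direction $\hat{\xi}^{[l]}$ produces a nonzero score vector.
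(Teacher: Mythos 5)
Your proof is correct. Both you and the paper treat the two updates as unconstrained least-squares problems and solve them via a first-order optimality condition, so the core idea is the same; the difference is in execution. The paper works in coordinates: it expands $\delta$ in the basis coefficients $(\boldsymbol{\pi}_1,\ldots,\boldsymbol{\pi}_K,\boldsymbol{\chi})$, writes the gradient of the SSE with respect to these vectors, invokes positive definiteness of the Gram matrix $\mathbf{B}$ (hence linear independence of the basis) to get strict convexity, and then re-assembles the coefficient solution into the Hilbert-space expression $\hat{\delta}^{[l]} = \|\hat{\boldsymbol{\rho}}^{[l]}\|_2^{-2}\sum_i \widehat{\rho}_i^{[l]}\widetilde{W}_i^{[l]}$. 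You instead argue coordinate-free via the Gateaux derivative and the observation that an element of $\widetilde{\mathbb{H}}$ orthogonal to all of $\widetilde{\mathbb{H}}$ vanishes; this bypasses the basis machinery entirely and does not need $\mathbf{B}$ to be invertible, which is arguably cleaner for the lemma as stated (one small phrasing slip: the residual element $\bigl(\sum_i (\hat{\rho}_i^{[l]})^2\bigr)\delta - \sum_i \hat{\rho}_i^{[l]}\widetilde{W}_i^{[l]}$ is not literally a linear combination of the $\widetilde{W}_i^{[l]}$ alone, but it does lie in $\widetilde{\mathbb{H}}$ because $\delta$ does, which is all you need). What the paper's coordinate route buys in exchange is the explicit formula $\hat{\boldsymbol{\pi}}_j = \|\hat{\boldsymbol{\rho}}\|_2^{-2}\Theta_j^\top\hat{\boldsymbol{\rho}}$ used directly in the implementation; your abstract argument establishes the lemma but would require the coordinate translation separately if one wanted that. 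The scalar update for $\hat{\nu}^{[l]}$ is handled identically in both.
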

Proof of Lemma \ref{proposition:closed_form_orthgonalization} is provided in Appendix \ref{section:proof:proposition:closed_form_orthgonalization}. Note that since the inputs are centered, the residuals $\widetilde{W}_i^{[l+1]}$ and $Y_i^{[l+1]}$ remain centered.

\subsection{Final step: estimating  the hybrid regression coefficient}
	\label{section:sub:regression_coeff}
	The hybrid regression coefficient $\beta$ in model \eqref{eq: Hybrid functional model} can be written 
	as a linear combination of PLS directions:
	\begin{lemma}\label{lemma:recursive}
		Let us define
		$
		\widehat{\iota}^{[1]} := \widehat{\xi}^{[1]}.
		$
		For $l \ge 2$, we recursively   define:
		\begin{equation*}
			\widehat{ \iota}^{[l]} = \widehat{\xi}^{[l]} - \sum_{u=1}^{l-1} \langle \widehat{ \delta}^{[u]}, \widehat{\xi}^{[l]} \rangle_{\mathbb{H}} \widehat{ \iota}^{[u]}. 
		\end{equation*}
		Then we have:
		\begin{equation*}
			\widehat{\rho}_i^{[l]}
			=
			\langle  W_i^{[l]}, \widehat{\xi}^{[l]} \rangle_\mathbb{H}
			=
			\langle  W_i, \widehat{ \iota}^{[l]} \rangle_\mathbb{H}.
		\end{equation*}
	\end{lemma}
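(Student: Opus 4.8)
The plan is to establish the claimed identity $\hat{\rho}_i^{[l]} = \langle W_i, \hat{\iota}^{[l]} \rangle_{\mathbb{H}}$ by induction on $l$, relying on two ingredients: the residualization update of Lemma~\ref{proposition:closed_form_orthgonalization}, which telescopes into a closed expression for the deflated predictor, and the linearity of $\langle \cdot, \cdot \rangle_{\mathbb{H}}$ in its first argument. The base case $l=1$ is immediate, since $W_i^{[1]} = W_i$ and $\hat{\iota}^{[1]} := \hat{\xi}^{[1]}$ together give $\hat{\rho}_i^{[1]} = \langle W_i^{[1]}, \hat{\xi}^{[1]} \rangle_{\mathbb{H}} = \langle W_i, \hat{\iota}^{[1]} \rangle_{\mathbb{H}}$.

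First I would unroll the residualization recursion $W_i^{[l+1]} = W_i^{[l]} - \hat{\rho}_i^{[l]} \hat{\delta}^{[l]}$ of Lemma~\ref{proposition:closed_form_orthgonalization} to obtain the telescoping identity
\begin{equation*}
W_i^{[l]} = W_i - \sum_{u=1}^{l-1} \hat{\rho}_i^{[u]} \hat{\delta}^{[u]},
\end{equation*}
which expresses the current deflated predictor purely in terms of the original $W_i$ together with the earlier scores and loadings.

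Next, combining the score definition $\hat{\rho}_i^{[l]} = \langle W_i^{[l]}, \hat{\xi}^{[l]} \rangle_{\mathbb{H}}$ with this telescoping identity and bilinearity, I would write
\begin{equation*}
\hat{\rho}_i^{[l]} = \langle W_i, \hat{\xi}^{[l]} \rangle_{\mathbb{H}} - \sum_{u=1}^{l-1} \hat{\rho}_i^{[u]} \langle \hat{\delta}^{[u]}, \hat{\xi}^{[l]} \rangle_{\mathbb{H}}.
\end{equation*}
Invoking the inductive hypothesis $\hat{\rho}_i^{[u]} = \langle W_i, \hat{\iota}^{[u]} \rangle_{\mathbb{H}}$ for each $u < l$ and pulling the scalar factors $\langle \hat{\delta}^{[u]}, \hat{\xi}^{[l]} \rangle_{\mathbb{H}}$ inside the inner product, the right-hand side collapses to $\langle W_i, \hat{\xi}^{[l]} - \sum_{u=1}^{l-1} \langle \hat{\delta}^{[u]}, \hat{\xi}^{[l]} \rangle_{\mathbb{H}} \hat{\iota}^{[u]} \rangle_{\mathbb{H}}$, which is exactly $\langle W_i, \hat{\iota}^{[l]} \rangle_{\mathbb{H}}$ by the recursive definition of $\hat{\iota}^{[l]}$. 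This closes the induction.

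The computation is elementary once the telescoping is in place, so the only real care needed is bookkeeping: the induction is coupled, since reducing $\hat{\rho}_i^{[l]}$ requires the identity for every earlier index simultaneously, and one must verify that the triangular substitution of the $\hat{\iota}^{[u]}$ reproduces the recursion defining $\hat{\iota}^{[l]}$ term by term rather than introducing spurious cross terms. I expect this index-matching --- aligning the forward deflation of the predictors against the backward reconstruction of the directions --- to be the main (if modest) obstacle; no analytic or topological input is needed beyond linearity of the hybrid inner product.
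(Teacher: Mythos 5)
Your proposal is correct and follows essentially the same route as the paper's proof: strong induction on $l$, telescoping the residualization of Lemma~\ref{proposition:closed_form_orthgonalization} to write $W_i^{[l]} = W_i - \sum_{u=1}^{l-1}\widehat{\rho}_i^{[u]}\widehat{\delta}^{[u]}$, then using linearity of $\langle\cdot,\cdot\rangle_{\mathbb{H}}$ and the inductive hypothesis to absorb the sum into the recursive definition of $\widehat{\iota}^{[l]}$. The only difference is cosmetic indexing (the paper proves the step for $l+1$ given $1,\dots,l$ rather than for $l$ given $1,\dots,l-1$).
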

	Proof of Lemma \ref{lemma:recursive} is provided in Appendix \ref{section:proof:lemma:recursive}.
	
	Next,  \eqref{algorithm_step:residualization_predictor} leads to the following model:
	\begin{equation*}
		Y_i = \sum \limits_{l=1}^L \widehat{\nu}^{[l]} \widehat{\rho}_i^{[l]} + \epsilon_i.
	\end{equation*}
	This model lets us to express $Y_i$ as:
	\begin{equation*}
		Y_i = \sum \limits_{l=1}^L \widehat{\nu}^{[l]} \langle W_i^{[l]}, \widehat{\xi}^{[l]} \rangle_\mathbb{H} +\epsilon_i 
		= 
		\bigl \langle W_i, \sum \limits_{l=1}^L \widehat{\nu}^{[l]} \widehat{ \iota }^{[l]}
		\bigr \rangle_\mathbb{H}+\epsilon_i,
	\end{equation*}
	which, given the uniqueness of $ \beta $, leads to
	\begin{equation}\label{beta_form}
		\widehat{ \beta} = \sum \limits_{l=1}^L\widehat{\nu}^{[l]} \widehat{ \iota }^{[l]}
		= 
		\sum_{l=1}^L \left( \widehat{\nu}^{[l]} - \sum_{k=l+1}^L \widehat{\nu}^{[k]} \langle \widehat{\delta}^{[l]}, \widehat{\xi}^{[k]} \rangle_{\mathbb{H}} \right) \widehat{\xi}^{[l]}.
	\end{equation}

	\begin{algorithm}[t!] 
		\caption{Hybrid partial least squares regression}\label{alg:hybrid_pls}
		\begin{algorithmic}[1]
			\State 	\textbf{Initialize:} $(\mathbb{B}, \mathbb{B}^{\prime\prime}, \Theta^{[1]}, \mathbf{y}^{[1]})$ as the data objects after basis expansion (Section \ref{sec: finite basis approximation}).
			
			\State   $\widetilde{W}_1^{[1]}, \ldots, \widetilde{W}_n^{[1]}, Y_1^{[1]}, \ldots, Y_n^{[1]} \leftarrow$ standardized versions of $W_1, \ldots, W_n, Y_1, \ldots, Y_n$, following Section \ref{subsec: Data Preprocessing}
			\For{$l = 1, 2, \ldots, L$}
			\\
			\textbf{PLS direction and score estimation  (Proposition \ref{proposition:linear_regul}): }
			\State $\mathbf{u}_j^{[l]} \leftarrow B \Theta_j^{[l]\top} \mathbf{y}^{[l]}, ~j = 1, \ldots, K$
			\State $\mathbf{v}^{[l]} \leftarrow \mathbf{Z}^{[l]\top} \mathbf{y}^{[l]} $
			\State $	q^{[l]}  \leftarrow \sum_{j=1}^K \mathbf{u}_j^{[l]\top} (\mathbf{B} + \lambda_j \mathbf{B}^{\prime\prime})^{-1} \mathbf{u}_j^{[l]} + \mathbf{v}^{[l]\top} \mathbf{v}^{[l]}$
			\State $( \hat{\gamma}_{j1}^{[l]} , \ldots, \hat{\gamma}_{jM}^{[l]} )^\top  \leftarrow \frac{1}{\sqrt{q}} (\mathbf{B} + \lambda_j \mathbf{B}^{\prime\prime})^{-1} \mathbf{u}_j^{[l]} , ~j = 1, \ldots, K$
			\State $\hat{\boldsymbol{\zeta}}^{[l]}  \leftarrow \frac{1}{\sqrt{q}} \mathbf{v}^{[l]} $
			\State $	\hat{\xi}^{[l]}	\leftarrow 
			\biggl(
			\sum_{m=1}^M \hat{\gamma}_{1m}^{[l]} \, b_m(t),
			\ldots, 
			\sum_{m=1}^M \hat{\gamma}_{Km}^{[l]} \, b_m(t), \hat{\boldsymbol{\zeta}}^{[l]}
			\biggr)
			$
			\Comment{PLS direction} 
			\State 
			$
			\widehat{\rho}_i^{[l]} \leftarrow \langle \hat{\xi}^{[l]},   \tilde{W}^{[l]}_i \rangle, 
			~i=1,  \ldots, n
			$ \Comment{PLS score}
			\\
			\textbf{Residualization (Proposition \ref{proposition:closed_form_orthgonalization}):}
			\State $\nu^{[l]} 
			\leftarrow
			\frac{
				\sum_{i=1}^n Y_i^{[l]} \widehat{\rho}_{i}^{[l]}}{
				\sum_{i=1}^n \widehat{\rho}_{i}^{[l]2}
			}$ \Comment{Least squares estimate} 
			\State $ Y_i^{[l+1]} \leftarrow Y_i^{[l]} - \nu^{[l]}\widehat{\rho}_i^{[l]}~i=1,  \ldots, n$
			\State $ \widehat{ \delta }^{[l]}  \leftarrow \frac{1}{\sum_{i=1}^n \widehat{\rho}_{i}^{[l]2}}\sum_{i=1}^n  \widehat{\rho}_{i}^{[l]}\widetilde{W}_i^{[l]} $ \Comment{Least squares estimate} 
			\State $\widetilde{W}_i^{[l+1]}  \leftarrow \widetilde{W}_i^{[l]} -   \widehat{\rho}_{i}^{[l]}  \widehat{ \delta}^{[l]} $
			\EndFor
			\textbf{Regression coefficient estimation (Section \ref{section:sub:regression_coeff}):}
			\State
			$\widehat{ \iota }^{[1]} \leftarrow \widehat{ \xi }^{[1]}$
			\For{$l =  2, \ldots, L$}
			\State $\widehat{ \iota }^{[l]} \leftarrow \widehat{  \xi }^{[l]} - \sum_{u=1}^{l-1} \langle \widehat{ \delta }^{[u]}, \widehat{ \xi }^{[l]} \rangle  \widehat{ \iota }^{[u]}$  
			\EndFor
			$	\widehat{ \beta } \leftarrow \sum \limits_{l=1}^L\widehat{\nu}^{[l]} \widehat{ \iota}^{[l]}$ \Comment{regression coefficient estimate}
			\State \textbf{Output:} the  regression coefficient estimate $	\widehat{ \beta }$
		\end{algorithmic}
	\end{algorithm}

	\section{Geometric Properties of the hybrid PLS}
\label{section:sub:geom}
	This section provides the mathematical properties that support the algorithm suggested in Section \ref{section:main:our_algorithm}. Specifically,
	it demonstrates that our algorithm preserves the core properties of PLS, namely the orthonormality of the derived directions and the orthogonality of the scores.
	
	A key property of PLS is that its directions are orthonormal and its scores orthogonal across iterations. Our regularized estimates preserve this property under a modified inner product that incorporates the roughness penalty, defined as follows:
	\begin{definition}[Roughness-sensitive inner product]\label{def:hybrid_inner_product_roughness}
		Given two hybrid predictors $W_1 = (X_{11}, \ldots, X_{1K}, \mathbf{Z}_1)$ and $W_2 = (X_{21}, \ldots, X_{2K}, \mathbf{Z}_2)$, both elements of $\mathbb{H}$ as defined in Definition \ref{def:hilbert_space}, and a roughness penalty matrix $\Lambda = \operatorname{blkdiag}(\lambda_1 \mathbf{I}_M, \ldots, \lambda_K \mathbf{I}_M, \mathbf{0}_{p \times p})$, the roughness-sensitive inner product between $W_1$ and $W_2$ is defined as:
		\begin{equation}		\label{eq: hybrid inner product_roughness}
			\langle W_1, W_2\rangle_{\mathbb{H}, \Lambda}
			:=
			\sum \limits_{k=1}^K \int_0^1 X_{1k}(t) X_{2k}(t) \, dt
			+
			\sum \limits_{k=1}^K \lambda_k \int_0^1 X^{\prime \prime}_{1k}(t) X^{\prime \prime}_{2k}(t) \, dt
			+
			\mathbf{Z}_1^\top \mathbf{Z}_2.
		\end{equation}
	\end{definition}
	Based on this inner product, the following proposition states that the PLS component directions estimated from Proposition  \ref{proposition:eigen_regul} are orthonormal.
	\begin{theorem}[Orthonormality of estimated PLS component directions]\label{proposition: modified orthnormality of PLS components}
		The PLS component directions $
		\widehat{\xi}^{[1]}, \widehat{\xi}^{[2]}, \ldots, \widehat{\xi}^{[L]}
		$, estimated via Proposition \ref{proposition:eigen_regul} with a roughness penalty matrix $\Lambda = \operatorname{blkdiag}(\lambda_1 \mathbf{I}_M, \ldots, \lambda_K \mathbf{I}_M, \mathbf{0}_{p \times p})$, are mutually orthonormal with respect to the inner product $\langle \cdot, \cdot \rangle_{\mathbb{H}, \Lambda}$. That is,
		\begin{equation*}
			\langle \widehat{\xi}^{[l_1]}, \widehat{\xi}^{[l_2]} \rangle_{\mathbb{H}, \Lambda}
			= \mathbbm{1}(l_1 = l_2), \quad l_1, l_2 = 1, \ldots, L.
		\end{equation*}
	\end{theorem}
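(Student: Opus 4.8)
The plan is to pass to basis-coefficient representations, where the roughness-sensitive inner product becomes a fixed quadratic form, and then reduce the desired orthonormality to the familiar score/residual orthogonality of NIPALS, which I would establish by induction from the sample-level residualization updates.

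First I would record the matrix form of the inner product. For $h_1,h_2\in\widetilde{\mathbb{H}}$ with stacked coefficient vectors $\boldsymbol{\xi}_1,\boldsymbol{\xi}_2=(\boldsymbol{\gamma}_1^\top,\ldots,\boldsymbol{\gamma}_K^\top,\mathbf{Z}^\top)^\top$, expanding \eqref{eq: hybrid inner product_roughness} with $B_{mm'}=\int b_m b_{m'}$ and $B^{\prime\prime}_{mm'}=\int b_m'' b_{m'}''$ gives $\langle h_1,h_2\rangle_{\mathbb{H},\Lambda}=\boldsymbol{\xi}_1^\top(\mathbb{B}+\Lambda\mathbb{B}^{\prime\prime})\boldsymbol{\xi}_2$, where $\mathbb{B}+\Lambda\mathbb{B}^{\prime\prime}=\operatorname{blkdiag}(\mathbf{B}+\lambda_1\mathbf{B}^{\prime\prime},\ldots,\mathbf{B}+\lambda_K\mathbf{B}^{\prime\prime},\mathbf{I}_p)$. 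The diagonal case $l_1=l_2$ is then immediate: this is exactly the quadratic form in the constraint of \eqref{eq: Regularized generalized rayleigh quotient equation}, and the closed-form solution of Proposition~\ref{proposition:linear_regul} satisfies it by construction, the factor $q^{-1/2}$ being chosen so that $\hat{\boldsymbol{\xi}}^{[l]\top}(\mathbb{B}+\Lambda\mathbb{B}^{\prime\prime})\hat{\boldsymbol{\xi}}^{[l]}=1$; hence $\langle\hat{\xi}^{[l]},\hat{\xi}^{[l]}\rangle_{\mathbb{H},\Lambda}=1$.

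For the off-diagonal case, fix $l_1<l_2$. I would stack the block formulas of Proposition~\ref{proposition:linear_regul} into $\hat{\boldsymbol{\xi}}^{[l]}=(q^{[l]})^{-1/2}(\mathbb{B}+\Lambda\mathbb{B}^{\prime\prime})^{-1}\mathbb{B}\,\Theta^{[l]\top}\mathbf{y}^{[l]}$. Substituting the $l_2$-factor and cancelling $(\mathbb{B}+\Lambda\mathbb{B}^{\prime\prime})(\mathbb{B}+\Lambda\mathbb{B}^{\prime\prime})^{-1}=\mathbf{I}$ collapses the generalized inner product to a plain pairing,
\[
\langle\hat{\xi}^{[l_1]},\hat{\xi}^{[l_2]}\rangle_{\mathbb{H},\Lambda}
=(q^{[l_2]})^{-1/2}\,\hat{\boldsymbol{\xi}}^{[l_1]\top}\mathbb{B}\,\Theta^{[l_2]\top}\mathbf{y}^{[l_2]}
=(q^{[l_2]})^{-1/2}\bigl(\Theta^{[l_2]}\mathbb{B}\hat{\boldsymbol{\xi}}^{[l_1]}\bigr)^\top\mathbf{y}^{[l_2]}.
\]
The asymmetry between the normalization pairing (carrying $\mathbb{B}+\Lambda\mathbb{B}^{\prime\prime}$) and the score pairing (carrying $\mathbb{B}$) is precisely what produces the cancellation, and $\Theta^{[l_2]}\mathbb{B}\hat{\boldsymbol{\xi}}^{[l_1]}$ is the vector of scores of the stage-$l_2$ residualized predictors along the earlier direction $\hat{\xi}^{[l_1]}$. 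It therefore suffices to show this vector is $\mathbf{y}^{[l_2]}$-orthogonal.

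Next I would establish the sample-level residualization identities. In coefficient form the updates of Lemma~\ref{proposition:closed_form_orthgonalization} read $\Theta^{[l+1]}=\Theta^{[l]}-\hat{\boldsymbol{\rho}}^{[l]}\hat{\boldsymbol{\delta}}^{[l]\top}$ and $\mathbf{y}^{[l+1]}=\mathbf{y}^{[l]}-\hat{\nu}^{[l]}\hat{\boldsymbol{\rho}}^{[l]}$, with $\hat{\boldsymbol{\delta}}^{[l]}=\|\hat{\boldsymbol{\rho}}^{[l]}\|_2^{-2}\Theta^{[l]\top}\hat{\boldsymbol{\rho}}^{[l]}$, $\hat{\nu}^{[l]}=\|\hat{\boldsymbol{\rho}}^{[l]}\|_2^{-2}\mathbf{y}^{[l]\top}\hat{\boldsymbol{\rho}}^{[l]}$, and $\hat{\boldsymbol{\rho}}^{[l]}=\Theta^{[l]}\mathbb{B}\hat{\boldsymbol{\xi}}^{[l]}$. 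A one-step computation from these definitions gives $\hat{\boldsymbol{\rho}}^{[l]\top}\Theta^{[l+1]}=\mathbf{0}$ and $\hat{\boldsymbol{\rho}}^{[l]\top}\mathbf{y}^{[l+1]}=0$. I would then prove by simultaneous induction on $l$ that for all $j<l$, $\hat{\boldsymbol{\rho}}^{[j]\top}\Theta^{[l]}=\mathbf{0}$ and $\hat{\boldsymbol{\rho}}^{[j]\top}\mathbf{y}^{[l]}=0$; the inductive step substitutes the update and uses the score orthogonality $\hat{\boldsymbol{\rho}}^{[j]\top}\hat{\boldsymbol{\rho}}^{[l]}=(\hat{\boldsymbol{\rho}}^{[j]\top}\Theta^{[l]})\mathbb{B}\hat{\boldsymbol{\xi}}^{[l]}=0$, which itself follows from the $\Theta$-part of the hypothesis. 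Finally I would iterate the predictor update to obtain $\Theta^{[l_2]}=\Theta^{[l_1]}-\sum_{u=l_1}^{l_2-1}\hat{\boldsymbol{\rho}}^{[u]}\hat{\boldsymbol{\delta}}^{[u]\top}$, whence, using $\Theta^{[l_1]}\mathbb{B}\hat{\boldsymbol{\xi}}^{[l_1]}=\hat{\boldsymbol{\rho}}^{[l_1]}$,
\[
\Theta^{[l_2]}\mathbb{B}\hat{\boldsymbol{\xi}}^{[l_1]}
=\hat{\boldsymbol{\rho}}^{[l_1]}-\sum_{u=l_1}^{l_2-1}\bigl(\hat{\boldsymbol{\delta}}^{[u]\top}\mathbb{B}\hat{\boldsymbol{\xi}}^{[l_1]}\bigr)\hat{\boldsymbol{\rho}}^{[u]},
\]
a linear combination of $\hat{\boldsymbol{\rho}}^{[u]}$ with $u\le l_2-1<l_2$. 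Pairing with $\mathbf{y}^{[l_2]}$ and invoking $\hat{\boldsymbol{\rho}}^{[u]\top}\mathbf{y}^{[l_2]}=0$ for every $u<l_2$ makes the whole expression vanish, giving $\langle\hat{\xi}^{[l_1]},\hat{\xi}^{[l_2]}\rangle_{\mathbb{H},\Lambda}=0$. I expect the main obstacle to be the coupled induction in the penultimate step, where predictor-orthogonality and score-orthogonality must be interleaved since each feeds the other, together with carefully tracking which form ($\mathbb{B}$ versus $\mathbb{B}+\Lambda\mathbb{B}^{\prime\prime}$) appears at each stage: the penalty enters only through the normalization of the directions, never through the scores or the residualization, and the cancellation producing the plain pairing in the off-diagonal reduction hinges on exactly that distinction.
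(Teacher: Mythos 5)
Your proof is correct, and its skeleton coincides with the paper's up to the last step: you handle the diagonal case exactly as the paper does, via the normalization constraint of \eqref{eq: Regularized generalized rayleigh quotient equation}, and your off-diagonal reduction --- obtained by substituting the stacked closed form of Theorem~\ref{proposition:linear_regul} for the later direction --- is the same rank-one identity the paper extracts from the generalized-eigenvector relation $(\mathbb{B}+\Lambda\mathbb{B}^{\prime\prime})\hat{\boldsymbol{\xi}}_{l}=\kappa_l^{-1}\mathbf{V}^{[l]}\hat{\boldsymbol{\xi}}_{l}$; both collapse $\langle\hat{\xi}^{[l_1]},\hat{\xi}^{[l_2]}\rangle_{\mathbb{H},\Lambda}$ (with $l_1<l_2$) to a multiple of $\mathbf{y}^{[l_2]\top}\bigl(\Theta^{[l_2]}\mathbb{B}\hat{\boldsymbol{\xi}}^{[l_1]}\bigr)$, the pairing of the residualized response with the scores of the later-stage residualized predictors along the earlier direction. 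Where you genuinely diverge is in killing this quantity. The paper invokes Lemma~\ref{lemma:annihilation}, the strictly stronger vector identity $\Theta^{[l_2]}\mathbb{B}\hat{\boldsymbol{\xi}}^{[l_1]}=\mathbf{0}$ (residualization completely annihilates every earlier direction), proved by composing the rank-one deflation operators and checking that the base stage is mapped to zero. You instead show only that this score vector lies in $\operatorname{span}\{\hat{\boldsymbol{\rho}}^{[u]}:l_1\le u<l_2\}$ and that each $\hat{\boldsymbol{\rho}}^{[u]}$ with $u<l_2$ is orthogonal to $\mathbf{y}^{[l_2]}$, the latter via your coupled induction on $\hat{\boldsymbol{\rho}}^{[j]\top}\Theta^{[l]}=\mathbf{0}$ and $\hat{\boldsymbol{\rho}}^{[j]\top}\mathbf{y}^{[l]}=0$; I checked the one-step identities and the inductive step and they close correctly. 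Both arguments are sound. Your route delivers the sample-level score orthogonality of Theorem~\ref{proposition: orthnormality of PLS scores} and the score--residual orthogonality as byproducts of a single induction, mirroring the paper's population-level argument in Appendix~\ref{section:proof:prop:residualization_equiv_eigen}; the paper's route buys the stronger annihilation property, which makes the off-diagonal vanishing independent of any feature of $\mathbf{y}^{[l_2]}$ and is the quantity verified numerically in Table~\ref{tab:geometric_validation}.
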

	The proof of Proposition \ref{proposition: modified orthnormality of PLS components} is provided in Appendix \ref{section:proof:proposition: modified orthnormality of PLS components}.
	
	The next proposition states that the vectors of estimated PLS scores for different iteration numbers are mutually orthogonal.
	\begin{theorem}	\label{proposition: orthnormality of PLS scores}
		Recall from Lemma \ref{proposition:closed_form_orthgonalization} that   $\widehat{\boldsymbol{\rho}}^{[l]}$ denote the $n$-dimensional vector whose elements consist of the $l$-th estimated PLS scores ($l=1, \ldots, L$) of $n$ observations.
		The vectors $\widehat{\boldsymbol{\rho}}^{[1]}, \widehat{\boldsymbol{\rho}}^{[2]}, \ldots, \widehat{\boldsymbol{\rho}}^{[L]}$  are mutually orthogonal in the sense that
		\begin{equation*}
			\widehat{\boldsymbol{\rho}}^{[l_1] \top}  \widehat{\boldsymbol{\rho}}^{[l_2] }= 0 \quad \textnormal{for} \quad \; l_1,l_2 \in \{1, \ldots, L\}, \; l_1 \ne l_2. 
		\end{equation*}
	\end{theorem}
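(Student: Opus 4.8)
The plan is to deduce the score orthogonality from a single auxiliary fact about the residualized predictors, namely that each residualized predictor is orthogonal (in the sample sense, as an element of $\widetilde{\mathbb{H}}$) to every earlier score. Concretely, I would establish
\[
  \sum_{i=1}^n \widehat{\rho}_i^{[j]}\,\widetilde{W}_i^{[l]} = 0 \in \widetilde{\mathbb{H}}
  \qquad \text{for all } j < l,
\]
and then derive the theorem from it. Throughout I take $l_1 < l_2$ without loss of generality and write $j = l_1$, $l = l_2$.

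Given the displayed relation, the theorem is immediate: using the score definition $\widehat{\rho}_i^{[l]}=\langle \widetilde{W}_i^{[l]},\widehat{\xi}^{[l]}\rangle_{\mathbb{H}}$ together with linearity of the inner product in its first slot,
\[
  \widehat{\boldsymbol{\rho}}^{[j]\top}\widehat{\boldsymbol{\rho}}^{[l]}
  = \sum_{i=1}^n \widehat{\rho}_i^{[j]}\,\langle \widetilde{W}_i^{[l]},\widehat{\xi}^{[l]}\rangle_{\mathbb{H}}
  = \Bigl\langle \sum_{i=1}^n \widehat{\rho}_i^{[j]}\,\widetilde{W}_i^{[l]},\ \widehat{\xi}^{[l]}\Bigr\rangle_{\mathbb{H}}
  = 0 .
\]
The base case of the auxiliary relation (with $l=j+1$) is a direct computation from Lemma~\ref{proposition:closed_form_orthgonalization}: substituting the update $\widetilde{W}_i^{[j+1]}=\widetilde{W}_i^{[j]}-\widehat{\rho}_i^{[j]}\widehat{\delta}^{[j]}$ and using that $\widehat{\delta}^{[j]}=\|\widehat{\boldsymbol{\rho}}^{[j]}\|_2^{-2}\sum_i \widehat{\rho}_i^{[j]}\widetilde{W}_i^{[j]}$ is precisely the least-squares coefficient, one obtains $\sum_i \widehat{\rho}_i^{[j]}\widetilde{W}_i^{[j+1]} = \sum_i \widehat{\rho}_i^{[j]}\widetilde{W}_i^{[j]} - \bigl(\sum_i (\widehat{\rho}_i^{[j]})^2\bigr)\widehat{\delta}^{[j]}=0$.

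The remaining and most delicate part is proving the auxiliary relation for general $l$, and the main obstacle is a genuine circularity: pushing the relation from step $l$ to step $l+1$ already requires the score orthogonality $\widehat{\boldsymbol{\rho}}^{[j]\top}\widehat{\boldsymbol{\rho}}^{[l]}=0$. I would therefore bootstrap the two statements together via induction on $l$, with hypothesis $H(l)$: ``$\sum_i \widehat{\rho}_i^{[j]}\widetilde{W}_i^{[l]}=0$ for every $j<l$.'' Under $H(l)$, the inner-product identity of the previous paragraph first yields $\widehat{\boldsymbol{\rho}}^{[j]\top}\widehat{\boldsymbol{\rho}}^{[l]}=0$ for all $j<l$. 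Then, expanding the update,
\[
  \sum_{i=1}^n \widehat{\rho}_i^{[j]}\,\widetilde{W}_i^{[l+1]}
  = \sum_{i=1}^n \widehat{\rho}_i^{[j]}\,\widetilde{W}_i^{[l]}
  - \Bigl(\sum_{i=1}^n \widehat{\rho}_i^{[j]}\widehat{\rho}_i^{[l]}\Bigr)\widehat{\delta}^{[l]},
\]
holds for every $j\le l$: when $j<l$ the first term vanishes by $H(l)$ and the scalar coefficient vanishes by the just-derived score orthogonality, while the case $j=l$ reduces to the base-case computation (the whole expression collapses by the definition of $\widehat{\delta}^{[l]}$). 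This establishes $H(l+1)$ and closes the induction; combined with the reduction above and symmetry in $(l_1,l_2)$, the theorem follows.
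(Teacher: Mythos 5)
Your proof is correct and follows essentially the same route as the paper's: both reduce the score orthogonality to the vanishing of $\sum_{i=1}^n \widehat{\rho}_i^{[j]}\,\widetilde{W}_i^{[l]}$ for $j<l$, establish the base case from the normal equations defining $\widehat{\delta}^{[j]}$, and propagate it through the residualization recursion. The only cosmetic difference is that the paper dispatches the correction term by observing that the recursion map is linear in the weighted sum (hence sends zero to zero), whereas you dispatch it via the bootstrapped score orthogonality at the intermediate level; the two observations are equivalent here.
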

	The proof of Proposition \ref{proposition: orthnormality of PLS scores} is provided in Appendix \ref{section:proof:proposition: orthnormality of PLS scores}.

\section{Numerical Studies}\label{section:experiments}
This section evaluates the finite-sample performance of the proposed Hybrid PLS algorithm. We begin in Section \ref{section:simul_geom} by  numerically verifying the key algebraic properties derived in Section \ref{section:main:our_algorithm}. Section \ref{section:simul:beta} subsequently assesses the sensitivity of regression coefficient estimation to regularization. In Section \ref{section:simul_comparison}, we utilize synthetic data to demonstrate the method's capacity to isolate predictive signals amidst non-predictive noise and cross-modality correlation. Finally, Section \ref{section:real_data} establishes the method superiority over baseline methods through an application to the Emory University renal scan data.

Throughout these analyses, we consider settings involving two functional predictors supported on $t \in [0,1]$, denoted as $X_1(t)$ and $X_2(t)$. Unless otherwise specified, the corresponding regularization parameters $\lambda_1$ and $\lambda_2$ are selected via 5-fold cross-validation on the training set, minimizing the root mean prediction squared error (RMSE) over the grid $\{0.001, 0.01, 0.1,   1\}$. All results are reported based on 100 Monte Carlo replications. The proposed method is implemented in the \texttt{FShybridPLS} R package. Source code and replication scripts for the simulations can be found at \url{https://github.com/Jong-Min-Moon/FShybridPLS}.

\subsection{Numerical Validation of Geometric Properties}\label{section:simul_geom}
We numerically verify the geometric properties of hybrid PLS established in Section~\ref{section:sub:geom}, namely orthogonality between residualized predictors and extracted directions (Lemma~\ref{lemma:annihilation}), mutual orthogonality of the PLS directions (Proposition~\ref{proposition: modified orthnormality of PLS components}), and orthogonality of the resulting PLS scores (Proposition~\ref{proposition: orthnormality of PLS scores}).

Synthetic data are generated to stress-test these algebraic properties, which are independent of distributional assumptions and predictive structure. Two independent latent factors are sampled as
$
U_{i} \sim \mathcal{N}(0,10^2)$ 
and
$V_{i} \sim \mathcal{N}(0,0.1^2).
$
Functional predictors  are constructed using a B-spline basis with \(M=15\) basis functions:
\begin{align*}
X_{i,1}(t) = U_{i}\sin(2\pi t), 
\quad
X_{i,2}(t) = V_{i}\sin(10\pi t) + \delta_i(t),
\end{align*}
where \(\delta_i(t)\sim \mathcal{N}(0,0.01^2)\). Scalar predictors \(\mathbf{Z}_i\in\mathbb{R}^{50}\) are generated in rank-deficient form as
$
\mathbf{Z}_i = \mathbf{A}
(
U_{i},
V_{i}
)^\top
+ \boldsymbol{\delta}_i,
$
with \(\mathbf{A}\in\mathbb{R}^{50\times 2}\) having i.i.d. \(\mathrm{Unif}(-1,1)\) entries  \(\boldsymbol{\delta}_i\) is drawn from standard bivariate normal.
The response is defined as
$
Y_i = 0.5\, t_{i,1} + 10\, t_{i,2} + \epsilon_i$,
where $\epsilon_i\sim\mathcal{N}(0,1),
$
so that the low-variance factor \(t_2\) dominates the response. Although the functional predictors enter nonlinearly, the properties under consideration still hold, as they are purely algebraic.

We extract \(L=10\) components under three regularization regimes--\emph{weak}, \emph{mixed}, and \emph{strong}--with \((\lambda_1,\lambda_2)=(0.1,0.1)\), \((0.1,10)\), and \((10,10)\), respectively, and report the maximum deviations from orthogonality between: residualized predictors and PLS directions, between PLS directions, and between PLS scores:
\begin{equation*}
\max_{1 \leq k < l \leq L}
\sqrt{
\sum_{i=1}^n
\langle 
\widehat{\xi}^{[k]},
\widetilde{W}_i^{[l]}  \rangle_\mathbb{H}^2
}
\quad
%%%%%%%%%%%%%%
\max_{1 \leq k < l \leq L}|\langle \widehat{\xi}^{[k]}, \widehat{\xi}^{[l]} \rangle_{\mathbb{H},\Lambda}|
%%%%%%%%%%
\quad
\max_{1 \leq k < l \leq L}
\sqrt{
\sum_{i=1}^n
\mathrm{Cor}(\widehat{\rho}_i^{[k]}, \widehat{\rho}_i^{[l]})^2
}
\end{equation*}  
Table~\ref{tab:geometric_validation} shows that all metrics remained negligible (on the order of $10^{-11}$ to $10^{-16}$) across scenarios. This demonstrates that the orthogonality properties hold robustly across regularization levels, confirming the algorithm's numerical stability under varying hyperparameter settings.

\begin{table}[ht]
\centering
\caption{Verification of orthogonality: residualized predictors vs. extracted directions (Lemma~\ref{lemma:annihilation}), PLS directions (Proposition~\ref{proposition: modified orthnormality of PLS components}), and PLS scores (Proposition~\ref{proposition: orthnormality of PLS scores}) under three regularization scenarios. Entries show mean maximum deviation from zero over 100 Monte Carlo replications (SEs in parentheses).}
\label{tab:geometric_validation}
\begin{tabular}{lccc}
\toprule
 &
\begin{tabular}[c]{@{}c@{}}
$\displaystyle
\max_{\substack{1 \le k < l \le L}}
\sqrt{\sum_{i=1}^n \langle 
\widehat{\xi}^{[k]},
\widetilde{W}_i^{[l]}
\rangle_\mathbb{H}^2}
$
\end{tabular} &
\begin{tabular}[c]{@{}c@{}}
$\displaystyle
\max_{\substack{1 \le k < l \le L}}
\left|\langle \widehat{\xi}^{[k]}, \widehat{\xi}^{[l]} \rangle_{\mathbb{H},\Lambda}\right|
$  
\end{tabular} &
\begin{tabular}[c]{@{}c@{}}
$\displaystyle
\max_{\substack{1 \le k < l \le L}}
\sqrt{\sum_{i=1}^n \mathrm{Cor}(\widehat{\rho}_i^{[k]}, \widehat{\rho}_i^{[l]})^2}
$  
\end{tabular}
\\
&($\times 10^{-15}$)&($\times 10^{-11}$)&($\times 10^{-16}$)
\\
\midrule
Weak   & $2.20\ (0.64)$ & $0.10\ (0.03)$ & $8.12\ (3.66)$ \\
Mixed  & $2.33\ (0.87)$ & $9.76\ (3.72)$ & $8.49\ (3.87)$ \\
Strong & $2.33\ (0.73)$ & $9.59\ (3.86)$ & $8.22\ (4.19)$ \\
\bottomrule
\end{tabular}
\end{table}

Next, we validate the ability of the NIPALS algorithm  to extract latent components in decreasing order of correlation with the response variable. We analyze the absolute sample correlations between the responses $Y_i$ and the estimated PLS scores $\widehat{ \rho}_i^{[l]}$ under the mixed regularization scenario.

\begin{figure}[t!]
    \centering
    \includegraphics[width=0.55\linewidth]{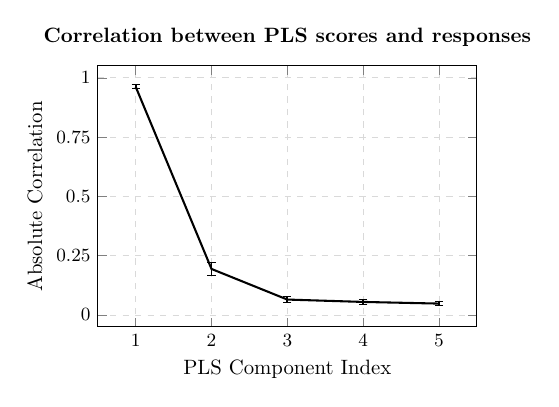}
     \caption{Mean absolute sample correlations between the response vector $Y_1, \ldots, Y_n$ and the latent PLS scores $\widehat{\rho}_1^{[l]}, \ldots, \widehat{\rho}_n^{[l]}$ for $l = 1, \ldots, 5$, averaged over 100 Monte Carlo replications. Error bars indicate one standard deviation.}
    \label{fig:pls_correlation}
    \label{fig:correlation_response}
\end{figure}
 As illustrated in Figure \ref{fig:correlation_response}, the first component exhibits the highest correlation with the response, effectively capturing the primary predictive signal. The second component accounts for a smaller but non-negligible portion of the remaining association, followed by  monotonic decay in correlation for all subsequent components.

\subsection{Regression Coefficient Estimation}\label{section:simul:beta}
To assess the sensitivity of coefficient estimation to regularization, we defined two cubic B-spline coefficient functions: $\beta_1(t) = 2t \sin(3\pi t)$, a modulated wave, and $\beta_2(t) = 2 \exp\left(-10(t-0.5)^2\right)$, a unimodal Gaussian bump, with scalar coefficients $\boldsymbol{\beta} = (1.5, -1.0)^\top$. Functional predictors were generated using 20 B-spline basis functions. The first functional predictor's coefficients were independent standard normal, while the second predictor's coefficients were a mixture of independent zero-mean Gaussian noise (weight 0.6) and the first predictor's coefficients (weight 0.4). The scalar predictor was drawn from a bivariate Gaussian distribution with mean equal to the first two coefficients of the first functional predictor and covariance matrix $0.5^2 \mathbf{I}_2$.
Responses were generated from a PFLM with additive zero-mean Gaussian noise, with the standard deviation set to 5\% of the signal's standard deviation.

We evaluated performance across 100 replications for sample sizes $n \in \{200, 1000, 3000\}$. A sensitivity analysis was conducted over a fixed regularization grid $(\lambda_1, \lambda_2) \in \{0, 0.001, 0.1\} \times \{0, 0.001, 0.1\}$, with model complexity fixed at $L = 10$. Performance was measured using relative $\mathbb{L}^2$ estimation error for functional coefficients and relative $\ell_2$ estimation error for scalar coefficients, defined as
$
\|\hat{\beta}_k - \beta_k\|_{\mathbb{L}^2} / \|\beta_k\|_{\mathbb{L}^2}
$
and
$
\|\hat{\boldsymbol{\beta}} - \boldsymbol{\beta}\|_2 / \|\boldsymbol{\beta}\|_2,
$
respectively.  
Results in Figure~\ref{fig:beta_sensitivity} indicate that regularization plays an important role in estimation accuracy. The results also suggest an interaction between the two regularization parameters, which is expected since, in the estimation procedure described in Theorem~\ref{proposition:linear_regul}, all three components are influenced by both $\lambda$ values through normalization by 
$
q = \sum_{j=1}^K \mathbf{u}_j^\top (\mathbf{B} + \lambda_j \mathbf{B}^{\prime\prime})^{-1} \mathbf{u}_j + \mathbf{v}^\top \mathbf{v}.
$

\begin{figure}[H] % requires \usepackage{float}, forces figure to appear here
    \centering
    \includegraphics[width=\textwidth]{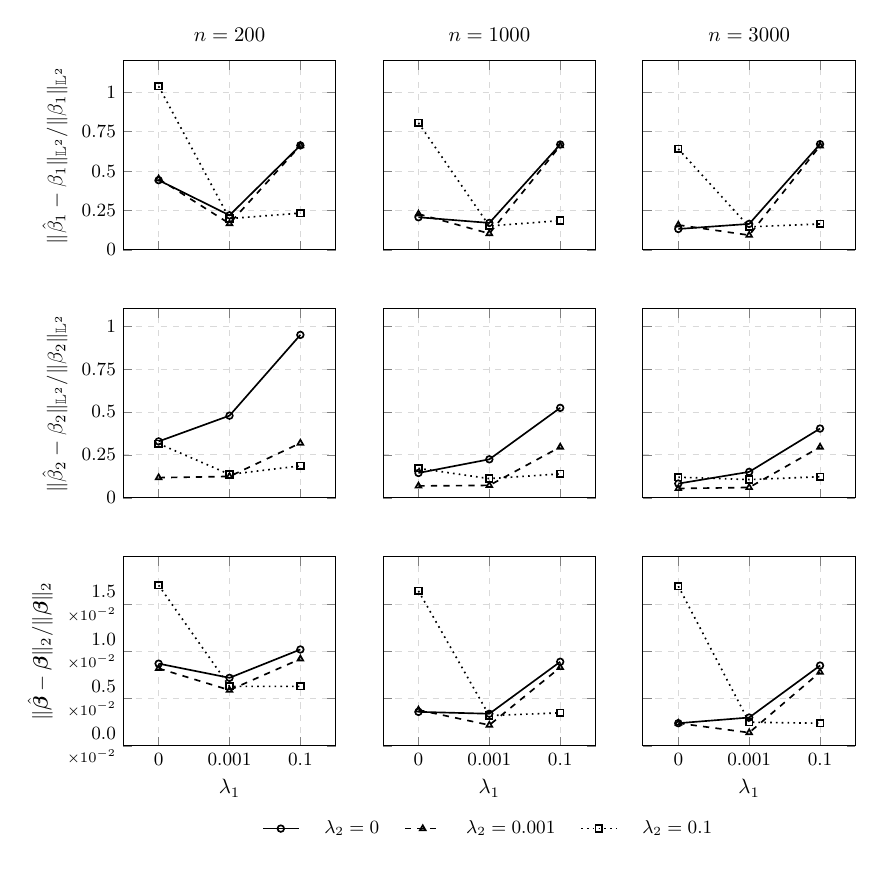}
\vspace{-1.5em}
    \caption{Sensitivity of relative estimation errors to the roughness penalties $\lambda_1$ and $\lambda_2$. Top and middle rows show the relative $\mathbb{L}^2$ error for functional coefficients $\beta_1(t)$ and $\beta_2(t)$, while the bottom row shows the relative $\ell_2$ error for the scalar coefficient $\boldsymbol{\beta}$. Columns correspond to sample sizes $n = 200, 1000, 3000$. The $x$-axis represents $\lambda_1$, with line types indicating $\lambda_2 \in \{0, 0.001, 0.1\}$. Reported values are mean across 100 Monte Carlo replications.}
    \label{fig:beta_sensitivity}
\end{figure}

\subsection{Predictive Power and Parsimony on Synthetic Data}\label{section:simul_comparison}
We evaluate the efficacy of our method in handling complex dependency structures through two synthetic experiments. We benchmark the predictive performance of Hybrid PLS against the PFLR method of \cite{xu_estimation_2020}. This method fits FPCA separately for each functional predictor and combines the resulting scores with the scalar predictors for ordinary least squares  regression. While this approach has been shown to be minimax optimal for prediction accuracy under the assumption of no multi-modal correlation, it is not explicitly designed to accommodate the complex dependency structures considered in this study. 
To ensure a fair comparison, we extend their approach by also applying PCA to the scalar predictors; we refer to this modified benchmark as principal component   regression (PCR) throughout this section.

\paragraph{Scenario 1: 
Orthogonal Nuisance Variance.}
To assess Hybrid PLS's ability to identify   subtle predictive structure in the presence of dominant but non-predictive variation, we construct a simulation in which the primary source of predictor variance is mathematically orthogonal to the response.
We defined two latent factors for each subject \( i = 1, \dots, 400 \): a predictive signal \( U_i \sim \mathcal{N}(0,1) \) and a nuisance factor \( V_i \). To remove any predictive information, \( V_i \) was constructed as the residual from regressing a draw from \( \mathcal{N}(0,1) \) onto the response \( Y_i \), and then scaled by a factor of 5 to dominate total variance.
The response was defined as
$
Y_i = 2 U_i + \epsilon_{i},
$
where \( \epsilon_{i} \) is centered Gaussian noise with standard deviation equal to 5\% of the sample standard deviation of \( 2 U_i \).
Functional predictors were generated as
\begin{align*}
X_{i1}(t) &= V_i \sin(4\pi t) + U_i \sin(2\pi t) + e_{i1}(t), \\
X_{i2}(t) &= V_i \cos(4\pi t) + U_i \cos(2\pi t) + e_{i2}(t),
\end{align*}
with \( e_{ik}(t) \sim \mathcal{N}(0, 0.1^2) \).
Scalar predictors \( \mathbf{Z}_i \in \mathbb{R}^5 \) were generated as
$
Z_{ij}
=
V_i + \delta_{ij}
$
for
$j = 1,\dots,4$
and
$
Z_{i5} = U_i + \delta_{ij}
$ where
$\delta_{ij} \sim \mathcal{N}(0, 0.1^2)$.
We split the data into 50\% training and 50\% test sets and evaluated test-set RMSE. For PCR, we additionally computed inter-modality correlations among the scores forming each combined component.

Figure~\ref{fig:simul_1}-(a) shows that,
under Scenario 1,
Hybrid PLS outperforms PCR in identifying the predictive component at the first step. PCR incurs a large first-component error (RMSE $\approx 0.66$), reflecting its emphasis on a high-variance nuisance factor orthogonal to the response. In contrast, Hybrid PLS leverages response covariance to suppress this variation, yielding a substantially lower error (RMSE $\approx 0.25$). Performance converges after the second component, consistent with a single signal latent factor in the data-generating process.
In addition,
Figure~\ref{fig:simul_1}-(b) shows that, under Scenario 1,
PCR   exhibits nearly perfect  cross-modality PC score correlations for the first two components.   This indicates that the first two components primarily capture $V_i$ and $U_i$,  resulting in redundant representations and inflated variance.

\begin{figure}[b!]
     \centering
     \includegraphics[width=0.95\linewidth]{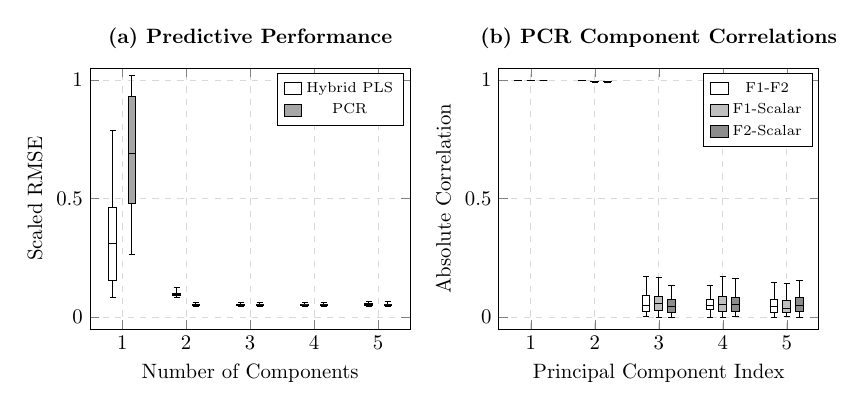}
 \caption{Scenario 1:   a predictive latent factor and a high-variance nuisance component.
    (a) Predictive performance (scaled test RMSE) of Hybrid PLS vs. PCR. 
    For PCR, the x-axis indicates $L$ components per source (two functional, one scalar), yielding $3L$ total predictors.
    (b) Absolute correlations between PC  scores from the three predictor sources. Boxplots show distributions across 100 replications.   }
     \label{fig:simul_1}
\end{figure}

\paragraph{Scenario 2: Function-Driven Cross-modality Correlation.} We next consider a setting where cross-modality correlation is induced by one functional predictor. 
Data were generated for \(n = 200\) subjects. Each functional predictor was represented using 20 B-spline basis functions. To induce realistic multicollinearity, the basis coefficients for the first functional predictor  were sampled independently from a standard normal distribution, while the coefficients for 
 the second functional predictor 
  were constructed as a weighted mixture of independent standard normal noise (weight 0.6) and the coefficients of the first functional predictor (weight 0.4). Cross-modal dependence was introduced by defining the scalar vector \(\mathbf{Z}\) using the first six basis coefficients of the first functional predictor with added Gaussian noise \(N(0, 0.5^2)\).
The functional regression coefficients were specified as \(\beta_1(t) = 2t \sin(3\pi t)\) and \(\beta_2(t) = 2 \exp\{-10(t - 0.5)^2\}\), both expressed in the same B-spline basis. The scalar regression coefficients were set to \(\boldsymbol{\beta} = (0.3, -0.2, 0.3, -0.2, 0.3, -0.2)\). The noiseless response was generated from a linear model combining functional and scalar effects, and the observed response \(Y\) was obtained by adding zero-mean Gaussian noise with standard deviation equal to 5\% of that of the noiseless response. We split the data into 50\% training and 50\% test sets and evaluated test-set RMSE. For PCR, we additionally computed inter-modality correlations among the scores forming each combined component.

 Figure \ref{fig:simul_1}-(a) shows that,
 under Scenario 2, 
 Hybrid PLS demonstrates superior parsimony, predictive power, and stability. In the first component, it achieves a significantly lower mean error (RMSE $\approx 0.27$ vs. $\approx 0.74$) and a substantially smaller standard error compared to PCR. While the predictive performance of the two methods converges after the second component, Hybrid PLS exhibits consistently lower variance across all number of components. Furthermore, Figure \ref{fig:simul_1}-(b) reveals  strong correlations   between the functional and scalar PC scores, confirming that PCR extracts redundant information across modalities.
\begin{figure}[b!]
     \centering
     \includegraphics[width=0.95\linewidth]{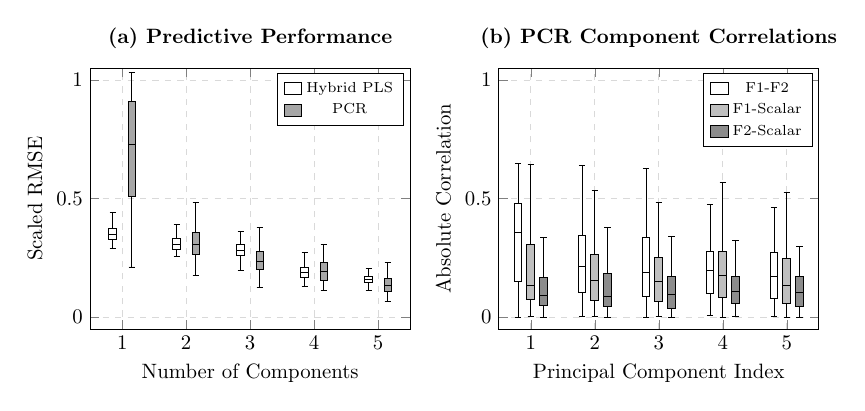}
\caption{Scenario 2: one functional latent factor.
  (a) Predictive performance (scaled test RMSE) of Hybrid PLS vs. PCR. 
    For PCR, the x-axis indicates $L$ components per source (two functional, one scalar), yielding $3L$ total predictors.
    (b) Absolute correlations between PC  scores from the three predictor sources. Boxplots show distributions across 100 replications. }
     \label{fig:simul_2}
\end{figure}

\subsection{Application to Renal Study Data}\label{section:real_data}
We evaluate our method on the Emory University renal study data \citep{changBayesianLatentClass2020}, which comprises 216 kidneys evaluated via diuretic renal scans. Following MAG3 injection, two functional predictors were recorded for each subject: (i) baseline renogram (time–activity) curves measuring MAG3 photon counts at 59 time points over 24 minutes, and (ii) post-furosemide curves measuring MAG3 photon counts at 40 time points during a subsequent 20-minute scan after furosemide administration. Several interpretable features of renogram curves are indicative of the likelihood of kidney obstruction. Typically, an unobstructed kidney is characterized by rapid MAG3 uptake followed by prompt excretion, suggesting no blockage (see dashed lines in the first two panels of Figure~\ref{fig:EDA}). In contrast, an obstructed kidney often exhibits prolonged MAG3 accumulation in the baseline curve, followed by poor excretion into the bladder throughout the post-furosemide renogram (see solid lines in the first two panels of Figure~\ref{fig:EDA}). However, in practice, substantial kidney-to-kidney variability exists in renogram curves, and many exhibit patterns that are not clear-cut, such as equivocal cases (see dotted lines in Figure~\ref{fig:EDA}).

To account for inter-subject variability in tracer uptake efficiency, both functional curves were smoothed using a B-spline basis with 20 basis functions and normalized by the peak intensity of the subject's baseline renogram. Also the time domain was normalized into $t \in [0,1]$. In addition to the renogram curves, the dataset includes fifteen scalar covariates and three response variables. The scalar covariates consist of subject age and fourteen summary statistics derived from the baseline and post-furosemide renogram curves, while the response variables are diagnosis scores provided by three nuclear medicine experts. See Table \ref{tab:renogram_variables} for details. Because the summary statistics are direct transformations of the functional predictors, the data exhibit substantial cross-modality correlation. Although conventional analyses often rely exclusively on these scalar covariates, we directly incorporate the functional predictors to leverage latent information in the renogram curves that is not captured by the scalar summaries. The response variable $Y$ is defined as the average obstruction rating across the three experts, min-max scaled to the unit interval $[0,1]$, with larger values indicating a greater degree of kidney obstruction. Specifically, values closer to 1 correspond to higher obstruction severity, values near 0.5 indicate equivocal cases, and values closer to 0 indicate a high likelihood of non-obstruction.

\begin{table}[htbp]
\centering
\caption{Eighteen renogram variables (4 clinical and 14 pharmacokinetic variables).}
\label{tab:renogram_variables}
\begin{tabular}{clp{9cm}}
\toprule
Index & Variable & Description \\
\midrule
1  & Age & Range: 18 to 87 years \\

2--4 & Expert's score & Ranges from $-1.0$ to $1.0$, with higher scores (from three experts) indicating a higher likelihood of obstruction \\

5  & Cortical AUC-d1 & Area under the first derivative of the cortical renogram \\

6  & BL AUC-d1 & Area under the first derivative of the baseline (BL) renogram \\

7  & Cortical mv & Minimum velocity (mv) of the cortical renogram \\

8  & BL mv & Minimum velocity (mv) of the BL renogram \\

9  & Cortical tminv & Time to minimum velocity (tminv) of the cortical renogram \\

10 & BL tminv & Time to minimum velocity (tminv) of the BL renogram \\

11 & Cortical tmax & Time to maximum (tmax) of the cortical renogram \\

12 & BL tmax & Time to maximum (tmax) of the BL renogram \\

13 & PF AUC & Area under the post-furosemide (PF) renogram \\

14 & Pelvis AUC & Area under the renogram of the pelvis region \\

15 & PF max & Maximum (max) of the PF renogram \\

16 & Pelvis max & Maximum (max) of the pelvis renogram \\

17 & lastPF/maxBL & Ratio of PF renogram at the last time point to BL maximum \\

18 & firstPF/maxBL & Ratio of PF renogram at the first time point to BL maximum \\
\bottomrule
\end{tabular}

\vspace{0.5em}
\footnotesize
\textbf{Abbreviations:} BL, baseline; PF, post-furosemide.
\end{table}

To evaluate predictive performance and stability, we conducted 100 replications using a subsampling strategy. In each replication, a random 40\% subset of subjects (yielding approximately 86 kidneys) was selected and partitioned into 70\% training and 30\% testing sets. We report the range-normalized test RMSE. We benchmarked our approach against Penalized Functional Regression (PFR; \citealp{goldsmithPenalizedFunctionalRegression2011}), implemented via the \texttt{refund} package (details in Table \ref{tab:pflm_summary}), which utilizes all predictors, as well as a OLS model using only the scalar covariates.

As illustrated in Figure \ref{fig:real_data_prediction_subplot}-(a), Hybrid PLS captures the essential regression signal within its first component, outperforming both PFR and OLS
in mean and variance while matching the predictive power of PCR. Hybrid PLS achieves this with substantially greater parsimony: comparable performance is obtained using only $L$ Hybrid PLS components, whereas PCR requires  $3L$ principal component scores---$L$ extracted from each source (two functional and one scalar)---to reach a similar level of predictive accuracy.

Furthermore, 
Figure \ref{fig:real_data_prediction_subplot}-(b) demonstrates that
Hybrid PLS maintains stable RMSE and standard errors as the number of components increases, whereas PCR exhibits rising errors and instability indicative of overfitting.
These findings suggest that Hybrid PLS effectively isolates variation correlated with the obstruction diagnosis, whereas PCR is prone to capturing non-predictive noise.
The significant improvement over OLS confirms that our method successfully extracts latent information from the renogram curves that is not encoded in the scalar summaries. In contrast, PFR yields negligible improvement in prediction accuracy over OLS while exhibiting higher variance. Furthermore, it performs worse than both PCR and Hybrid PLS. The observation that PFR displays greater instability than OLS suggests that the PFR's reliance on regularization without explicit dimension reduction was ineffective in managing cross-modal correlations in this context, ultimately hindering predictive performance.

\begin{figure}[b!]
     \centering
     \includegraphics[width=0.95\linewidth]{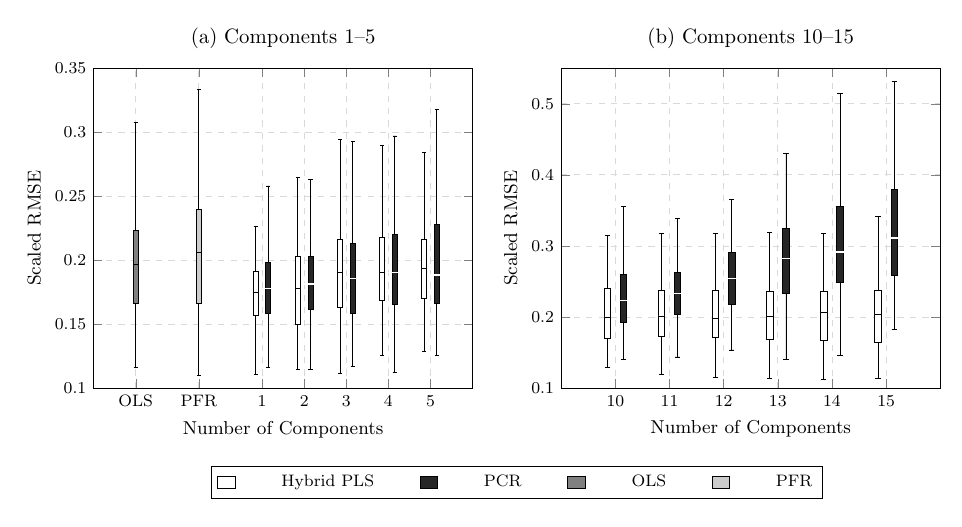}
\caption{Predictive performance on the kidney dataset. Scaled test RMSE is compared for Hybrid PLS, PCR, Scalar OLS, and PFR across 100 subsampling replications. Panel (a) shows results for the baseline methods (Scalar OLS and PFR) alongside the performance of Hybrid PLS and PCR using 1-5 components (for PCR, $L$ components are extracted per source---two functional and one scalar---yielding a total of $3L$ predictors). Panel (b) presents the performance of Hybrid PLS and PCR using 10-15 components. Boxplots summarize the distribution of scaled test RMSE across the 100 replications.
}
     \label{fig:real_data_prediction_subplot}
\end{figure}

Figure~\ref{fig:kidney_cor} shows strong inter-source correlations among the PCR principal component scores, particularly for the first component. Figure~\ref{fig:kidney_weights} further demonstrates that the first components extracted by Hybrid PLS and PCR capture clinically meaningful features of the renogram curves associated with the obstruction mechanism. Specifically, consistent with the typical pattern described above---where obstructed kidneys exhibit slower initial MAG3 accumulation than non-obstructed kidneys, followed by a steady increase throughout the baseline renogram, whereas non-obstructed kidneys show early MAG3 drainage---the first component displays negative values in the early portion of the curve and positive values in the later portion. We also observe that, for the functional predictors, PLS extracts a comparatively simpler (smoother) component structure. For the scalar predictors, the relative variable weights differ substantially between methods: notably, variable 14 ranks second in importance under PLS but does not appear among the top five variables under PCR. Overall, PLS assigns smaller weights across components, likely reflecting its unified treatment of cross-modality correlations and built-in regularization, rather than being influenced by correlations across separately constructed principal components.

\begin{figure}
    \centering
    \includegraphics[width=0.65\linewidth]{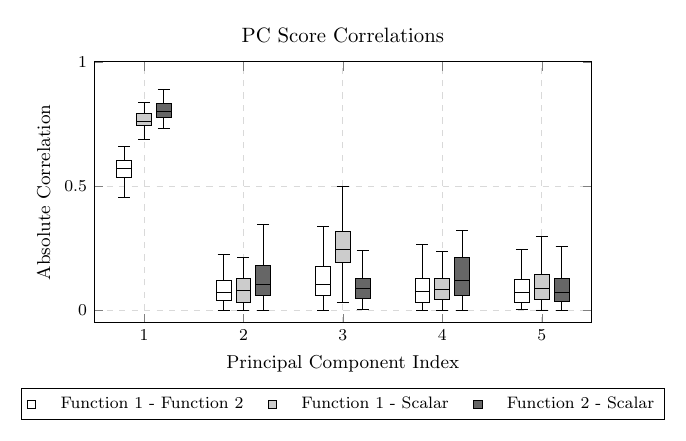}
\caption{Absolute correlations between PCR component scores from the three predictor sources on the kidney dataset. Boxplots show distributions across 100 subsampling replications by component index.}
    \label{fig:kidney_cor}
\end{figure}
%%%%%%%%%%%%%%%%%%%%%%%%%%%%%%%
\begin{figure}[t!]
\centering
\includegraphics[width=0.95\textwidth]{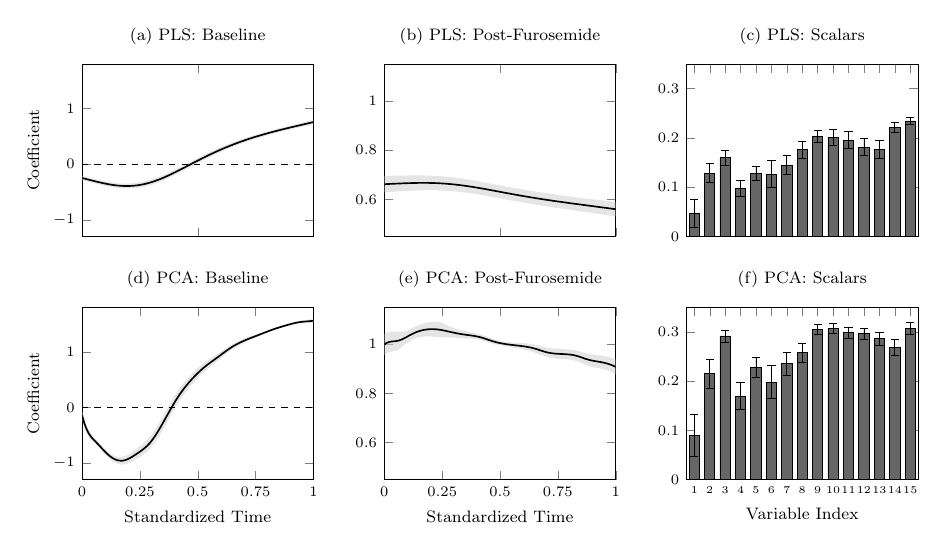}
\caption[Comparison of first component directions for Hybrid PLS and PCR]{%
\textbf{Comparison of the first latent component directions for Hybrid PLS (top row) and PCR (bottom row).}
The panels are organized by predictor source:
\textbf{(a, d)} estimated functional weights for the baseline renogram;
\textbf{(b, e)} estimated functional weights for the post-furosemide renogram; and
\textbf{(c, f)} coefficient loadings for the scalar covariates.
In the functional plots, solid curves denote the pointwise mean across 100 subsampling replications, with shaded bands representing the $\text{mean} \pm 1 \text{ SD}$ variability. For the scalar covariates, bars indicate the mean absolute loading and error bars span $\pm 1 \text{ SD}$.}
\label{fig:kidney_weights}
\end{figure}
\section{Discussion}\label{discussion}
In this article, we proposed a Hybrid Partial Least Squares (Hybrid PLS) framework to address the challenge of simultaneously modeling multiple functional and scalar predictors. By constructing a unified hybrid Hilbert space, we rigorously extended the classical NIPALS algorithm to jointly manage the complex dependency structures---particularly cross-modality correlations---that are often overlooked by conventional two-stage approaches. We established the theoretical well-posedness of the population-level problem and proved the consistency of the proposed estimator. From a computational perspective, we derived a closed-form solution for the regularized iterative steps, allowing the method to efficiently handle dense or irregular functional data while strictly enforcing geometric properties such as score orthogonality.
Our numerical studies and the application to the Emory University renal dataset highlight the practical advantages of this unified approach. 
For future research direction, deriving theoretical convergence rates and minimax optimality bounds would provide a deeper understanding of its statistical efficiency relative to other high-dimensional regression techniques.

\bibliography{bibliography}
%%%%%%%%%%%%%%%%%%%%%%%%%%%%%%%%%%%%%%%%%%%%%%
%% Example with single Appendix:            %%
%%%%%%%%%%%%%%%%%%%%%%%%%%%%%%%%%%%%%%%%%%%%%%
\begin{appendix}\label{appendix}
 
\section{Overview of Appendix}
This appendix provides the mathematical foundations and detailed proofs for the Hybrid PLS framework. The content is organized as follows:

\begin{itemize}
    \item Appendix \ref{section:proof:lemma:hybrid_hilbert_space} verifies that the product space $\mathbb{H}$ defined in \textbf{Section \ref{section:hybrid_hilbert}} satisfies the properties of a separable Hilbert space.

        \item Appendix \ref{section:proof:section:hybrid_hilbert} establishes the well-posedness of the population-level framework in \textbf{Section \ref{section:hybrid_hilbert}}, providing proofs for the compactness of the covariance operator (\textbf{Lemma \ref{lemma:cross_cov_functional}}), the population PLS eigenproblem (\textbf{Theorem \ref{theorem:population_PLS}}),
    the equivalence of residualization and orthogonality (\textbf{Theorem \ref{prop:residualization_equiv_eigen}}) and the mean-squared convergence of the estimator (\textbf{Theorem \ref{thm:hybrid_pls_convergence}}).
    
    \item Appendix \ref{section:proof:section:sub:iterative} provides the derivations for the iterative sample-level algorithm presented in \textbf{Section \ref{section:sub:iterative}}. This includes the eigenproblem formulation for the regularized PLS direction (\textbf{Theorem \ref{proposition:eigen_regul}}), its closed-form solution (\textbf{Theorem \ref{proposition:linear_regul}}), and the residualization updates (\textbf{Lemma \ref{proposition:closed_form_orthgonalization}}).
    
    \item Appendix \ref{section:proof:lemma:recursive} proves the recursive linear relationship between the predictors and the PLS directions established in \textbf{Lemma \ref{lemma:recursive}} (\textbf{Section \ref{section:sub:regression_coeff}}).

    \item Appendix \ref{section:appendix:geom} proves the geometric properties of the sample algorithm discussed in \textbf{Section \ref{section:sub:geom}}, specifically the orthonormality of the estimated directions (\textbf{Theorem \ref{proposition: modified orthnormality of PLS components}}) and the orthogonality of the PLS scores (\textbf{Theorem \ref{proposition: orthnormality of PLS scores}}).
\end{itemize}
	\section{Separability of the hybrid space}\label{section:proof:lemma:hybrid_hilbert_space}

		Both $\mathbb{L}^2[0,1]$ and $\mathbb{R}^p$ are Hilbert spaces, with their inner products, norms, and metrics corresponding to the terms defined in equation \eqref{eq: hybrid inner product}. 
        Let us define the metric of $\mathbb{H}$ as
         $d(h_1, h_2) = \Vert h_1 - h_2 \Vert_{\mathbb{H}}$.
        A finite Cartesian product of Hilbert spaces, equipped with an $\ell_2$ norm of the component metrics (which is our metric), is also a Hilbert space.
		Furthermore, both of 	 $\mathbb{L}^2[0,1]$ and $\mathbb{R}^p$ are separable.
		For $\mathbb{L}^2[0,1]$, the set of all polynomials with rational coefficients is a countable and dense subset.
		For $\mathbb{R}^p$,  the set of all vectors with rational components forms a countable and dense subset.
		A finite Cartesian product of separable spaces is also separable.
		Therefore $\mathbb{H} = \bigl( \mathbb{L}^2[0,1] \bigr)^K \times \mathbb{R}^p$ is separable.

	\section{Proof of Section \ref{section:hybrid_hilbert}}\label{section:proof:section:hybrid_hilbert}
	This section provides proof of Theorem \ref{theorem:population_PLS}, 
	demonstrating that the population-level version of the optimization problem solved at each step is well-defined, and 
	proofs of the intermediate lemmas that support the main theorem.
	\subsection{Proof of  Lemma \ref{lemma:cross_cov_functional} }\label{section:proof:lemma:cross_cov_functional}
	\begin{proof} We first show that
    the operator $\mathcal{C}_{YW}$ is a compact operator under condition \eqref{condition_for_compact}
		Any bounded linear functional from a Hilbert space to  $\mathbb{R}$  is a compact operator.
		This is because the image of any bounded set under such a functional is a bounded set in $\mathbb{R}$,
		and by Bolzano-Weierstrass Theorem, every bounded sequence of real numbers has a convergent subsequence.
		Therefore,  to show that $\mathcal{C}_{YW}$ is a compact operator, it suffices to show that 
		$\mathcal{C}_{YW}$ is a bounded linear functional.
		
		\medskip
		\noindent
		\textit{Linearity.}~	  
		The operator $\mathcal{C}_{YW}: \mathbb{H} \rightarrow \mathbb{R}$ is defined as 
		$\mathcal{C}_{YW} h = \langle \Sigma_{YW}, h \rangle_\mathbb{H}$.
		For any $h_1, h_2 \in \mathbb{H}$ and scalar $c \in \mathbb{R}$, the linearity of the inner product implies: \begin{align*} \mathcal{C}_{YW}(h_1 + h_2) &= \langle \Sigma_{YW}, h_1 + h_2 \rangle_\mathbb{H} = \langle \Sigma_{YW}, h_1 \rangle_\mathbb{H} + \langle \Sigma_{YW}, h_2 \rangle_\mathbb{H} = \mathcal{C}_{YW} h_1 + \mathcal{C}_{YW} h_2,
			\\ 
			\mathcal{C}_{YW}(c h) &= \langle \Sigma_{YW}, c h \rangle_\mathbb{H} = c \langle \Sigma_{YW}, h \rangle_\mathbb{H} = c \, \mathcal{C}_{YW} h. \end{align*}
		Thus, $\mathcal{C}_{YW}$ is a linear functional. 
		
		\medskip
		\noindent
		\textit{Boundedness.}~ 
		To show that $\mathcal{C}_{YW}$ is bounded, we need to find a finite constant $M$ such that $|\mathcal{C}_{YW} h| \leq M \|h\|_\mathbb{H}$ for all $h \in \mathbb{H}$. By the Cauchy-Schwarz inequality, we have: \begin{equation*} 
			|\mathcal{C}_{YW} h| = |\langle \Sigma_{YW}, h \rangle_\mathbb{H}| \leq \|\Sigma_{YW}\|_\mathbb{H} \|h\|_\mathbb{H}. 
		\end{equation*}
%%%%%%%%%%%%%%%%
Now, we must verify that $\|\Sigma_{YW}\|_\mathbb{H}$ is finite. 
Let $\mu([0,1])$ denote a Lebesgue measure of $[0,1]$, and let $T = \max_{k=1, \ldots, K} \, \mu([0,1])$.
The squared norm of $\Sigma_{YW}$   is given by:
\begin{equation*}
\|\Sigma_{YW}\|_\mathbb{H}^2 
= 
\langle \Sigma_{YW}, \Sigma_{YW} \rangle_\mathbb{H}
=
\sum_{k=1}^K
    \int_0^1 \mathbb{E} \left[   Y X_k\right](t)^2 \, dt
+ \sum_{r=1}^p \mathbb{E} \left[   Y Z_r\right]^2
			<
			K T Q_1 + p  Q_2 < \infty,
		\end{equation*} 
		where the last inequality uses the condition \eqref{condition_for_compact}.
		Let $M = \sqrt{K T Q_1 + p  Q_2 }$. Thus, we have shown that $|\mathcal{C}_{YW} h| \leq M \|h\|_\mathbb{H}$ for a finite constant $M$.

Next, we show that $\mathcal{U}$ is self-adjoint, positive-semidefinite, and compact, in turn.
		
		\medskip
		\noindent
		\textit{Self-adjoint.}~
		For any $h_1, h_2 \in \mathbb{H}$, we have
		\begin{align*}
			\langle 
			\mathcal{U} h_1, \, h_2 
			\rangle_\mathbb{H}  
			= 
			\bigl \langle 
			\langle h_1, \Sigma_{YW} \rangle_\mathbb{H} \Sigma_{YW},
			\,
			h_2
			\bigr \rangle_\mathbb{H}
			%%%%%%%%%%%%%%%%%%%%%%%%%%%%%%%%%%
			&= \langle h_1, \Sigma_{YW} \rangle_\mathbb{H} \langle \Sigma_{YW}, h_2 \rangle_\mathbb{H}
			%%%%%%%%%%%%%%%%%%%%%%%%%%%
			\\&= 
			\bigl \langle 
			h_1,
			\,
			\langle \Sigma_{YW}, h_2 \rangle_\mathbb{H} \Sigma_{YW} 
			\bigr \rangle_\mathbb{H}
			\\ 	&= \langle h_1, \mathcal{U} h_2 \rangle_\mathbb{H}.
		\end{align*}
		
		\medskip
		\noindent
		\textit{Positive-semidefinite.}~
		For every $h \in \mathbb{H}$, we have
		\begin{equation*}
			\langle 
			\mathcal{U} h,
			\,
			h
			\rangle_\mathbb{H}
			=
			\bigl \langle 
			\langle 
			\Sigma_{YW},\, h \rangle_{\mathbb{H}}  \Sigma_{YW}, \, h
			\bigr \rangle_\mathbb{H}
			= \langle \Sigma_{YW}, \, h \rangle_\mathbb{H}^2 \ge 0.
		\end{equation*}
		
		\medskip
		\noindent
		\textit{Compact.}~
		We have shown that $\mathcal{C}_{YW}$ is a compact operator  under the condition \eqref{condition_for_compact}. 
The composition of two operators is compact if either operator is compact
(Theorem 4.1.3 of \citealp{hsingTheoreticalFoundationsFunctional2015}). Therefore
		$\mathcal{U} := \mathcal{C}_{WY} \circ \mathcal{C}_{YW}$ is a compact operator.
		This completes the proof of Lemma \ref{lemma:cross_cov_functional}.
		
	\end{proof}

\begin{comment}
	\subsection{Proof of Lemma \ref{lemma:cross_cov_operator}}\label{section:proof:lemma:cross_cov_operator}
	\begin{proof}
		For any $h \in \mathbb{H}$ and $d \in \mathbb{R}$, we have:
		\begin{equation*}
			\langle
			\mathcal{C}_{YW} h, d \rangle 
			= 
			\mathbb{E}
			[
			\langle W_1, h \rangle_\mathbb{H} Y_1  
			] 	 \,  d
			%%%%%%%%%%%%%%%%%%
			=
			\mathbb{E} \langle   Y_1 W_1 d , \, h\rangle_{\mathbb{H}} 
			%%%%%%%%%%%%%%%%%%%%%%%%
			= \langle h, \, \mathbb{E}[Y_1 W_1 d   ]\rangle_{\mathbb{H}}
			=
			\langle h, \mathcal{C}_{WY} d \rangle_{\mathbb{H}}
		\end{equation*}
		Thus, we have $\langle \mathcal{C}_{YW} h, d \rangle =  \langle h, \mathcal{C}_{WY} d \rangle_{\mathbb{H}}$, which implies $\mathcal{C}_{WY} = \mathcal{C}_{YW}^\ast$.
		This completes the proof of Lemma \ref{lemma:cross_cov_operator}.
	\end{proof}
    \end{comment}

	\subsection{Proof of Theorem \ref{theorem:population_PLS}}\label{section:proof:theorem:population_PLS}
	\begin{proof}
Since $\mathcal{C}_{YW}$ is compact by Lemma \ref{lemma:cross_cov_functional}, it 
		has the singular value decomposition 
		$ 
			\mathcal{C}_{YW} = \sum \limits_{j=1}^\infty \kappa_j (f_{1j} \otimes f_{2j})
		$ (Theorem 4.3.1 of \citealp{hsingTheoreticalFoundationsFunctional2015}).
		Let $\Vert \cdot \Vert_{\mathrm{op}}$ denote an operator norm. We have
		\begin{equation*}
        \kappa_1^2 = 
			\Vert \mathcal{C}_{YW} \Vert_{\mathrm{op}} = \sup_{\substack{h \in \mathbb{H} \\ \Vert h \Vert_\mathbb{H}=1}} | \mathcal{C}_{YW} h |^2 
			= \sup_{\substack{h \in \mathbb{H} \\ \Vert h \Vert_\mathbb{H}=1}} | \mathbb{E}(\langle W, h \rangle_\mathbb{H} Y)|^2 
			= \sup_{\substack{h \in \mathbb{H} \\ \Vert h \Vert_\mathbb{H}=1}} \textnormal{Cov}^2(\langle W, h \rangle_\mathbb{H}, Y) 
			,
		\end{equation*}
		with maximum attained at $h=f_{11}$, which is an eigenfunction of  $\mathcal{C}_{YW}^* \, \circ \, \mathcal{C}_{YW} = \mathcal{C}_{WY} \, \circ \, \mathcal{C}_{YW} = \mathcal{U}$ corresponding to the largest eigenvalue $\kappa_1^2$(4.3.4 in \citealp{hsingTheoreticalFoundationsFunctional2015}).
		This completes the proof of Theorem \ref{theorem:population_PLS}.
	\end{proof}

 \subsection{Proof of Proposition \ref{prop:residualization_equiv_eigen}}\label{section:proof:prop:residualization_equiv_eigen}
  The proof procedes in the following steps.
 \begin{enumerate}
     \item Show that for every \(s\ge 1\) and every \(j\le s-1\), we have $
\mathbb{E}[\rho^{[j]}\rho^{[s]}]=0
$ for Algorithm \ref{alg:population_pls} (Appendix \ref{section:proof:orthogonal_score_pop}),
%%%%
\item Show that for every \(s\ge 1\) and every \(j\le s-1\), we have
$
\mathbb{E}[Y^{[s]}\rho^{[j]}]=0
$
for Algorithm \ref{alg:population_pls} (Appendix \ref{section:proof:orthogonal_score_residual_pop}),
\item Show the equivalence between Algorithm \ref{alg:population_pls} and optimization problem \eqref{const_eigen_pop} (Appendix \ref{equiv_problem}).
\end{enumerate}

\subsubsection{Orthogonality of the scores in Algorithm \ref{alg:population_pls}}\label{section:proof:orthogonal_score_pop}
\begin{proof}
Let
$
\Sigma_{W^{[l]}}
$ be defined as in Section \ref{section:operators}.
Since 
$
\rho^{[l]} = \langle W^{[l]}, \xi^{[l]} \rangle_{\mathbb{H}},
$
we can rewrite the numerator and denominator of $\delta^{[l]}$ as
\[
\mathbb{E}[ \rho^{[l]}  \,  W^{[l]} ] 
= 
\mathbb{E}[
\langle W^{[l]}, \xi^{[l]} \rangle_{\mathbb{H}} W^{[l]} ]
= \Sigma_{W^{[l]}} \, \xi^{[l]},
\] 
and
\[
\mathbb{E}[ \rho^{[l]}  \,  \rho^{[l]} ] 
= 
\mathbb{E}[
\langle W^{[l]}, \xi^{[l]} \rangle_{\mathbb{H}} \langle W^{[l]}, \xi^{[l]} \rangle_{\mathbb{H}} ]
= 
\langle \Sigma_{W^{[l]}} \, \xi^{[l]}, \xi^{[l]} \rangle_{\mathbb{H}},
\]
respectively.
Hence, at the population level, the residualization step can be written as
\[
W^{[l+1]} = W^{[l]} - 
\langle W^{[l]}, \xi^{[l]} \rangle_{\mathbb{H}}
\frac{ 
 1
}
     { \langle \xi^{[l]}, \Sigma_{W^{[l]}} \, \xi^{[l]} \rangle_{\mathbb{H}} }
     \Sigma_{W^{[l]}} \, \xi^{[l]} .
\]

For $l_1 < l_2$:
\[
\mathbb{E}[\rho^{[l_1]} \, \rho^{[l_2]}]
=
\mathbb{E}\big[ \rho^{[l_1]} \,
                 \langle W^{[l_2]}, \xi^{[l_2]} \rangle_{\mathbb{H}} \big]
=
\mathbb{E}\big[ 
                 \langle \rho^{[l_1]} \, W^{[l_2]}, \xi^{[l_2]} \rangle_{\mathbb{H}} \big]
=
\langle 
\mathbb{E}\big[ \rho^{[l_1]} \,W^{[l_2]}] \,
, \,
\xi^{[l_2]} \rangle_{\mathbb{H}}
\]
Therefore,
to show that $\mathbb{E}[\rho^{[l_1]} \, \rho^{[l_2]}]$,
it suffices to show $\mathbb{E}\big[ \rho^{[l_1]} \,W^{[l_2]}] = 0 \in \mathbb{H} $.

We can express
$\mathbb{E}\big[ \rho^{[l_1]} \,W^{[l_2]}]$ as
\begin{align*}
\mathbb{E}&\big[ \rho^{[l_1]} \,W^{[l_2]}]
%%%%%%%%%%%%%%%%%%%%%%%%%%%%%%%%%
\\&=
\mathbb{E}\big[ \rho^{[l_1]} \,W^{[l_2-1]}]
-
\mathbb{E}[
\rho^{[l_1]}
\frac{ 
\langle W^{[l_2-1]}, \xi^{[l_2-1]} \rangle_{\mathbb{H}}
}
     { \langle \xi^{[l_2-1]}, \Sigma_{W^{[l_2-1]}} \, \xi^{[l_2-1]} \rangle_{\mathbb{H}} }
     \Sigma_{W^{[l_2-1]}} \, \xi^{[l_2-1]}
]
%%%%%%%%%%%%%%%%%%%%%%%%%%%%%%%%%
\\&=
\mathbb{E}\big[ \rho^{[l_1]} \,W^{[l_2-1]}]
-
\mathbb{E}[
\rho^{[l_1]}
\langle W^{[l_2-1]}, \xi^{[l_2-1]} \rangle_{\mathbb{H}}
]
\frac{ 
 1
}{ 
\langle
    \xi^{[l_2-1]}, 
    \Sigma_{W^{[l_2-1]}} \, \xi^{[l_2-1]}
\rangle_{\mathbb{H}} }
     \Sigma_{W^{[l_2-1]}} \, \xi^{[l_2-1]}
%%%%%%%%%%%%%%%%%%%%%%%%%%%%%%%%%
\\&=
\mathbb{E}\big[ \rho^{[l_1]} \,W^{[l_2-1]}]
-
\langle 
\mathbb{E}[
\rho^{[l_1]} W^{[l_2-1]}], \xi^{[l_2-1]} \rangle_{\mathbb{H}}
\frac{ 
 1
}{ 
\langle
    \xi^{[l_2-1]}, 
    \Sigma_{W^{[l_2-1]}} \, \xi^{[l_2-1]}
\rangle_{\mathbb{H}} }
     \Sigma_{W^{[l_2-1]}} \, \xi^{[l_2-1]}
%%%%%%%%%%%%%%%%%%%%%%%%%%%%%%%%%
			\\	&=	H^{[l_2-1]} \bigl( 
            \mathbb{E}[\rho^{[l_1]  }
			W^{[l_2-1]}] \bigr)~(\text{say}).
		\end{align*}
Here, the deterministic operator $H^{[l_2-1]}: \mathbb{H} \mapsto \mathbb{H}$ maps the zero element of $\mathbb{H}$ to itself. Using this relationship, we have
\begin{align*}
\mathbb{E}\big[ \rho^{[l_1]} \,W^{[l_2]}]  
			%%%%%%%%%%%%%%%%
&=
H^{[l_2-1]} \bigl( \mathbb{E}\big[ \rho^{[l_1]} \,W^{[l_2-1]}] \bigr)
%%%%%%%%%%%%%%%%
\\&=
H^{[l_2-1]} 
\circ 
H^{[l_2-2]}
\bigl(
\mathbb{E}\big[ \rho^{[l_1]} \,W^{[l_2-2]}]
\bigr)
			%%%%%%%%%%%%%%%%
			\\&=
			H^{[l_2-1]} 
			\circ 
			H^{[l_2-2]}
			\circ 
			\ldots
			\circ 
			H^{[l_1]}
			\bigl( \mathbb{E}\big[ \rho^{[l_1]} \,W^{[l_1]}] \bigr).
		\end{align*}
Therefore,
to  show that $\mathbb{E}[\rho^{[l_1]} \, \rho^{[l_2]}]$,
it suffices to show $H^{[l_1]}
			\bigl( \mathbb{E}\big[ \rho^{[l_1]} \,W^{[l_1]}] \bigr) = 0 \in \mathbb{H}$. Note that
we have
\begin{equation*}
    \mathbb{E}[\rho^{[l_1]} W^{[l_1]}] = \mathbb{E}[\langle W^{[l_1]}, \xi^{[l_1]} \rangle_{\mathbb{H}} \, W^{[l_1]}] = \Sigma_{W^{[l_1]}} \, \xi^{[l_1]}.
\end{equation*}
Leveraging this, we have
\begin{align*}
H^{[l_1]}&
			\bigl( \mathbb{E}\big[ \rho^{[l_1]} \,W^{[l_1]}] \bigr)
 \\           &=
\mathbb{E}\big[ \rho^{[l_1]} \,W^{[l_1]}]
-
\langle 
\mathbb{E}[
\rho^{[l_1]} W^{[l_1]}], \xi^{[l_1]} \rangle_{\mathbb{H}}
\frac{ 
 1
}{ 
\langle
    \xi^{[l_1]}, 
    \Sigma_{W^{[l_1]}} \, \xi^{[l_1]}
\rangle_{\mathbb{H}} }
     \Sigma_{W^{[l_1]}} \, \xi^{[l_1]}
 \\           &=
\Sigma_{W^{[l_1]}} \, \xi^{[l_1]}
-
\langle 
\Sigma_{W^{[l_1]}} \, \xi^{[l_1]}, \xi^{[l_1]} \rangle_{\mathbb{H}}
\frac{ 
 1
}{ 
\langle
    \xi^{[l_1]}, 
    \Sigma_{W^{[l_1]}} \, \xi^{[l_1]}
\rangle_{\mathbb{H}} }
     \Sigma_{W^{[l_1]}} \, \xi^{[l_1]}
\\           &=
\Sigma_{W^{[l_1]}} \, \xi^{[l_1]}
-
\Sigma_{W^{[l_1]}} \, \xi^{[l_1]}
=0 \in \mathbb{H}.
\end{align*}
This completes the proof of
orthogonality of the scores in Algorithm \ref{alg:population_pls}.

\end{proof}

\subsubsection{Orthogonality of the scores and response residuals in Algorithm \ref{alg:population_pls}}\label{section:proof:orthogonal_score_residual_pop}
\begin{proof}
Fix \(s\ge 2\) and \(j\le s-1\). Recall the residualization for the response:
\[
Y^{[s]} \;=\; Y - \sum_{k=1}^{s-1} \nu^{[k]}\,\rho^{[k]},
\qquad
\nu^{[k]} \;=\; \frac{\mathbb{E}[Y^{[k]}\rho^{[k]}]}{\mathbb{E}[(\rho^{[k]})^2]}.
\]
Compute the covariance of interest by expanding the residual:
\begin{align*}
\mathbb{E}\big[ Y^{[s]}\,\rho^{[j]} \big]
= \mathbb{E}\Big[ \Big( Y - \sum_{k=1}^{s-1} \nu^{[k]}\rho^{[k]}\Big)\, \rho^{[j]} \Big] 
= \mathbb{E}[Y\,\rho^{[j]}] \;-\; \sum_{k=1}^{s-1} \nu^{[k]}\,\mathbb{E}[\rho^{[k]}\rho^{[j]}].
\end{align*}

By the score orthogonality already established in Appendix \ref{section:proof:orthogonal_score_pop},
we have \(\mathbb{E}[\rho^{[k]}\rho^{[j]}]=0\) for every \(k\neq j\). Thus only the \(k=j\) term survives in the sum and we obtain
\[
\mathbb{E}\big[ Y^{[s]}\,\rho^{[j]} \big]
= \mathbb{E}[Y\,\rho^{[j]}] - \nu^{[j]}\,\mathbb{E}[(\rho^{[j]})^2].
\]

It remains to identify \(\mathbb{E}[Y\,\rho^{[j]}]\). Using the definition of the \((j)\)-th residual response,
\[
Y = Y^{[j]} + \sum_{m=1}^{j-1} \nu^{[m]}\rho^{[m]},
\]
and again using score orthogonality \(\mathbb{E}[\rho^{[m]}\rho^{[j]}]=0\) for \(m<j\), we find
\[
\mathbb{E}[Y\,\rho^{[j]}] = \mathbb{E}[Y^{[j]}\,\rho^{[j]}].
\]
Substitute this into the previous display and use the definition of \(\nu^{[j]}\):
\[
\mathbb{E}\big[ Y^{[s]}\,\rho^{[j]} \big]
= \mathbb{E}[Y^{[j]}\,\rho^{[j]}] - 
\frac{\mathbb{E}[Y^{[j]}\rho^{[j]}]}{\mathbb{E}[(\rho^{[j]})^2]}\,\mathbb{E}[(\rho^{[j]})^2]
= 0.
\]
This completes the proof of orthogonality between scores and residualized responses.
\end{proof}

\subsubsection{Equivalence of the problems}\label{equiv_problem}
\begin{proof}
Recall the residualization definitions (for fixed \(l\ge 2\)):
\[
W^{[l]} \;=\; W - \sum_{j=1}^{\,l-1} \rho^{[j]}\,\delta^{[j]},
\qquad
Y^{[l]} \;=\; Y - \sum_{k=1}^{\,l-1} \nu^{[k]}\,\rho^{[k]},
\]
where
\[
\rho^{[j]}=\langle W^{[j]},\xi^{[j]}\rangle,\quad
\delta^{[j]}=\frac{\mathbb{E}[\rho^{[j]} W^{[j]}]}{\mathbb{E}[(\rho^{[j]})^2]},
\quad
\nu^{[j]}=\frac{\mathbb{E}[\rho^{[j]} Y^{[j]}]}{\mathbb{E}[(\rho^{[j]})^2]}.
\]
Since \(W,Y\) are centered, 
so are $W^{l}, Y^{l}$.
Thus covariances are expectations of products.
 Fix any \(w\in\mathbb{H}\).  Using linearity,
\begin{align*}
 \operatorname{Cov}\!\big(\langle W^{[l]},w\rangle,\;Y^{[l]}\big)
&= \mathbb{E}\Big[\big(\langle W,w\rangle - \sum_{j=1}^{l-1} \rho^{[j]}\langle\delta^{[j]},w\rangle\big)
\big(Y - \sum_{k=1}^{l-1} \nu^{[k]}\rho^{[k]}\big)\Big] \\
\\&= \mathbb{E}[\langle W,w\rangle Y] 
- \sum_{k=1}^{l-1} \nu^{[k]}\,\mathbb{E}[\langle W,w\rangle \rho^{[k]}]
- \underbrace{
\sum_{j=1}^{l-1} \langle\delta^{[j]},w\rangle\,\mathbb{E}[\rho^{[j]} Y]}_{:=S_1} \\
&\qquad
\underbrace{
+ 
\sum_{j=1}^{l-1}\sum_{k=1}^{l-1} \langle\delta^{[j]},w\rangle\,\nu^{[k]}\,\mathbb{E}[\rho^{[j]}\rho^{[k]}]
}_{:=S_2}.
\end{align*}
We claim that $S_1  = S_2$.
For $S_1$, first note that, since
\(Y = Y^{[l]} + \sum_{k=1}^{l-1}\nu^{[k]}\rho^{[k]}\), for each \(j\le l-1\) we have
\[
\mathbb{E}[\rho^{[j]} Y] = \mathbb{E}[\rho^{[j]} Y^{[l]}] + \sum_{k=1}^{l-1} \nu^{[k]}\mathbb{E}[\rho^{[j]}\rho^{[k]}]=
 \nu^{[j]}\,\mathbb{E}[(\rho^{[j]})^2],
\]
where the last equality uses 
\(\mathbb{E}[\rho^{[j]} Y^{[l]}]=0\) (Appendix \ref{section:proof:orthogonal_score_residual_pop}) and
\(\mathbb{E}[\rho^{[j]}\rho^{[k]}]=0\) for \(j\ne k\) (Appendix \ref{section:proof:orthogonal_score_pop}).
Therefore we have
\[
S_1 = \sum_{j=1}^{l-1} \langle\delta^{[j]},w\rangle \nu^{[j]}\,\mathbb{E}[(\rho^{[j]})^2].
\]
For $S_2$, 
since  \(\mathbb{E}[\rho^{[j]}\rho^{[k]}]=0\) for \(j\ne k\) (Appendix \ref{section:proof:orthogonal_score_pop}), we have:
\begin{equation*}
S_2
=
 \sum_{j=1}^{l-1}\sum_{k=1}^{l-1} \langle\delta^{[j]},w\rangle\,\nu^{[k]}\,\mathbb{E}[\rho^{[j]}\rho^{[k]}]
= \sum_{j=1}^{l-1} \langle\delta^{[j]},w\rangle\,\nu^{[j]}\,\mathbb{E}[(\rho^{[j]})^2].   
\end{equation*}
Since $S_1 = S_2$, 
and
\(\mathbb{E}[\langle W,w\rangle Y]=\operatorname{Cov}(\langle W,w\rangle,Y)\), we have
\[
\operatorname{Cov}\big(\langle W^{[l]},w\rangle,\;Y^{[l]}\big)
= \operatorname{Cov}(\langle W,w\rangle,Y)
- \sum_{k=1}^{l-1} \nu^{[k]}\,\operatorname{Cov}(\langle W,w\rangle,\rho^{[k]}).
\]
Therefore we are solving
\begin{align*}
\max_{\|w\|_{\mathbb{H}}=1}~&
\Big\{ 
\operatorname{Cov}(\langle W,w\rangle,Y)
- \sum_{k=1}^{l-1} \nu^{[k]}\,\operatorname{Cov}(\langle W,w\rangle,\rho^{[k]})
\Big\}^2
\\
s.t.~&
\nu^{[k]}
=
\frac{1}{\mathbb{E}[ (\rho^{[k]})^2 ]}
\mathbb{E}[Y^{[l]} \rho^{[k]}],~k=1, \ldots, l-1
\\~&
\rho^{[k]} = \langle \xi^{[k]}, W^{[k]} \rangle,~k=1, \ldots, l-1 
\\~&
\delta^{[k]}
=
\frac{1}{\mathbb{E}[ (\rho^{[k]})^2 ]}  \mathbb{E}[W^{[k]} \rho^{[k]}],~k=1, \ldots, l-1
\\~&
 W^{[k+1]} = W^{[k]} - \rho^{[k]} \, \delta^{[k]},~k=1, \ldots, l-1
\\~&
Y^{[k]} =Y^{[k-1]} - \nu^{[k-1]} \, \rho^{[k-1]},~k=1, \ldots, l-1
\end{align*}
Since the constraints imply 
that for every \(s\ge 1\) and every \(j\le s-1\), we have $
\mathbb{E}[\rho^{[j]}\rho^{[s]}]=\langle w, \Sigma_W \, \xi^{[k]} \rangle_{\mathbb{H}} = \operatorname{Cov}(\langle W,w\rangle,\rho^{[k]}) =0
$ (Appendix \ref{section:proof:orthogonal_score_pop}),
the above problem is equivalent to 
\begin{align*}
\max_{\|w\|_{\mathbb{H}}=1}~&
\Big\{ 
\operatorname{Cov}(\langle W,w\rangle,Y)
\Big\}^2
\\
s.t.~&
\nu^{[k]}
=
\frac{1}{\mathbb{E}[ (\rho^{[k]})^2 ]}
\mathbb{E}[Y^{[l]} \rho^{[k]}],~k=1, \ldots, l-1
\\~&
\rho^{[k]} = \langle \xi^{[k]}, W^{[k]} \rangle,~k=1, \ldots, l-1 
\\~&
\delta^{[k]}
=
\frac{1}{\mathbb{E}[ (\rho^{[k]})^2 ]}  \mathbb{E}[W^{[k]} \rho^{[k]}],~k=1, \ldots, l-1
\\~&
 W^{[k+1]} = W^{[k]} - \rho^{[k]} \, \delta^{[k]},~k=1, \ldots, l-1
\\~&
Y^{[k]} =Y^{[k-1]} - \nu^{[k-1]} \, \rho^{[k-1]},~k=1, \ldots, l-1
\\~& \langle w, \Sigma_W \, \xi^{[k]} \rangle_{\mathbb{H}}=0,~k=1, \ldots, l-1.
\end{align*}
Removing constraints that does not affect the objective function, the above optimization problem is equivalent to
\begin{align*}
\max_{\|w\|_{\mathbb{H}}=1}~&
\Big\{ 
\operatorname{Cov}(\langle W,w\rangle,Y)
\Big\}^2
\\
s.t.~&
 \langle w, \Sigma_W \, \xi^{[k]} \rangle_{\mathbb{H}}=0,~k=1, \ldots, l-1.
\end{align*}
This completes
the proof of Proposition \ref{prop:residualization_equiv_eigen}.
\end{proof}

\subsection{Proof of Theorem \ref{thm:hybrid_pls_convergence}}\label{section:proof:thm:hybrid_pls_convergence}
\begin{proof}
Iterating the residualization $Y^{[l]} = Y^{[l-1]} - \nu^{[l-1]} \rho^{[l-1]}$ for $l=1,\dots,L$, we obtain
\begin{equation}\label{eq:Y_sum_residuals}
Y = \sum_{j=0}^{L-1} \nu^{[j]} \rho^{[j]} + Y^{[L]} = Y_{\mathrm{PLS},L-1} + Y^{[L]}.
\end{equation}
Consider the $\mathcal{L}^2$ norm of the remaining residual $Y^{[L]}$:
\[
\mathbb{E}[(Y^{[L]})^2] = \mathbb{E}[(Y^{[L-1]} - \nu^{[L-1]} \rho^{[L-1]})^2] = \mathbb{E}[(Y^{[L-1]})^2] - (\nu^{[L-1]})^2 \mathbb{E}[(\rho^{[L-1]})^2].
\]
Thus, the residual variance is nonnegative and monotonically decreasing with $L$:
\[
0 \le \mathbb{E}[(Y^{[L]})^2] \le \mathbb{E}[(Y^{[L-1]})^2] \le \cdots \le \mathbb{E}[Y^2] < \infty.
\]
Therefore, the series $\sum_{l=1}^{\infty} (\nu^{[l]})^2 \mathbb{E}[(\rho^{[l]})^2]$ converges and the residual variance $\mathbb{E}[(Y^{[L]})^2] \to 0$ as $L \to \infty$.
From \eqref{eq:Y_sum_residuals}, the mean squared error of the PLS approximation is
\[
\mathbb{E}\bigl[ \| Y_{\mathrm{PLS},L-1} - Y \|^2 \bigr] = \mathbb{E}[(Y^{[L]})^2] \to 0 \quad \text{as } L \to \infty.
\]
This completes the proof of Theorem \ref{thm:hybrid_pls_convergence}.
\end{proof}

\section{Proof of Section \ref{section:sub:iterative}}\label{section:proof:section:sub:iterative}

	This section provides the proofs for the core problem formulations and computational schemes of the iterative steps for estimating the PLS direction and performing residualization.
	Appendix   \ref{section:proof:proposition:eigen_regul} derive the eigenproblem formulations for  regularized PLS direction estimation.
	Appendices \ref{section:proof:proposition:linear_regul} and \ref{section:proof:proposition:closed_form_orthgonalization} provide proofs for the simple computation schemes for regularized PLS direction estimation and residualization, respectively.

		\subsection{Proof of Proposition \ref{proposition:eigen_regul} }\label{section:proof:proposition:eigen_regul}
We begin by establishing the result for the unregularized case, which serves as the foundation for the general problem. The objective is to estimate the hybrid PLS direction $\xi \in \widetilde{\mathbb{H}}$ by maximizing the squared empirical covariance with the response. This approach fully leverages the continuous nature of the functional components and the hybrid structure of the data.The PLS direction is estimated by the unit-norm vector $\xi \in \widetilde{\mathbb{H}}$ that maximizes the squared empirical covariance between the PLS scores $\langle \widetilde{W}_i, \xi \rangle_{\mathbb{H}}$ and the responses $Y_i$:
\begin{equation}\label{def:squared_empirical_cov}\widehat{\textnormal{Cov}}( \langle\widetilde{W},\xi \rangle_{\mathbb{H}}, Y ):=\frac{1}{n}\sum_{i=1}^ny_i\langle \widetilde{W}_i, \xi \rangle_{\mathbb{H}}.\end{equation}The estimator is defined as:\begin{equation}\label{def:maximizer_squared_empirical_cov}\hat{\xi}:= \arg \max_{\xi \in \widetilde{\mathbb{H}}}~\widehat{\textnormal{Cov}}^2\bigl(\langle \widetilde{W}, \xi \rangle_{\mathbb{H}}, Y\bigr) \quad \text{s.t.} \quad \| \xi  \|_\mathbb{H} = 1.\end{equation}The candidate direction $\xi$ admits the finite-basis expansion:\begin{equation*}\xi =\bigl(\xi_1(t), \ldots, \xi_K(t), \boldsymbol{\zeta}\bigr)=\biggl(\sum_{m=1}^M \gamma_{1m} , b_m(t),\ldots,\sum_{m=1}^M \gamma_{Km} , b_m(t), \boldsymbol{\zeta}\biggr).\end{equation*}Solving for the optimal coefficients is equivalent to the maximization problem in \eqref{def:maximizer_squared_empirical_cov}. The following proposition reformulates this procedure as a generalized Rayleigh quotient.
	\begin{theorem}\label{proposition:eigen_noregul}
		Let
		$
		(\mathbb{B}, \mathbb{B}^{\prime \prime}, \Theta, \mathbf{y})
		$
		denote the observed data defined in \eqref{def:problem_instance}.
		At the $l$-th iteration of the PLS algorithm, 
		the coefficients of the squared covariance  maximizer defined in \eqref{def:maximizer_squared_empirical_cov}, 
		is obtained as
		\begin{equation}	\label{eq: Unregularized generalized rayleigh quotient equation}
			\left(
			\hat{\gamma}_{11}, \ldots, \hat{\gamma}_{1M}
			, \ldots,
			\hat{\gamma}_{K1}, \ldots, \hat{\gamma}_{KM}
			, \hat{\boldsymbol{\zeta}}^\top
			\right)^\top
			=
			\arg \max_{\boldsymbol{\xi} \in \mathbb{R}^{MK+p}}   \boldsymbol{\xi}^\top \mathbf{V} \boldsymbol{\xi}
			\quad \text{subject to} \quad \boldsymbol{\xi}^\top \mathbb{B} \boldsymbol{\xi} = 1.
		\end{equation}
	\end{theorem}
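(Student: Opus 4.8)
The plan is to reduce the optimization over the infinite-dimensional candidate $\xi \in \widetilde{\mathbb{H}}$ to a finite-dimensional problem over the stacked coefficient vector $\boldsymbol{\xi} := (\boldsymbol{\gamma}_1^\top, \ldots, \boldsymbol{\gamma}_K^\top, \hat{\boldsymbol{\zeta}}^\top)^\top \in \mathbb{R}^{MK+p}$, by expanding every inner product through the basis Gram matrix $\mathbf{B}$. Since $\{b_m\}_{m=1}^M$ is a basis of its span, the correspondence $\boldsymbol{\xi} \leftrightarrow \xi$ is a bijection between $\mathbb{R}^{MK+p}$ and $\widetilde{\mathbb{H}}$; this is the point that formally licenses transferring the maximizer of the abstract problem \eqref{def:squared_empirical_cov} to the coefficient maximizer in \eqref{eq: Unregularized generalized rayleigh quotient equation}, with matching optimal values.

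First I would compute the hybrid inner product $\langle \widetilde{W}_i, \xi \rangle_{\mathbb{H}}$ term by term. For each functional block, substituting $\widetilde{X}_{ij} = \sum_m \theta_{ijm} b_m$ and $\xi_j = \sum_m \gamma_{jm} b_m$ and using $B_{m,m'} = \int_0^1 b_m(t)b_{m'}(t)\,dt$ collapses the double sum to $\int_0^1 \widetilde{X}_{ij}(t)\xi_j(t)\,dt = \boldsymbol{\theta}_{ij}^\top \mathbf{B}\,\boldsymbol{\gamma}_j$. Adding the scalar contribution $\mathbf{Z}_i^\top \hat{\boldsymbol{\zeta}}$ and collecting the blocks through the block-diagonal $\mathbb{B}$ yields the compact identity $\langle \widetilde{W}_i, \xi\rangle_{\mathbb{H}} = \Theta_{i,:}\,\mathbb{B}\,\boldsymbol{\xi}$, where $\Theta_{i,:}$ denotes the $i$-th row of $\Theta$.

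Next I would assemble these pointwise identities across observations. The empirical covariance becomes $\widehat{\textnormal{Cov}}(\langle \widetilde{W}, \xi\rangle_{\mathbb{H}}, Y) = n^{-1}\mathbf{y}^\top \Theta\,\mathbb{B}\,\boldsymbol{\xi}$, a scalar linear functional of $\boldsymbol{\xi}$. Squaring and rewriting as a quadratic form produces $\widehat{\textnormal{Cov}}^2 = \boldsymbol{\xi}^\top \mathbf{V}\,\boldsymbol{\xi}$ with the rank-one matrix $\mathbf{V} = n^{-2}(\mathbb{B}\Theta^\top\mathbf{y})(\mathbb{B}\Theta^\top\mathbf{y})^\top$; here the symmetry of $\mathbb{B}$ is what lets the Gram factor split evenly onto the two copies of $\boldsymbol{\xi}$. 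Applying the same block expansion to the constraint gives $\|\xi\|_{\mathbb{H}}^2 = \sum_{j=1}^K \boldsymbol{\gamma}_j^\top \mathbf{B}\,\boldsymbol{\gamma}_j + \|\hat{\boldsymbol{\zeta}}\|_2^2 = \boldsymbol{\xi}^\top \mathbb{B}\,\boldsymbol{\xi}$, so $\|\xi\|_{\mathbb{H}} = 1$ becomes $\boldsymbol{\xi}^\top \mathbb{B}\,\boldsymbol{\xi} = 1$. Combining the transformed objective and constraint reproduces \eqref{eq: Unregularized generalized rayleigh quotient equation} exactly.

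I do not anticipate a deep obstacle, since the argument is essentially careful bookkeeping of the block structure rather than a hard analytic step. The only point requiring genuine care is the bijectivity of the coefficient map, which rests on the linear independence of $\{b_m\}$ and guarantees that the two suprema coincide with corresponding maximizers. A secondary subtlety worth flagging is the symmetry (and, for the constraint to define a genuine squared norm, positive-definiteness) of $\mathbf{B}$, which underpins both the quadratic-form manipulation of $\widehat{\textnormal{Cov}}^2$ and the interpretation of $\boldsymbol{\xi}^\top\mathbb{B}\boldsymbol{\xi}$ as $\|\xi\|_{\mathbb{H}}^2$.
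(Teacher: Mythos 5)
Your proposal is correct and follows essentially the same route as the paper's proof: expand the functional components in the basis, use the Gram matrix $\mathbf{B}$ to turn each inner product into $\boldsymbol{\theta}_{ij}^\top \mathbf{B}\,\boldsymbol{\gamma}_j$, assemble the empirical covariance into the rank-one quadratic form $\boldsymbol{\xi}^\top \mathbf{V}\boldsymbol{\xi}$, and convert the unit-norm constraint into $\boldsymbol{\xi}^\top \mathbb{B}\,\boldsymbol{\xi}=1$. Your explicit remarks on the bijectivity of the coefficient map and the positive-definiteness of $\mathbf{B}$ are points the paper leaves implicit here (it invokes positive-definiteness only later, in the proof of the closed-form solution), but they do not change the argument.
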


\begin{proof}
		Let $\xi = (\xi_1, \ldots, \xi_K, \boldsymbol{\zeta}) \in \mathbb{H}$. Assume that each functional component $\xi_j \in \mathbb{L}^2([0,1])$ lies in the span of the basis functions ${b_1(t), \ldots, b_M(t)}$ and can be written as
		\begin{equation*}
			\xi_j(t) := 
			\sum_{m=1}^M d_{jm} \, b_m(t), \quad j = 1, \ldots, K,
		\end{equation*}
		with coefficient vectors $\boldsymbol{\gamma}_j := (d_{j1}, \ldots, d_{jM})^\top \in \mathbb{R}^M$.

        \paragraph{Objective function}~
	The empirical covariance is
		computed as follows:
		\begin{align*}
			\widehat{\textnormal{Cov}} (
			\langle \widetilde{W}, \xi  \rangle_{\mathbb{H}}, Y 
			) 
			&=  %1
			\frac{1}{n} 
			\sum 
			\limits_{i=1}^n 
			y_i
			\langle \widetilde{W}_i, \xi \rangle_\mathbb{H} \\
			&= %2
			\frac{1}{n} 
			\sum \limits_{i=1}^n
			y_i
			\biggl(
			\sum_{j=1}^K
			\langle \widetilde{X}_{ij}, \xi_j \rangle  
			+
			\mathbf{Z}_i^\top 
			\boldsymbol{\zeta} 
			\biggr)
			%%%%%%%%%%%%%%%%%%%%%%%%%%%%%%%%%%%%%%
			\\&= %3
			\frac{1}{n}
			\sum
			\limits_{i=1}^n 
			y_i
			\biggl( \sum \limits_{k=1}^K \int_0^1 \widetilde{X}_{ij}(t) \, \xi_j(t) \, dt \biggr)
			+
			\frac{1}{n}
			\sum
			\limits_{i=1}^n
			y_i
			( \mathbf{Z}_i^\top \boldsymbol{\zeta} )
			%%%%%%%%%%%%%%%%%%%%%%%%%%%%%%%%%%%%%%%%%%
			\\&=  
			\frac{1}{n} \sum \limits_{i=1}^n 
			y_i
			\biggl\{ 
			\sum \limits_{k=1}^K \int_0^1
			\biggl(
			\sum_{m=1}^M \theta_{ijm} b_{m}(t) 
			\biggr) 
			\biggl(
			\sum_{m'=1}^M d_{jm'} b_{m'}(t) 
			\biggr) 
			dt
			\biggr\}  
			+
			\frac{1}{n}
			\mathbf{y}^\top
			Z \boldsymbol{\zeta} 
			%%%%%%%%%%%%%%%%%%%%%%%%%%%%%%%%%%%%%%%%%%
			\\&=  
			\frac{1}{n} \sum \limits_{i=1}^n 
			y_i
			\biggl\{ 
			\sum \limits_{k=1}^K 
			\sum_{m=1}^M
			\sum_{m'=1}^M  
			\theta_{ijm}
			d_{jm'}
			\int_0^1
			b_{m}(t) 
			b_{m'}(t) 
			dt
			\biggr\}  
			+
			\frac{1}{n}
			\mathbf{y}^\top
			Z \boldsymbol{\zeta} 
			%%%%%%%%%%%%%%%%%%%%%%%%%%%%%%%%%%%%%%%%%%
			\\&=  
			\sum \limits_{k=1}^K 
			\biggl\{
			\frac{1}{n} \sum \limits_{i=1}^n 
			y_i
			(
			\boldsymbol{\theta}_{ij}^\top
			B
			\boldsymbol{\gamma}_j
			)  
			\biggr\}
			+
			\frac{1}{n}
			\mathbf{y}^\top
			Z \boldsymbol{\zeta}
			%%%%%%%%%%%%%%%%%%%%%%%%%%%%%%%%%%%%%%%%%%
			\\&=  
			\frac{1}{n}
			\biggl(
			\sum \limits_{k=1}^K 
			\mathbf{y}^\top
			\Theta_{j}
			B
			\boldsymbol{\gamma}_j
			+
			\mathbf{y}^\top
			Z \boldsymbol{\zeta} 
			\biggr)
		\end{align*}	
		Building on this computation, the squared empirical covariance is
		expressed as the quadratic form involving the matrix $\mathbf{V}$.
		%	The objective function matches that of \eqref{eq: Regularized generalized rayleigh quotient equation}, takin
		\begin{align*}
			\widehat{\textnormal{Cov}}^2 (
			\langle \widetilde{W}, \xi  \rangle_{\mathbb{H}}, Y 
			)  
			&= 
			\frac{1}{n^2}
			\biggl(
			\sum \limits_{k=1}^K 
			\mathbf{y}^\top
			\Theta_{j}
			B
			\boldsymbol{\gamma}_j
			+
			\mathbf{y}^\top
			Z \boldsymbol{\zeta} 
			\biggr)^2
			%%%%%%%%%%%%%%%%%%%%
			\\	&=
			\frac{1}{n^2} 
			(
			\mathbf{y}^\top
			\Theta
			\mathbb{B}  
			\boldsymbol{\xi}
			)^2
			\\
			&	= 
			\frac{1}{n^2} 
			\boldsymbol{\xi}^\top
			(\mathbb{B} \Theta^\top \mathbf{y})(\mathbb{B} \Theta^\top \mathbf{y})^\top 
			\boldsymbol{\xi}
			\\&=
			\boldsymbol{\xi}^\top
			\mathbf{V} 
			\boldsymbol{\xi}.
			\numberthis \label{covariance_squares_as_quadratic_form}
		\end{align*}

        \paragraph{Constraint}~
		The squared norm of $\xi$ in the hybrid Hilbert space $\mathbb{H}$ 
		%can be expressed as the inner product on $(\mathbb{R}^M)^K \times \mathbb{R}^p$ defined in Lemma~\ref{lemma:valid_inner_product_tuple} as follows:
		is computed as follows:
		\begin{align*}
			\langle {\xi},{\xi} \rangle_{\mathbb{H}}
			&=
			\sum_{j=1}^K
			\int_0^1
			\xi_j(t) \, \xi_j(t)
			\, dt
			+
			\boldsymbol{\zeta}^\top \boldsymbol{\zeta}
			%%%%%%%%%%%%%%%%%%%%%%%%%%%%%%%%%%%
			\\&=
			\sum \limits_{j=1}^K \int_0^1
			\biggl(
			\sum_{m=1}^M d_{jm} b_{m}(t) 
			\biggr) 
			\biggl(
			\sum_{m'=1}^M d_{jm'} b_{m'}(t) 
			\biggr) 
			dt
			+
			\boldsymbol{\zeta}^\top \boldsymbol{\zeta}  
			%%%%%%%%%%%%%%%%%%%%%%%%%%%%%%%%%%%
			\\&=
			\sum \limits_{j=1}^K 
			\sum_{m=1}^M
			\sum_{m'=1}^M
			d_{jm}
			d_{jm'}
			\int_0^1
			b_{m}(t) 
			b_{m'}(t) 
			dt
			+
			\boldsymbol{\zeta}^\top \boldsymbol{\zeta}
			%%%%%%%%%%%%%%%%%%%%%%%%%%%%%%%%%%%
			\\&=
			\sum \limits_{j=1}^K 
			\boldsymbol{\gamma}_j^\top
			B
			\boldsymbol{\gamma}_j
			+
			\boldsymbol{\zeta}^\top \boldsymbol{\zeta},
			%%%%%%%%%%%%%
			\\&=
			\boldsymbol{\xi}^\top \mathbb{B} \boldsymbol{\xi}
			\numberthis \label{unit_norm_as_quadform}
		\end{align*}
		Therefore, the covariance maximization problem \eqref{def:maximizer_squared_empirical_cov} is equivalent to the generalized Raleigh quotient \eqref{eq: Unregularized generalized rayleigh quotient equation}. This completes  the proof of Proposition~\ref{proposition:eigen_noregul}.
       \end{proof} 

    Now, for the regularized case, we only need to deal with the new constraint.
		The penalization term 
		can be written in matrix form as:
		\begin{align*}
			\sum \limits_{j=1}^K 
			\lambda_j
			\int_0^1 \bigl\{ \hat{\xi}_j^{\prime\prime}(t) \bigr\}^2 dt
			&=
			\sum \limits_{j=1}^K 
			\lambda_j
			\int_0^1 \bigl\{ \sum_{l=1}^M \theta_{ijl} \, b_l^{\prime \prime}(t) \bigr\}^2 dt
			\\&	 =
			%%%%%%%%%%%%%%%%%%%
			\sum \limits_{j=1}^K 
			\lambda_j
			\sum_{l=1}^M
			\sum_{m=1}^M
			\theta_{ijl} \theta_{ijm}
			\int_0^1
			\, b_l^{\prime \prime}(t) 
			b_m^{\prime \prime}(t)
			dt
			\\&	 =
			%%%%%%%%%%%%%%%%%%%%%%%%%%%%%%%%%
			\sum \limits_{j=1}^K
			\lambda_j
			\boldsymbol{\gamma}_j^\top
			B^{\prime \prime} 
			\boldsymbol{\gamma}_j
			%%%%%%%%%%%%%%%%%%%%%%%%%%
			\\&= 
			\boldsymbol{\xi}^{\top} \Lambda \mathbb{B}^{\prime \prime} \boldsymbol{\xi},
			\numberthis \label{pen_as_matrix}
		\end{align*}
		where    $\mathbb{B}^{\prime\prime}$ is defined in \eqref{def:gram_block}, and $\Lambda$ is defined in \eqref{def:Lambda}. 
		Combining this with \eqref{unit_norm_as_quadform}, 
		we can compute
		\begin{equation*}
			\| \xi  \|_\mathbb{H} + \sum \limits_{j=1}^K 
			\lambda_j
			\int_0^1 \bigl\{ \hat{\xi}_j^{\prime\prime}(t) \bigr\}^2 dt
			= 
			\boldsymbol{\xi}^\top \mathbb{B} \boldsymbol{\xi}
			+ 
			\boldsymbol{\xi}^{\top} \Lambda \mathbb{B}^{\prime \prime} \boldsymbol{\xi}
			=
			\boldsymbol{\xi}^\top 
			(
			\mathbb{B}
			+
			\Lambda \mathbb{B}^{\prime \prime})
			\boldsymbol{\xi}.
		\end{equation*}
		This computation along with \eqref{covariance_squares_as_quadratic_form} implies that the covariance maximization problem  
		\eqref{def:maximizer_squared_empirical_cov_reg} is equivalent to the generalized Raleigh quotient \eqref{eq: Regularized generalized rayleigh quotient equation}.  This completes  the proof of Proposition \ref{proposition:eigen_regul}.

	\subsection{Proof of Proposition  \ref{proposition:linear_regul}}\label{section:proof:proposition:linear_regul}
	\begin{proof}
		The optimization problem 
		\eqref{eq: Regularized generalized rayleigh quotient equation} can be written as
		\begin{align*}
			\max_{\boldsymbol{\gamma}_1, \ldots, \boldsymbol{\gamma}_K \in \mathbb{R}^M,\;
				\boldsymbol{\zeta} \in \mathbb{R}^p}~&
			\frac{1}{n^2}
			\left(
			\sum_{j=1}^K 
			\mathbf{y}^\top
			\Theta_j
			B  
			\boldsymbol{\gamma}_j
			+
			\mathbf{y}^\top
			Z \boldsymbol{\zeta} 
			\right)^2,
			\\
			\text{subject to}\quad&
			\sum_{j=1}^K 
			\boldsymbol{\gamma}_j^\top
			(\mathbf{B} + \lambda_j \mathbf{B}^{\prime\prime})
			\boldsymbol{\gamma}_j
			+
			\boldsymbol{\zeta}^\top \boldsymbol{\zeta} = 1.
		\end{align*}
		
		Let $\mathbf{u}_j := B \Theta_j^\top \mathbf{y} \in \mathbb{R}^M$ and $\mathbf{v} := Z^\top \mathbf{y} \in \mathbb{R}^p$. These are fixed problem data. Then the objective becomes
		\[
		\frac{1}{n^2}
		\left(
		\sum_{j=1}^K 
		\mathbf{u}_j^\top \boldsymbol{\gamma}_j
		+
		\mathbf{v}^\top \boldsymbol{\zeta}
		\right)^2.
		\]
		The objective function is continuous. The constraint defines the boundary of an ellipsoid in a finite-dimensional Euclidean space, the feasible set is compact. By the Weierstrass Extreme Value Theorem (see, e.g., Theorem 2.3.1 in \citealp{bazaraaNonlinearProgrammingTheory2006a}), a global maximizer exists. The constraint function is continuously differentiable and its gradient vanishes only at the origin, which is not feasible. 
		Thus the gradient  is nonzero at all feasible points.
		Hence, the Linear Independence Constraint Qualification (LICQ) holds, and the Karush-Kuhn-Tucker (KKT) conditions are necessary for local optimality (see, e.g., Theorem 5.3.1 in \citealp{bazaraaNonlinearProgrammingTheory2006a}).
		
		Define the Lagrangian:
		\[
		\mathcal{L}(\{\boldsymbol{\gamma}_j\}, \boldsymbol{\zeta}, \mu) 
		:= 
		\left( \sum_{j=1}^K \mathbf{u}_j^\top  \boldsymbol{\gamma}_j + \mathbf{v}^\top \boldsymbol{\zeta} \right)^2
		- 
		\mu 
		\left[
		\sum_{j=1}^K \boldsymbol{\gamma}_j^\top (\mathbf{B} + \lambda_j \mathbf{B}^{\prime\prime}) \boldsymbol{\gamma}_j + \boldsymbol{\zeta}^\top \boldsymbol{\zeta} - 1
		\right].
		\]
		
		Let $s := \dfrac{2}{n^2} \left( \sum_{j=1}^K \mathbf{u}_j^\top \boldsymbol{\gamma}_j + \mathbf{v}^\top \boldsymbol{\zeta} \right)$. The KKT conditions require that:
		\begin{align}
			\nabla_{\boldsymbol{\gamma}_j} \mathcal{L} &= 
			s \mathbf{u}_j - 2\mu (\mathbf{B} + \lambda_j \mathbf{B}^{\prime\prime}) \boldsymbol{\gamma}_j = 0, \quad \text{for } j = 1, \ldots, K, \label{eq:lag_grad_gamma} \\
			\nabla_{\boldsymbol{\zeta}} \mathcal{L} &= 
			s \mathbf{v} - 2\mu \boldsymbol{\zeta} = 0. \label{eq:lag_grad_zeta}
		\end{align}
		
		From \eqref{eq:lag_grad_gamma} and \eqref{eq:lag_grad_zeta}, we have
		\[
		(\mathbf{B} + \lambda_j \mathbf{B}^{\prime\prime}) \boldsymbol{\gamma}_j = \frac{s}{2\mu} \mathbf{u}_j, \quad
		\boldsymbol{\zeta} = \frac{s}{2\mu} \mathbf{v}, \quad \text{for } j = 1, \ldots, K.
		\]
		This implies that for any local maximizer, there exists $c \ne 0$ such that 
		\[
		\boldsymbol{\gamma}_j = c (\mathbf{B} + \lambda_j \mathbf{B}^{\prime\prime})^{-1} \mathbf{u}_j, \quad
		\boldsymbol{\zeta} = c \mathbf{v}, \quad \text{for } j = 1, \ldots, K.
		\]
		Since we assumed in Section~\ref{sec: finite basis approximation} that the functions $\{b_m(t)\}$ and their second derivatives are linearly independent, both $B$ and $B^{\prime\prime}$ are positive definite (see, for example, Theorem 273 of \citealp{gockenbachFiniteDimensionalLinearAlgebra2010}). As positive definiteness is preserved under conic combinations, $B + \lambda_j B^{\prime\prime}$ is also positive definite. 
		
		A local maximizer must also be primal feasible. Substituting the conditions above in to the primal constraint:
		\begin{align*}
			\sum_{j=1}^K 
			\left( c (\mathbf{B} + \lambda_j \mathbf{B}^{\prime\prime})^{-1} \mathbf{u}_j \right)^\top 
			(\mathbf{B} + \lambda_j \mathbf{B}^{\prime\prime}) 
			\left( c (\mathbf{B} + \lambda_j \mathbf{B}^{\prime\prime})^{-1} \mathbf{u}_j \right)
			+ c^2 \mathbf{v}^\top \mathbf{v} &= 1 \\
			\Rightarrow 
			c^2 \left( \sum_{j=1}^K \mathbf{u}_j^\top (\mathbf{B} + \lambda_j \mathbf{B}^{\prime\prime})^{-1} \mathbf{u}_j + \mathbf{v}^\top \mathbf{v} \right) &= 1.
		\end{align*}
		
		Solving for $c^2$ gives:
		\[
		c^2 = \left( \sum_{j=1}^K \mathbf{u}_j^\top (\mathbf{B} + \lambda_j \mathbf{B}^{\prime\prime})^{-1} \mathbf{u}_j + \mathbf{v}^\top \mathbf{v} \right)^{-1}.
		\]
		
		Hence, the unique maximizer (up to sign) is:
		\[
		\boldsymbol{\gamma}_j^\ast = \frac{1}{\sqrt{q}} (\mathbf{B} + \lambda_j \mathbf{B}^{\prime\prime})^{-1} \mathbf{u}_j, \quad
		\boldsymbol{\zeta}^\ast = \frac{1}{\sqrt{q}} \mathbf{v},
		\]
		where
		\[
		q := \sum_{j=1}^K \mathbf{u}_j^\top (\mathbf{B} + \lambda_j \mathbf{B}^{\prime\prime})^{-1} \mathbf{u}_j + \mathbf{v}^\top \mathbf{v}.
		\]
		
		This completes the proof of Proposition \ref{section:proof:proposition:linear_regul}.

	\end{proof}

	\subsection{Proof of Lemma \ref{proposition:closed_form_orthgonalization} }\label{section:proof:proposition:closed_form_orthgonalization}
	For notational simplicity, we omit the iteration superscript $[l]$.
	Recall from  \eqref{def:plsscore} that
	$\hat{\rho}_i
	=
	\langle \widetilde{W}_i, \, \hat{\xi} \rangle_{\mathbb{H}}$.
	Using the basis expansion coefficient notation from 
    Section
    \ref{sec: finite basis approximation}, the full vector of PLS scores $\hat{\boldsymbol{\rho}}  = (\hat{\rho}_1, \ldots, \hat{\rho}_n)^\top$ 
	is computed through the following matrix multiplication:
	\begin{equation}\label{score_inner_product_all_n}
		\hat{\boldsymbol{\rho}}^\top = \sum_{k=1}^K ( \hat{\boldsymbol{\gamma}}_j )^\top \mathbf{B} \, \Theta_j^\top + \boldsymbol{\zeta}^\top \mathbf{Z}^\top.
	\end{equation}
	The least square criterion   can be decomposed as follows:
	\begin{align*}
		\operatorname{SSE}( \delta) &= \sum_{i=1}^n \langle \widetilde{W}_i^{[l]} - \hat{\rho}_i^{[l]}\, \delta, \, \widetilde{W}_i^{[l]} - \hat{\rho}_i^{[l]}\, \delta \rangle_{\mathbb{H}} \\
		&= 
		\underbrace{
			\sum_{i=1}^n \langle \widetilde{W}_i^{[l]}, \, \widetilde{W}_i^{[l]} \rangle_{\mathbb{H}}}_{A} 
		\underbrace{
			-2 \sum_{i=1}^n \hat{\rho}_i^{[l]} \langle \delta, \, \widetilde{W}_i^{[l]} \rangle_{\mathbb{H}} 
		}_{B_1(\boldsymbol{\delta})}
		\underbrace{
			+ 
			(\hat{\boldsymbol{\rho}}^\top \hat{\boldsymbol{\rho}})
			\langle \delta, \, \delta \rangle_{\mathbb{H}}
		}_{B_2(\boldsymbol{\delta})}
	\end{align*}
 
	Here, part $A$ does not contain $\delta$, so it does not contribute to the minimization problem.
	
	Let $\boldsymbol{\delta}:= (\pi_1^\top, \ldots, \pi_K^\top, \boldsymbol{\chi}^\top )^\top$ be the $MK+p$-dimensional concatenated vector of basis coefficients and the scalar part of $\boldsymbol{\delta}$. 
	%Abusing notation, we treat $\operatorname{PENSSE}$ as a function of $\boldsymbol{\delta}$.
	Abusing notation, we treat $\operatorname{SSE}$ as a function of $\boldsymbol{\delta}$.
	Next, We will now demonstrate that the combined term $
	%B_1(\boldsymbol{\delta}) + B_2(\boldsymbol{\delta}) + B_3(\boldsymbol{\delta})
	B_1(\boldsymbol{\delta}) + B_2(\boldsymbol{\delta})
	$ is a quadratic function of $\boldsymbol{\delta}$, by expanding its component functions:
	\begin{align*}
		B_1(\boldsymbol{\delta}) &= 
		- 2\sum_{i=1}^n \hat{\rho}_i  \langle \delta, \, \widetilde{W}_i  \rangle_{\mathbb{H}} 
		=
		- 2\hat{\boldsymbol{\rho}}^\top 
		\biggl(
		\sum_{j=1}^K \Theta_j \mathbf{B} \boldsymbol{\pi}_j
		+\mathbf{Z} \boldsymbol{\chi}
		\biggr),
		%%%%%%%%%%%%%%%%%%%%%%%
		\\ 
		B_2(\boldsymbol{\delta})
		&=
		(\hat{\boldsymbol{\rho}}^\top \hat{\boldsymbol{\rho}})
		\langle \delta, \, \delta \rangle_{\mathbb{H}}
		=
		(\hat{\boldsymbol{\rho}}^\top \hat{\boldsymbol{\rho}}) 
		\biggl(
		\sum_{j=1}^K \pi_j^\top \mathbf{B} \pi_j
		+\boldsymbol{\chi}^\top \boldsymbol{\chi}
		\biggr),
		\\
		%	B_3(\boldsymbol{\delta}) &= \tau \sum_{j=1}^K\operatorname{PEN}(\delta_j) = \tau \boldsymbol{\delta} B^{''} \boldsymbol{\delta},
	\end{align*}
	%where the expansion of $B_3(\boldsymbol{\delta})$ leverages the computation in \eqref{pen_as_matrix}.
	Now we compute the gradients:
	\begin{align*}
		\nabla_{\boldsymbol{\pi}_j}
		B_1(\boldsymbol{\delta}) 
		&= -2
		\mathbf{B}
		\Theta_j^\top
		\hat{\boldsymbol{\rho}}  ,~
		j = 1, \ldots, K,
		\quad
		\nabla_{\boldsymbol{\chi}}
		B_1(\boldsymbol{\delta}) 	  
		= -2\mathbf{Z}^\top  \hat{\boldsymbol{\rho}},
		\\
		\nabla_{\boldsymbol{\pi}_j}
		B_2(\boldsymbol{\delta}) 
		&= 
		2
		(\hat{\boldsymbol{\rho}}^\top \hat{\boldsymbol{\rho}}) 
		\mathbf{B} 	\boldsymbol{\pi}_j
		,~
		j = 1, \ldots, K,
		\quad
		\nabla_{\boldsymbol{\chi}}
		B_2(\boldsymbol{\delta}) 
		=
		2(\hat{\boldsymbol{\rho}}^\top \hat{\boldsymbol{\rho}})\boldsymbol{\chi} ,
		%%%%%%%%%%%%%%%%%%%%%%%%%%%%%%%%
		%\\ \nabla_{\boldsymbol{\pi}_j} B_3(\boldsymbol{\delta})  &=  2 \tau B^{\prime \prime }	\boldsymbol{\pi}_j ,~ j = 1, \ldots, K, \quad \nabla_{\boldsymbol{\chi}} B_3(\boldsymbol{\delta})  =0.
	\end{align*}
	The Hessian of 
	$\operatorname{SSE}( \boldsymbol{\delta})$ is then given by 
	$
	2 	(\hat{\boldsymbol{\rho}}^\top \hat{\boldsymbol{\rho}})  \mathbf{B}.
	$
 
	%Since we assumed in Section \ref{sec: finite basis approximation} that both $B'$ and $\mathbb{B}''$ are positive definite, $\operatorname{PENSSE}( \delta)$ is convex. Consequently, the gradient vanishes at its unique minimizer:
	Since   $\mathbf{B}$   is positive definite, $\operatorname{SSE}( \delta)$ is convex. Consequently, the gradient vanishes at its unique minimizer:
	\begin{align*}
		\nabla_{\boldsymbol{\pi}_j}	\operatorname{SSE}( \hat{\boldsymbol{\delta}} )
		&	=
		-2
		\mathbf{B}
		\Theta_j^\top
		\hat{\boldsymbol{\rho}}
		+
		2	(\hat{\boldsymbol{\rho}}^\top \hat{\boldsymbol{\rho}}) 
		\mathbf{B} 	\hat{\boldsymbol{\pi}}_j
		=
		0,~
		j = 1, \ldots, K,
		\\
		\nabla_{\boldsymbol{\chi} }	\operatorname{SSE}( \hat{\boldsymbol{\delta}} )
		&=
		-2\mathbf{Z}^\top  \hat{\boldsymbol{\rho}}
		+
		2 (\hat{\boldsymbol{\rho}}^\top \hat{\boldsymbol{\rho}}) \hat{\boldsymbol{\chi} }
		=
		0,
	\end{align*}
	 
	providing the following closed-form solution:
 
	\begin{equation*}
		\hat{\boldsymbol{\pi}}_j\\
		=
		\bigl\{
		(\hat{\boldsymbol{\rho}}^\top \hat{\boldsymbol{\rho}})  \mathbf{B}
		\bigr\}^{-1}
		\mathbf{B}
		\Theta_j^\top
		\hat{\boldsymbol{\rho}}
		=
		\frac{1}{\hat{\boldsymbol{\rho}}^\top \hat{\boldsymbol{\rho}}} \Theta_j^\top
		\hat{\boldsymbol{\rho}}
		~
		j = 1, \ldots, K,
		%%%%%%%%%%%%%%%%%%%%%%%%
		\quad
		\hat{\boldsymbol{\chi} }
		=
		\frac{1}{(\hat{\boldsymbol{\rho}}^\top \hat{\boldsymbol{\rho}})}
		\mathbf{Z}^\top  \hat{\boldsymbol{\rho}}.
	\end{equation*}
 
	Expanding with respect to these coefficients, we obtain
	\begin{equation*}
		\delta^{[l]}
		= 
		\frac{1}{\| 	\hat{\boldsymbol{\rho}}^{[l]}\|_2^2}
		\sum_{i=1}^n 
		\widehat{\rho}_i^{[l]}
		\widetilde{W}_i^{[l]}.
	\end{equation*}
	On the other hand since $\mathbf{y}^{[l]}$ and  
	$\hat{\boldsymbol{\rho}}^{[l]}$ are zero-mean,
	it is well known that the least square estimate of linear regression coefficient is computed as $\hat{\nu}^{[l]} =  
	\frac{
		\mathbf{y}^{[l] \top }  \hat{\boldsymbol{\rho}}^{[l]}
	}{
		\| \hat{\boldsymbol{\rho}}^{[l]} \|_2^2}$.
	This completes the proof of Lemma \ref{proposition:closed_form_orthgonalization}.
	
	\section{Proof of Lemma \ref{lemma:recursive}}\label{section:proof:lemma:recursive}
	\begin{proof}We prove the lemma using mathematical induction.
		
		\medskip
		\noindent
		\textit{Base case ($l=1$).}~
		For $l=1$, the lemma states $\widehat{\rho}_i^{[1]} = \langle W_i, \widehat{\iota}^{[1]} \rangle_{\mathbb{H}}$. Since  $\widehat{\iota}^{[1]} := \widehat{\xi}^{[1]}$ and $W^{[1]}_i = W_i$, the base case holds.
		
		\medskip
		\noindent
		\textit{Inductive steps.}~		
		Assume   $\widehat{\rho}_i^{[u]} = \langle W_i, \widehat{\iota}^{[u]} \rangle_{\mathbb{H}}$ for $u =1 , \ldots, l$. We want to show that $\widehat{\rho}_i^{[l+1]} = \langle W_i, \widehat{\iota}^{[l+1]} \rangle_{\mathbb{H}}$.
		Recall from  \eqref{def:plsscore} and Lemma \ref{proposition:closed_form_orthgonalization} that we have
		\begin{equation}
			\widehat{\rho}_i^{[l+1]} = \langle \widetilde{W}_i^{[l+1]}, \widehat{\xi}^{[l+1]} \rangle_{\mathbb{H}},
			\quad
			\widetilde{W}_i^{[l+1]} = \widetilde{W}_i^{[l]} - \widehat{\rho}_i^{[l]} \widehat{\delta}^{[l]}
		\end{equation}
		Combining these two, we have
		\begin{equation}\label{rholp1}
			\widehat{\rho}_i^{[l+1]} = \langle \widetilde{W}_i^{[l]} - \widehat{\rho}_i^{[l]} \widehat{\delta}^{[l]}, \widehat{\xi}^{[l+1]} \rangle_{\mathbb{H}} 
			=
			\langle \widetilde{W}_i^{[l]}, \widehat{\xi}^{[l+1]} \rangle_{\mathbb{H}} - \widehat{\rho}_i^{[l]} \langle \widehat{\delta}^{[l]}, \widehat{\xi}^{[l+1]} \rangle_{\mathbb{H}}.
		\end{equation}
		Next,
		we can express $\widetilde{W}_i^{[l]}$ in terms of the original predictor $W_i$ by recursively applying  Lemma \ref{proposition:closed_form_orthgonalization}:
		$$ \widetilde{W}_i^{[l]} = \widetilde{W}_i^{[l-1]} - \widehat{\rho}_i^{[l-1]} \widehat{\delta}^{[l-1]} = \ldots = W_i - \sum_{u=1}^{l-1} \widehat{\rho}_i^{[u]} \widehat{\delta}^{[u]} $$
		Substituting this into our equation \eqref{rholp1}, we have:
		\begin{align*}
			\widehat{\rho}_i^{[l+1]} 
			&= \left\langle W_i - \sum_{u=1}^{l-1} \widehat{\rho}_i^{[u]} \widehat{\delta}^{[u]}, \widehat{\xi}^{[l+1]} \right\rangle_{\mathbb{H}} - \widehat{\rho}_i^{[l]} \langle \widehat{\delta}^{[l]}, \widehat{\xi}^{[l+1]} \rangle_{\mathbb{H}} 
			\\&=
			\langle W_i, \widehat{\xi}^{[l+1]} \rangle_{\mathbb{H}} - \sum_{u=1}^{l-1} \widehat{\rho}_i^{[u]} \langle \widehat{\delta}^{[u]}, \widehat{\xi}^{[l+1]} \rangle_{\mathbb{H}} - \widehat{\rho}_i^{[l]} \langle \widehat{\delta}^{[l]}, \widehat{\xi}^{[l+1]} \rangle_{\mathbb{H}}
			\\&=
			\langle W_i, \widehat{\xi}^{[l+1]} \rangle_{\mathbb{H}} - \sum_{u=1}^{l} \widehat{\rho}_i^{[u]} \langle \widehat{\delta}^{[u]}, \widehat{\xi}^{[l+1]} \rangle_{\mathbb{H}}
			\\& \overset{(i)}{=}
			\langle W_i, \widehat{\xi}^{[l+1]} \rangle_{\mathbb{H}} - \sum_{u=1}^{l} \langle \widehat{\delta}^{[u]}, \widehat{\xi}^{[l+1]} \rangle_{\mathbb{H}} \langle W_i, \widehat{\iota}^{[u]} \rangle_{\mathbb{H}}
			%%%%%%%%%%%%%%%%%%%%%%%%
			\\& = 
			\langle W_i, \widehat{\xi}^{[l+1]} \rangle_{\mathbb{H}}
			-
			\left\langle W_i,   \sum_{u=1}^{l} \langle \widehat{\delta}^{[u]}, \widehat{\xi}^{[l+1]} \rangle_{\mathbb{H}} \widehat{\iota}^{[u]} \right\rangle_{\mathbb{H}} 
			%%%%%%%%%%%%%%%%%%%%%%%%
			\\& = 
			\left\langle W_i, \widehat{\xi}^{[l+1]} - \sum_{u=1}^{l} \langle \widehat{\delta}^{[u]}, \widehat{\xi}^{[l+1]} \rangle_{\mathbb{H}} \widehat{\iota}^{[u]} \right\rangle_{\mathbb{H}} 
			\\&= \widehat{\rho}_i^{[l+1]}.
		\end{align*}
		where step $(i)$ uses the inductive hypothesis $\langle W_i, \widehat{\iota}^{[u]} \rangle_{\mathbb{H}} = \widehat{\rho}_i^{[u]}$ for all $u \leq l$.
		By the principle of mathematical induction, the lemma holds for all $l \ge 1$. This completes the proof of Lemma \ref{lemma:recursive}.
	\end{proof}

	\section{Proof of geometric properties}\label{section:appendix:geom}

	\subsection{Proof of Proposition \ref{proposition: modified orthnormality of PLS components}}
	\label{section:proof:proposition: modified orthnormality of PLS components}
	
	\begin{proof}
		The unit norm condition is trivially met by the constraint 
		$1 =  	\hat{\boldsymbol{\xi}}_l^\top 
		\hspace{-.2em}
		(\mathbb{B} + \Lambda \mathbb{B}^{\prime \prime}) \hat{\boldsymbol{\xi}}_l$
		enforced in \eqref{eq: Regularized generalized rayleigh quotient equation}, because we have:
		\begin{align*}
			\hat{\boldsymbol{\xi}}_l^\top 
			\hspace{-.2em}
			(\mathbb{B} + \Lambda \mathbb{B}^{\prime \prime}) \hat{\boldsymbol{\xi}}_l
			%%%%%%%%%%%%%%%%%%%%%%%%%%%%
			&= \hat{\boldsymbol{\xi}}^\top_l 
			\hspace{-.2em}
			\mathbb{B}
			\hat{ \boldsymbol{\xi} }_l
			+
			\hat{ \boldsymbol{\xi} }_l^\top 
			\hspace{-.2em}
			\Lambda \mathbb{B}^{\prime \prime} \hat{ \boldsymbol{\xi} }_l
			%%%%%%%%%%%%%%%%%%%%%%%%%%%%%%%%
			\\&\overset{(i)}{=}
			\sum_{j=1}^K
			\int_0^1
			\hat{\xi}_l(t) \, \hat{\xi}_l(t)
			\, dt
			+
			\hat{\boldsymbol{\zeta}}^\top  \hat{\boldsymbol{\zeta}}
			+
			\hat{\boldsymbol{\xi}}^\top 
			\hspace{-.2em}
			\Lambda \mathbb{B}^{\prime \prime} \hat{\boldsymbol{\xi}}
			%%%%%%%%%%%%%%%%%%%%%%%%%%%%%%%%
			\\&\overset{(ii)}{=}
			\sum_{j=1}^K
			\int_0^1
			\hat{\xi}_l(t) \, \hat{\xi}_l(t)
			\, dt
			+
			\hat{\boldsymbol{\zeta}}^\top \hat{\boldsymbol{\zeta}}
			+
			\sum \limits_{j=1}^K 
			\lambda_l
			\int_0^1 \bigl\{ \hat{\xi}_l^{\prime\prime}(t) \bigr\}^2 dt
			%%%%%%%%%%%%%%%%%%%%%%%%%%%%%%%%%%
			\\&=
			\langle \widehat{\xi}_{l}, \widehat{\xi}_{l} \rangle_{\mathbb{H}, \Lambda},
			\numberthis \label{rough_norm_matrix_form}
		\end{align*}
		where step $(i)$ uses \eqref{unit_norm_as_quadform}
		and
		step $(ii)$ uses \eqref{pen_as_matrix}.
		
		Now to switch our gears toward the the orthogonality. For  $l_1 > l_2 $, we have
		\begin{align*}
			\langle \widehat{\xi}_{l_1}, \widehat{\xi}_{l_2} \rangle_{\mathbb{H}, \Lambda}
			& \overset{(i)}{=}
			\hat{\boldsymbol{\xi}}_{l_1}^\top 
			\hspace{-.2em}
			(\mathbb{B} + \Lambda \mathbb{B}^{\prime \prime}) \hat{\boldsymbol{\xi}}_{l_2}
			%%%%%%%%%%%%%%%%%%%%%%%%%%%%%%%%%%%%%%
			\\	& \overset{(ii)}{=} \frac{1}{\kappa_{l_1}} (V^{[l_1]} \widehat{\boldsymbol{\xi}}_{l_1})^\top \hat{\boldsymbol{\xi}}_{l_2}
			%%%%%%%%%%%%%%%%%%%%%%%%%%%%%%%%%%%%%%
			\\	& = \frac{1}{\kappa_{l_1}}
			\widehat{\boldsymbol{\xi}}_{l_1}^\top
			V^{[l_1]} \hat{\boldsymbol{\xi}}_{l_2} 
			%%%%%%%%%%%%%%%%%%%%%%%%%%%%%%%%%%%%%%
			\\	& \overset{(iii)}{=} \frac{1}{ n^2 \kappa_{l_1}}
			\widehat{\boldsymbol{\xi}}_{l_1}^\top
			(\mathbb{B} \Theta^{[l_1]\top} \mathbf{y})(\mathbb{B} \Theta^{[l_1]\top} \mathbf{y})^\top
			\hat{\boldsymbol{\xi}}_{l_2}
			%%%%%%%%%%%%%%%%%%%%%%%%%%%%%%%%%%%%%%
			\\	& = \frac{1}{ n^2 \kappa_{l_1}}
			\widehat{\boldsymbol{\xi}}_{l_1}^\top
			(\mathbb{B} \Theta^{[l_1]\top} \mathbf{y})
			\mathbf{y}^\top
			\Theta^{[l_1] } 
			\mathbb{B}
			\hat{\boldsymbol{\xi}}_{l_2}
			%%%%%%%%%%%%%%%%%%%%%%%%%%%%%%%%%%%%%%
			\\	& = c
			\mathbf{y}^\top
			(
			\Theta^{[l_1] }
			\mathbb{B}
			\hat{\boldsymbol{\xi}}_{l_2}
			),
		\end{align*}
		where
		step $(i)$ uses \eqref{rough_norm_matrix_form},
		step $(ii)$ uses the fact that $\hat{\boldsymbol{\xi}}_{l_1}$ is a generalized eigenvector ($\kappa_{l_1}$ denotes the corresponding generalized eigenvalue), as presented in Proposition \ref{proposition:eigen_regul},
		and step $(iii)$ uses the definition $\mathbf{V}$.
		Here, $c$ represents a scalar that condenses all multiplicative terms preceding $\mathbf{y}^\top$.
		
		Orthogonality can be demonstrated by showing that $\Theta^{[l_1] } \mathbb{B} \hat{\boldsymbol{\xi}}_{l_2} = \mathbf{0} \in \mathbb{R}^n$.
        From the notations and 
		the construction in Section \ref{sec: finite basis approximation} \eqref{def:gram_block}, the
		$i$th entry of $\Theta^{[l_1] }
		\mathbb{B}
		\hat{\boldsymbol{\xi}}_{l_2}$ 
		is
		\begin{align*}
			(\Theta^{[l_1] }
			\mathbb{B}
			\hat{\boldsymbol{\xi}}_{l_2})_i
			&=
			(\theta_{i11}^{[l_1]}, \ldots, \theta_{i1M}^{[l_1]}, \ldots, \theta_{iK1}^{[l_1]}, \ldots, \theta_{iKM}^{[l_1]}, Z_{i1}^{[l_1]}, \ldots, Z_{ip}^{[l_1]})
			\mathbb{B}
			\hat{\boldsymbol{\xi}}_{l_2}
			%%%%%%%%%%%%%%%%%%%%%%%%%%%
			\\&=
			\sum_{k=1}^K
			\boldsymbol{\theta}_{ik}^{[l_1]} \mathbb{B}
			\hat{\boldsymbol{\gamma}}_k^{[l_2]}
			+
			\mathbf{Z}_{i}^{[l_1] \top } \hat{\boldsymbol{\zeta}}^{[l_2]}
			%%%%%%%%%%%%%%%%%%%%%%%%%%%
			\\	&=
			\langle
			\widetilde{W}_i^{[l_1]}, 
			\hat{\xi}^{[l_2]}
			\rangle_{\mathbb{H}},
			\numberthis \label{thetaBzetal2i}
		\end{align*}
		where the last equality uses the computation of the $i$th entry in \eqref{score_inner_product_all_n}. 
        
From \eqref{thetaBzetal2i}, the condition $\Theta^{[l_1] } \mathbb{B} \hat{\boldsymbol{\xi}}_{l_2} = \mathbf{0}$ is satisfied if $\langle \widetilde{W}_i^{[l_1]}, \hat{\xi}^{[l_2]} \rangle_{\mathbb{H}} = 0$ for all $i=1, \ldots, n$, which means that residualization step completely eliminates the information captured by the weight direction $\hat{\xi}^{[l_2]}$. This  property is formalized in the following lemma:

\begin{lemma} \label{lemma:annihilation}
    For any iteration indices $l_1, l_2$ such that $l_1 > l_2 \geq 1$, the residualized predictor data $\widetilde{\boldsymbol{W}}^{[l_1]}$ is orthogonal to the previously estimated PLS   direction $\widehat{\xi}^{[l_2]}$ with respect to the Hilbert space inner product. That is,
    \begin{equation*}
        \langle \widetilde{W}_i^{[l_1]}, \widehat{\xi}^{[l_2]} \rangle_{\mathbb{H}} = 0 \quad \text{for}~i=1, \ldots, n.
    \end{equation*}
   
\end{lemma}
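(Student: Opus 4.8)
The plan is to fix $l_2 \geq 1$ and establish the claim by induction on $l_1$, running from the base index $l_1 = l_2 + 1$ up to the desired value. The driving mechanism is the residualization recursion of Lemma~\ref{proposition:closed_form_orthgonalization},
\[
\widetilde{W}_i^{[l+1]} = \widetilde{W}_i^{[l]} - \widehat{\rho}_i^{[l]} \widehat{\delta}^{[l]}, \qquad \widehat{\delta}^{[l]} = \frac{1}{\|\widehat{\boldsymbol{\rho}}^{[l]}\|_2^2} \sum_{j=1}^n \widehat{\rho}_j^{[l]} \widetilde{W}_j^{[l]},
\]
combined with the defining identity $\widehat{\rho}_i^{[l]} = \langle \widetilde{W}_i^{[l]}, \widehat{\xi}^{[l]} \rangle_{\mathbb{H}}$. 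Pairing the recursion with $\widehat{\xi}^{[l_2]}$ reduces every step to controlling the two scalars $\langle \widetilde{W}_i^{[l]}, \widehat{\xi}^{[l_2]} \rangle_{\mathbb{H}}$ and $\langle \widehat{\delta}^{[l]}, \widehat{\xi}^{[l_2]} \rangle_{\mathbb{H}}$.

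For the base case $l_1 = l_2 + 1$, I would expand
\[
\langle \widetilde{W}_i^{[l_2+1]}, \widehat{\xi}^{[l_2]} \rangle_{\mathbb{H}} = \langle \widetilde{W}_i^{[l_2]}, \widehat{\xi}^{[l_2]} \rangle_{\mathbb{H}} - \widehat{\rho}_i^{[l_2]} \langle \widehat{\delta}^{[l_2]}, \widehat{\xi}^{[l_2]} \rangle_{\mathbb{H}},
\]
recognize the first inner product as $\widehat{\rho}_i^{[l_2]}$, and evaluate the self-projection $\langle \widehat{\delta}^{[l_2]}, \widehat{\xi}^{[l_2]} \rangle_{\mathbb{H}}$. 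Substituting the closed form for $\widehat{\delta}^{[l_2]}$ and again using $\langle \widetilde{W}_j^{[l_2]}, \widehat{\xi}^{[l_2]} \rangle_{\mathbb{H}} = \widehat{\rho}_j^{[l_2]}$ collapses that quantity to $\|\widehat{\boldsymbol{\rho}}^{[l_2]}\|_2^2 / \|\widehat{\boldsymbol{\rho}}^{[l_2]}\|_2^2 = 1$, so the two terms cancel and the base case holds.

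For the inductive step, I would assume $\langle \widetilde{W}_j^{[l]}, \widehat{\xi}^{[l_2]} \rangle_{\mathbb{H}} = 0$ for all $j$ at some $l$ with $l_2 < l < l_1$, and pair the recursion with $\widehat{\xi}^{[l_2]}$ to obtain $\langle \widetilde{W}_i^{[l+1]}, \widehat{\xi}^{[l_2]} \rangle_{\mathbb{H}} = \langle \widetilde{W}_i^{[l]}, \widehat{\xi}^{[l_2]} \rangle_{\mathbb{H}} - \widehat{\rho}_i^{[l]} \langle \widehat{\delta}^{[l]}, \widehat{\xi}^{[l_2]} \rangle_{\mathbb{H}}$. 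The first term vanishes by hypothesis, and since $\widehat{\delta}^{[l]}$ is a fixed linear combination of the $\widetilde{W}_j^{[l]}$, the same hypothesis forces $\langle \widehat{\delta}^{[l]}, \widehat{\xi}^{[l_2]} \rangle_{\mathbb{H}} = 0$, annihilating the second term. Thus orthogonality, once attained at step $l_2+1$, is preserved by every subsequent residualization.

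The calculation is elementary, so I expect no deep obstacle; the one point requiring care is the \emph{structure} of the induction rather than any single computation. The claim is not merely that $\widehat{\xi}^{[l_2]}$ is orthogonal to the one-step residual, but that this orthogonality persists across all later iterations. The argument must therefore be phrased as propagation: the base case creates the orthogonality, and the inductive step shows that each later residualization operator---being a projection along a direction assembled from vectors already orthogonal to $\widehat{\xi}^{[l_2]}$---leaves it intact. Checking only the immediate step would leave the genuinely needed case $l_1 > l_2 + 1$ unproven, so the main care is in recognizing that the stability of orthogonality under residualization is exactly what must be carried through the induction.
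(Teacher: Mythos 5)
Your proof is correct and follows essentially the same route as the paper's: both arguments rest on the two facts that the $l_2$-th residualization annihilates the component along $\widehat{\xi}^{[l_2]}$ (because $\langle \widehat{\delta}^{[l_2]}, \widehat{\xi}^{[l_2]} \rangle_{\mathbb{H}} = 1$, so the inner products collapse to $\widehat{\rho}_i^{[l_2]} - \widehat{\rho}_i^{[l_2]}$) and that every subsequent residualization acts linearly on the vector of inner products and therefore preserves the zero. The paper merely packages this as a composition of linear operators $P^{[l_1-1]} \circ \cdots \circ P^{[l_2]}$ applied to $\widehat{\boldsymbol{\rho}}^{[l_2]}$, whereas you phrase it as an explicit induction on $l_1$; the content is identical.
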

Proof of Lemma \ref{lemma:annihilation}
is provided in Appendix \ref{section:proof:lemma:annihilation}.
Applying Lemma \ref{lemma:annihilation}
to \eqref{thetaBzetal2i}
 completes the proof of Proposition 
		\ref{proposition: modified orthnormality of PLS components}.

	\end{proof}

	\subsection{Proof of Proposition \ref{proposition: orthnormality of PLS scores}}\label{section:proof:proposition: orthnormality of PLS scores}
\begin{proof}
For $l_1 < l_2$, we have
\begin{align*}
\widehat{\boldsymbol{\rho}}^{[l_1 ] \top}
\widehat{\boldsymbol{\rho}}^{[l_2]}
			%%%%%%%%%%%%%%%%%%%%%%%%%%%%%%
			&=  
			\sum_{i=1}^n
			\widehat{ \rho }^{[l_1 ]}_i \,
			\widehat{ \rho }^{[l_2]}_i
			%%%%%%%%%%%%%%%%%%%%%%%%%%%%%%
			=  
			\sum_{i=1}^n
			\widehat{ \rho }^{[l_1 ]}_i \,
			\bigl \langle 	\widetilde{W}^{[l_2]}_i  ,
			\hat{\xi}^{[l_2]}
			\bigr \rangle_{\mathbb{H}}
			%%%%%%%%%%%%%%%%%%%%%%%%%%%%%%
=  
			\big \langle
			\sum_{i=1}^n
			\widehat{ \rho }^{[l_1 ]}_i \,
			\widetilde{W}^{[l_2]}_i  ,
			\hat{\xi}^{[l_2]}
			\big \rangle_{\mathbb{H}}.
		\end{align*}
		Therefore,
		to show that $\widehat{\boldsymbol{\rho}}^{[l_1 ] \top}
		\widehat{\boldsymbol{\rho}}^{[l_2]} = 0$,
		it suffices to show that
		$\sum_{i=1}^n
		\widehat{ \rho }^{[l_1 ]}_i \,
		\widetilde{W}^{[l_2]}_i  = 0 \in   \mathbb{H}  $,
		where this   zero element represents an ordered pair of $K$ zero functions and a zero matrix.
		Using Lemma    \ref{proposition:closed_form_orthgonalization}, notations from 
		\eqref{notation:n_predictors} and
		\eqref{notation:n_inner_products},
		and equation \eqref{regression_relation_with_notation_abuse}, we have:
		\begin{align*}
			\widetilde{\boldsymbol{W}}^{[l_1]}  &= 
			\widetilde{\boldsymbol{W}}^{[l_1-1]} - 
			\biggl(
			\frac{1}{\| 	\hat{\boldsymbol{\rho}}^{[l_1-1]}\|_2^2}
			\hat{\boldsymbol{\rho}}^{[l_1-1] }
			\hat{\boldsymbol{\rho}}^{[l_1-1] \top
			} 
			\biggr)
			\widetilde{\boldsymbol{W}}^{[l_1-1]}
			%%%%%%%%%%%%%%%%%%%%%%
			\\&=	\widetilde{\boldsymbol{W}}^{[l_1-1]} - 
			\biggl(
			\frac{1}{\| 	\hat{\boldsymbol{\rho}}^{[l_1-1]}\|_2^2}
			\langle 	
			\widetilde{\boldsymbol{W}}^{[l_1-1]}  ,
			\hat{\xi}^{[l_1-1]}
			\rangle_{\mathbb{H}}
			\hat{\boldsymbol{\rho}}^{[l_1-1] \top
			} 
			\biggr)
			\widetilde{\boldsymbol{W}}^{[l_1-1]}.
		\end{align*}
		Using this relationship and
		with some abuse of notation, we have:
		\begin{align*}
			\sum_{i=1}^n
			\widehat{ \rho }^{[l_1 ]}_i \,
			\widetilde{W}^{[l_2]}_i 
			&=
			\hat{\boldsymbol{\rho}}^{[l_2] \top}
			\widetilde{\boldsymbol{W}}^{[l_1]}
			\\  &=	
			\hat{\boldsymbol{\rho}}^{[l_2] \top}
			\widetilde{\boldsymbol{W}}^{[l_1-1]} - 
			\hat{\boldsymbol{\rho}}^{[l_2] \top}
			\biggl(
			\frac{1}{\| 	\hat{\boldsymbol{\rho}}^{[l_1-1]}\|_2^2}
			\langle 	
			\widetilde{\boldsymbol{W}}^{[l_1-1]}  ,
			\hat{\xi}^{[l_1-1]}
			\rangle_{\mathbb{H}}
			\hat{\boldsymbol{\rho}}^{[l_1-1] \top
			} 
			\biggr)
			\widetilde{\boldsymbol{W}}^{[l_1-1]}
			\\	&=	
			\hat{\boldsymbol{\rho}}^{[l_2] \top}
			\widetilde{\boldsymbol{W}}^{[l_1-1]} - 
			\frac{
				\langle 	 	\hat{\boldsymbol{\rho}}^{[l_2] \top}
				\widetilde{\boldsymbol{W}}^{[l_1-1]}  ,
				\hat{\xi}^{[l_1-1]}
				\rangle_{\mathbb{H}}
			}{\| 	\hat{\boldsymbol{\rho}}^{[l_1-1]}\|_2^2}
			%%%%%%%%%%%%%%%%%%%%%%%%%
			\bigl(
			\hat{\boldsymbol{\rho}}^{[l_1-1] \top
			} 
			\widetilde{\boldsymbol{W}}^{[l_1-1]}
			\bigr)
			\\	&=	h^{[l_1-1]} \bigl( \hat{\boldsymbol{\rho}}^{[l_2] \top}
			\widetilde{\boldsymbol{W}}^{[l_1-1]} \bigr)~(\text{say}).
		\end{align*}
		Here, the function $h^{[l_1-1]}: \tilde{ \mathbb{H} } \mapsto \widetilde{\mathbb{H}}$ maps the zero element of $\widetilde{\mathbb{H}}$ to itself, where the zero element represents an ordered pair of $K$ zero functions and a zero matrix. Using this relationship, we have
		\begin{align*}
			\hat{\boldsymbol{\rho}}^{[l_2] \top}
			\widetilde{\boldsymbol{W}}^{[l_1]}  
			%%%%%%%%%%%%%%%%
			&=	h^{[l_1-1]} \bigl( \hat{\boldsymbol{\rho}}^{[l_2] \top}
			\widetilde{\boldsymbol{W}}^{[l_1-1]} \bigr)
			%%%%%%%%%%%%%%%%
			\\&=
			h^{[l_1-1]} 
			\circ 
			h^{[l_1-2]}
			\bigl( \hat{\boldsymbol{\rho}}^{[l_2] \top}
			\widetilde{\boldsymbol{W}}^{[l_1-2]} \bigr)
			%%%%%%%%%%%%%%%%
			\\&=
			h^{[l_1-1]} 
			\circ 
			h^{[l_1-2]}
			\circ 
			\ldots
			\circ 
			h^{[l_2]}
			\bigl( \hat{\boldsymbol{\rho}}^{[l_2] \top}
			\widetilde{\boldsymbol{W}}^{[l_2]} \bigr).
		\end{align*}
		Therefore,
		to show that $\widehat{\boldsymbol{\rho}}^{[l_1 ] \top}
		\widehat{\boldsymbol{\rho}}^{[l_2]} = 0$,
		it suffices to show $h^{[l_2]}
		\bigl( \hat{\boldsymbol{\rho}}^{[l_2] \top}
		\widetilde{\boldsymbol{W}}^{[l_2]} \bigr) = 0$:
		\begin{align*}
			h^{[l_2]} \bigl( \hat{\boldsymbol{\rho}}^{[l_2] \top}
			\widetilde{\boldsymbol{W}}^{[l_2]} \bigr)
			&=
			\hat{\boldsymbol{\rho}}^{[l_2] \top}
			\widetilde{\boldsymbol{W}}^{[l_2]} - 
			\frac{
				\langle 	 	\hat{\boldsymbol{\rho}}^{[l_2] \top}
				\widetilde{\boldsymbol{W}}^{[l_2]}  ,
				\hat{\xi}^{[l_2]}
				\rangle_{\mathbb{H}}
			}{\| 	\hat{\boldsymbol{\rho}}^{[l_2]}\|_2^2}
			%%%%%%%%%%%%%%%%%%%%%%%%%
			\bigl(
			\hat{\boldsymbol{\rho}}^{[l_2] \top
			} 
			\widetilde{\boldsymbol{W}}^{[l_2]}
			\bigr)
			%%%%%%%%%%%%%%%%%%%%%%%%%%%%
			\\&=
			\hat{\boldsymbol{\rho}}^{[l_2] \top}
			\widetilde{\boldsymbol{W}}^{[l_2]} - 
			\frac{
				\hat{\boldsymbol{\rho}}^{[l_2] \top}
				\langle 	 
				\widetilde{\boldsymbol{W}}^{[l_2]}  ,
				\hat{\xi}^{[l_2]}
				\rangle_{\mathbb{H}}
			}{\| 	\hat{\boldsymbol{\rho}}^{[l_2]}\|_2^2}
			\bigl(
			\hat{\boldsymbol{\rho}}^{[l_2] \top
			} 
			\widetilde{\boldsymbol{W}}^{[l_2]}
			\bigr)
			%%%%%%%%%%%%%%%%%%%%%%%%%%%%
			\\&=
			\hat{\boldsymbol{\rho}}^{[l_2] \top}
			\widetilde{\boldsymbol{W}}^{[l_2]} - 
			\frac{
				\hat{\boldsymbol{\rho}}^{[l_2] \top}
				\hat{\boldsymbol{\rho}}^{[l_2]  } 
			}{\| 	\hat{\boldsymbol{\rho}}^{[l_2]}\|_2^2}
			\bigl(
			\hat{\boldsymbol{\rho}}^{[l_2] \top
			} 
			\widetilde{\boldsymbol{W}}^{[l_2]}
			\bigr)
			%%%%%%%%%%%%%%%%%%%%%%%%%%%%
			\\&=
			\hat{\boldsymbol{\rho}}^{[l_2] \top}
			\widetilde{\boldsymbol{W}}^{[l_2]} 
			- 
			\hat{\boldsymbol{\rho}}^{[l_2] \top
			} 
			\widetilde{\boldsymbol{W}}^{[l_2]}
			\\&=0.
		\end{align*}
		This completes the proof of Proposition \ref{proposition: orthnormality of PLS scores}.
	\end{proof}

\subsection{Proof of Lemma \ref{lemma:annihilation}}\label{section:proof:lemma:annihilation}
\begin{proof}
 Fix  $l_1 > l_2 \geq 1 $.    
 By Lemma  \ref{proposition:closed_form_orthgonalization},
 we have the following
 recursive relationship:
 \begin{equation}\label{algorithm_step:residualization_predictor:recall}
			\widetilde{W}_i^{[l_1]} 
			:= 
			\widetilde{W}_i^{[l_1-1]}  - 
			\frac{\widehat{\rho}_{i\,l_1-1}}{\| 	\hat{\boldsymbol{\rho}}^{[l_1-1]}\|_2^2}
			\sum_{i=1}^n 
			\widehat{\rho}_{i \,l_1-1}
			\widetilde{W}_i^{[l_1-1]}.
		\end{equation} 	
		Let us denote, with some abuse of notation:
		\begin{equation}\label{notation:n_predictors}
			\widetilde{\boldsymbol{W}}^{[l_1]}  := 
			\bigl(
			\widetilde{W}_1^{[l_1-1]}, \ldots, \widetilde{W}_n^{[l_1-1]}
			\bigr)^\top \in \widetilde{\mathbb{H}}^{\otimes n},
		\end{equation}
		and
		\begin{equation}\label{notation:n_inner_products}
			\langle 	\widetilde{\boldsymbol{W}}^{[l_1]}  ,
			\hat{\xi}^{[l_2]}
			\rangle_{\mathbb{H}}
			:= 
			\biggl(
			\langle 
			\widetilde{W}_1^{[l_1-1]}, 
			\hat{\xi}^{[l_2]}
			\rangle_{\mathbb{H}},
			\ldots, 
			\langle 
			\widetilde{W}_n^{[l_1-1]},  
			\hat{\xi}^{[l_2]}
			\rangle_{\mathbb{H}}
			\biggr)^\top 
			\in \mathbb{R}^n.
		\end{equation}
	
		Using these notations and  \eqref{algorithm_step:residualization_predictor:recall}, we have, with some abuse of notation,
		\begin{equation}\label{regression_relation_with_notation_abuse}
			\widetilde{\boldsymbol{W}}^{[l_1]}  = 
			\widetilde{\boldsymbol{W}}^{[l_1-1]} - 
			\biggl(
			\frac{1}{\| 	\hat{\boldsymbol{\rho}}^{[l_1-1]}\|_2^2}
			\hat{\boldsymbol{\rho}}^{[l_1-1] }
			\hat{\boldsymbol{\rho}}^{[l_1-1] \top
			} 
			\biggr)
			\widetilde{\boldsymbol{W}}^{[l_1-1]},
		\end{equation}
		and thus
		\begin{equation*}
			\langle 	\widetilde{\boldsymbol{W}}^{[l_1]}  ,
			\hat{\xi}^{[l_2]}
			\rangle_{\mathbb{H}}
			= 
			\langle 	\widetilde{\boldsymbol{W}}^{[l_1-1]}  ,
			\hat{\xi}^{[l_2]}
			\rangle_{\mathbb{H}} 
			- 
			\biggl(
			\frac{1
			}{
				\| 	\hat{\boldsymbol{\rho}}^{[l_1-1]}\|_2^2
			}
			\hat{\boldsymbol{\rho}}^{[l_1-1]}
			\hat{\boldsymbol{\rho}}^{[l_1-1] \top}
			\biggr)
			\bigl \langle 
			\widetilde{\boldsymbol{W}}^{[l_1-1]}  ,
			\hat{\xi}^{[l_2]}
			\bigr \rangle_{\mathbb{H}}.
		\end{equation*}
		Clearly, the right-hand side represents a linear operator from $\mathbb{R}^n$ to $\mathbb{R}^n$, which we denote as $P^{[l_1-1]}$.
		Thus, 
		we have $\langle 	\widetilde{\boldsymbol{W}}^{[l_1]}  ,
		\hat{\xi}^{[l_2]}
		\rangle_{\mathbb{H}}  = 
		P^{[l_1-1]} \langle 	\widetilde{\boldsymbol{W}}^{[l_1-1]}  ,
		\hat{\xi}^{[l_2]}
		\rangle_{\mathbb{H}}$.
		Repeatedly using this relationship, we have
		\begin{align*}
			\langle 	\widetilde{\boldsymbol{W}}^{[l_1]}  ,
			\hat{\xi}^{[l_2]}
			\rangle_{\mathbb{H}} 
			&=
			P^{[l_1-1]} \langle 	\widetilde{\boldsymbol{W}}^{[l_1-1]}  ,
			\hat{\xi}^{[l_2]}
			\rangle_{\mathbb{H}} 
			\\
			&=
			P^{[l_1-1]} P^{[l_1-2]}   \langle 	\widetilde{\boldsymbol{W}}^{[l_1-2]}  ,
			\hat{\xi}^{[l_2]}
			\rangle_{\mathbb{H}}
			\\
			&= \underbrace{ P^{[l_1-1]} P^{[l_1-2]} \ldots P^{[l_2 + 1]}}_{:=P}  P^{[l_2]} \langle 	\widetilde{\boldsymbol{W}}^{[l_2]}  ,
			\hat{\xi}^{[l_2]}
			\rangle_{\mathbb{H}}
			\\
			&= P \,\biggl\{
			\langle 	\widetilde{\boldsymbol{W}}^{[l_2]}  ,
			\hat{\xi}^{[l_2]}
			\rangle_{\mathbb{H}} 
			- 
			\biggl(
			\frac{1
			}{
				\| 	\hat{\boldsymbol{\rho}}^{[l_2]}\|_2^2
			}
			\hat{\boldsymbol{\rho}}^{[l_2]}
			\hat{\boldsymbol{\rho}}^{[l_2] \top}
			\biggr)
			\bigl \langle 
			\widetilde{\boldsymbol{W}}^{[l_2]}  ,
			\hat{\xi}^{[l_2]}
			\bigr \rangle_{\mathbb{H}}
			\biggr\}
			\\
			&= P \,\biggl\{
			\hat{\boldsymbol{\rho}}^{[l_2]}
			- 
			\biggl(
			\frac{1
			}{
				\| 	\hat{\boldsymbol{\rho}}^{[l_1-1]}\|_2^2
			}
			\hat{\boldsymbol{\rho}}^{[l_2]}
			\hat{\boldsymbol{\rho}}^{[l_2] \top}
			\biggr)
			\hat{\boldsymbol{\rho}}^{[l_2]}
			\biggr\}
			%%%%%%%%%%%%%%%%%%%%%%%%%%%%%%%%%%
			\\&= P \,\biggl\{
			\hat{\boldsymbol{\rho}}^{[l_2]}
			- 
			\biggl(
			\frac{\hat{\boldsymbol{\rho}}^{[l_2] \top} \hat{\boldsymbol{\rho}}^{[l_2]}
			}{
				\| 	\hat{\boldsymbol{\rho}}^{[l_1-1]}\|_2^2
			}
			\biggr)
			\hat{\boldsymbol{\rho}}^{[l_2]}
			\biggr\}
			%%%%%%%%%%%%%%%%%%%%%%%%%%%%%%%%%%
			\\&= P \,\biggl\{
			\hat{\boldsymbol{\rho}}^{[l_2]}
			- 
			\hat{\boldsymbol{\rho}}^{[l_2]}
			\biggr\}
			=
			P \mathbf{0} = \mathbf{0}.
		\end{align*}
 \end{proof}

\end{appendix}

%% ------------------------------------------------------------------
%% REFERENCES
%% ------------------------------------------------------------------
% \bibliography{references} % Uncomment this line and use your .bib file

% For this template example, we use the 'filecontents' environment 
% to generate a bib file on the fly. 
% You should remove this and use your own .bib file.

\end{document}